\documentclass[12pt,reqno]{amsart}
\usepackage[pdfborder={0 0 0.5 [3 2]}, plainpages=false]{hyperref}%
\usepackage[left=1in,right=1in,top=1in,bottom=1in]{geometry}%
\usepackage[citation-order]{amsrefs}%
\usepackage{amsmath}
\usepackage{enumerate}
\usepackage{amssymb}                
\usepackage{amsfonts}
\usepackage{amsthm}
\usepackage{bbm}
\usepackage[table,xcdraw]{xcolor}
\usepackage{float}
\usepackage{mathtools}
\usepackage{cool}
\usepackage{booktabs}
\usepackage{graphicx,epsfig}

\usepackage{subcaption}
\captionsetup{font={small},skip=0.25\baselineskip}
\captionsetup{justification=raggedright}
\captionsetup[subfigure]{font={small}, skip=1pt, singlelinecheck=false}
\captionsetup[subfigure]{labelfont=rm}

\usepackage[capitalize,nameinlink]{cleveref}
\crefname{section}{section}{sections}
\crefname{subsection}{subsection}{subsections}
\Crefname{section}{Section}{Sections}
\Crefname{subsection}{Subsection}{Subsections}

\Crefname{figure}{Figure}{Figures}

\crefformat{equation}{\textup{#2(#1)#3}}
\crefrangeformat{equation}{\textup{#3(#1)#4--#5(#2)#6}}
\crefmultiformat{equation}{\textup{#2(#1)#3}}{ and \textup{#2(#1)#3}}
{, \textup{#2(#1)#3}}{, and \textup{#2(#1)#3}}
\crefrangemultiformat{equation}{\textup{#3(#1)#4--#5(#2)#6}}%
{ and \textup{#3(#1)#4--#5(#2)#6}}{, \textup{#3(#1)#4--#5(#2)#6}}{, and \textup{#3(#1)#4--#5(#2)#6}}

\Crefformat{equation}{#2Equation~\textup{(#1)}#3}
\Crefrangeformat{equation}{Equations~\textup{#3(#1)#4--#5(#2)#6}}
\Crefmultiformat{equation}{Equations~\textup{#2(#1)#3}}{ and \textup{#2(#1)#3}}
{, \textup{#2(#1)#3}}{, and \textup{#2(#1)#3}}
\Crefrangemultiformat{equation}{Equations~\textup{#3(#1)#4--#5(#2)#6}}%
{ and \textup{#3(#1)#4--#5(#2)#6}}{, \textup{#3(#1)#4--#5(#2)#6}}{, and \textup{#3(#1)#4--#5(#2)#6}}

\crefdefaultlabelformat{#2\textup{#1}#3}

\hypersetup{hidelinks}

\def\R{{\mathbb R}}

\def\C{{\mathbb C}}
\def\Z{{\mathbb Z}}

\def\per{\textrm{per}}
\def\calL{\mathcal{L}}
\def\calH{\mathcal{H}}

\def\calM{\mathcal{M}}

\newcommand{\evec}{\mathbf{e}}
\newcommand{\uvec}{\mathbf{u}}
\newcommand{\vvec}{\mathbf{v}}
\newcommand{\wvec}{\mathbf{w}}

\newcommand{\yvec}{\mathbf{y}}

\DeclareMathOperator{\spn}{span}

\graphicspath{ {images/} }

\newtheorem{lemma}{Lemma}
\newtheorem{theorem}{Theorem}
\newtheorem{corollary}{Corollary}

\newtheorem{hypothesis}{Hypothesis}

\theoremstyle{definition}

\newtheorem{remark}{Remark}

\begin{document}

\title[Revisiting Multi-breathers in the discrete Klein-Gordon equation]{Revisiting Multi-breathers in the discrete Klein-Gordon equation: A Spatial Dynamics Approach}

\author{Ross Parker}
\address{Department of Mathematics, Southern Methodist University, 
Dallas, TX 75275, USA}
\email{rhparker@smu.edu}

\author{Jes\'us Cuevas-Maraver}
\address{Grupo de F\'{\i}sica No Lineal, Departamento de F\'{\i}sica Aplicada I,
Universidad de Sevilla. Escuela Polit\'{e}cnica Superior, C/ Virgen de Africa, 7, 41011-Sevilla, Spain}
\address{Instituto de Matem\'{a}ticas de la Universidad de Sevilla (IMUS). Edificio
Celestino Mutis. Avda. Reina Mercedes s/n, 41012-Sevilla, Spain}
\email{jcuevas@us.es}

\author{P.\,G. Kevrekidis} 
\address{Department of Mathematics and Statistics, University of Massachusetts, Amherst MA 01003, USA}
\email{kevrekid@math.umass.edu}

\author{Alejandro Aceves}
\address{Department of Mathematics, Southern Methodist University, 
Dallas, TX 75275, USA}
\email{aaceves@smu.edu}

\begin{abstract}
	We consider the existence and spectral stability of multi-breather structures in the discrete Klein-Gordon equation, both for soft and hard symmetric potentials. To obtain analytical results, we project the system onto a finite-dimensional Hilbert space consisting of the first $M$ Fourier modes, for arbitrary $M$. On this approximate system, we then take a spatial dynamics approach and use Lin's method to construct multi-breathers from a sequence of well-separated copies of the primary, single-site breather. We then locate the eigenmodes in the Floquet spectrum associated with the interaction between the individual breathers of such multi-breather	states by reducing the spectral problem to a matrix equation. Expressions for these eigenmodes for the approximate, finite-dimensional system are obtained in terms of the primary breather and its kernel eigenfunctions, and these are found to be in very good agreement with the numerical Floquet spectrum results. This is supplemented with results from numerical timestepping experiments, which are interpreted using the spectral computations.
\end{abstract}

\maketitle

\section{Introduction}

Dynamical models on a one-dimensional lattice have been a topic of interest for well over 50 years. Perhaps the most famous example is the Fermi–Pasta–Ulam–Tsingou (FPUT) model \cites{FPUT,Zabusuy1965}, which was one of the first problems to be studied using numerical simulations. In this work, we will examine the discrete Klein-Gordon (DKG) equation~\cites{braun2004,SGbook,p4book,kivsharmalomed}
\begin{equation}\label{eq:introDKG}
\ddot{u}_n = d (\Delta_2 u)_n - f(u_n),
\end{equation}
which describes the dynamics of an infinitely long, one-dimensional lattice of particles. The quantity $u_n$ represents the displacement of the $n$th particle in the integer lattice as a function of the time $t$. Each particle is harmonically coupled to its two nearest neighbors via the discrete second difference operator $\Delta_2$, and the strength of this coupling is quantified by the parameter $d$. The particles are subject to an external, nonlinear, on-site potential $V(u)$, such that $f(u) = V'(u)$, which can be of different
type depending on the model~\cites{Karachalios,imamat}. Common nonlinearities are shown in \cref{table:V}. (See \cref{sec:DKGbreather} for the definition of hard and soft potentials).

\begin{table}
\begin{tabular}{lll}\toprule
Equation & $V(u)$ & $f(u)$ \\ \midrule
sine-Gordon & $1 - \cos u$ & $\sin u$ \\
$\phi^4$ (soft) & $\frac{1}{2}u^2 - \frac{1}{4}u^4$ & $u(1-u^2)$ \\
$\phi^4$ (hard) & $\frac{1}{2}u^2 + \frac{1}{4}u^4$ & $u(1+u^2)$ \\
Morse & $\frac{1}{2}(1 - e^{-u})^2$ & $e^{-u}(1 - e^{-u})$ \\ \bottomrule
\end{tabular}
\caption{Common nonlinearities for the discrete Klein-Gordon equation.}
\label{table:V}
\end{table}

The DKG equation is the discrete analogue of the nonlinear Klein-Gordon partial differential equation
\begin{equation*}
u_{tt} = u_{xx} - f(u),
\end{equation*}
which is a prototypical model in the study of nonlinear waves and solitons. One of the the most well-studied forms of this equation is the sine-Gordon equation \cites{braun2004,SGbook,p4book,kivsharmalomed}, which has periodic nonlinearity $f(u)=\sin(u)$, and is completely integrable. The transition between the continuous and discrete models in the sine-Gordon case has been extensively discussed in \cite{SGchapter}. The discrete sine-Gordon model, also known as the Frenkel-Kontorova model, was originally devised as a model for describing the dynamics of a crystal lattice near a dislocation core \cites{braun1998,braun2004}, and has been  
used since its inception in numerous additional applications (see, for example, \cite{braun2004}*{Chapter 2}), including a mechanical model for a chain of pendula \cites{Scott1969,english}, arrays of Josephson junctions \cites{Ustinov1992,Floria1998}, and DNA dynamics \cites{Yomosa1983,Yakushevich1998,DeLeo2011}.

Two major classes of coherent structures in the nonlinear Klein-Gordon equation (both discrete and continuous) are of particular interest: kinks, which are heteroclinic structures resembling ``wave fronts'' that connect two adjacent minima of the potential $V(u)$, and breathers, which are structures that are spatially localized and oscillatory in time. For the continuum sine-Gordon equation, exact, analytical solutions for both of these structures have been found \cite{SolitonBook1}. 
For discrete systems, the existence and stability of static kinks have been well-studied (see \cites{PEYRARD198488,KevrekidisWeinstein2000,SGchapter}, as well as \cite{Parker2021} for results on multi-kink solutions), and there has also been interest in moving kinks \cites{Aigner2003,Iooss2006,Cisneros2008,Cisneros2011}. We will concern ourselves herein with discrete breather solutions.

Discrete breathers have been studied in Hamiltonian \cite{Flach1998} and dissipative systems \cite{Flach2008a}, and have applications in areas such as laser scanning microscopy and coupled optical waveguides \cites{Flach2008,LEDERER20081}. Existence and stability of discrete breather solutions in Klein-Gordon lattices were first studied by MacKay and Aubry \cites{MacKay1994,Aubry1997} by considering the system near the anti-continuum (AC) limit ($d=0$), in which the individual sites in the lattice are uncoupled. The advantage of this approach is that the solution for a single site is known at the AC limit, and this can be continued to small $d>0$ using the implicit function theorem. Some results about asymptotic stability of these breathers can be found in \cite{Bambusi2013}. A similar approach has been used for multi-site solitons in the discrete nonlinear Schr\"odinger equation (DNLS) \cites{Pelinovsky2005,KALOSAKAS200644}, which demonstrated that the only 
potentially stable multi-solitons are those in which adjacent peaks are excited out-of-phase. Analysis of multi-breathers in DKG, in which a finite number of sites in the integer lattice are excited at the AC limit, was done in \cites{Archilla2003,Koukouloyannis2009}, but this was restricted to the case where the excited sites are adjacent, so that a concrete result could be deduced for the dynamics of the relative phases between oscillators at adjacent sites.
Indeed, the two methods (the Aubry band method~\cite{Aubry1997} and the MacKay
effective Hamiltonian method~\cite{sepulchre}) explored in the above two publications
were shown to yield the same results near this limit 
in~\cite{doi:10.1142/S0218127411029690}.
In this situation, for small $d>0$, out-of-phase multi-breathers are stable for soft potentials, and in-phase multi-breathers are stable for hard potentials~\cite{Archilla2003}*{Theorem 6}. Results on the existence and spectral stability of multi-breathers were extended in \cite{Pelinovsky2012} to multi-site breathers where any arbitrary, finite set of lattice points is excited at the AC limit; in particular, this allows the excited sites to be separated in the lattice, i.e., there can be ``holes'' in the lattice. However, spectral stability of multi-breathers does not necessarily imply nonlinear stability; even if the multi-breather is spectrally stable, nonlinear instabilities can result from resonance between internal eigenmodes and the continuous spectrum band, as long as the two have opposite Krein signatures \cite{cuevas-maraver2016}
(see also the simpler example of the discrete
nonlinear Schr{\"o}dinger model in~\cite{PhysRevLett.114.214101}). A recent result uses the Schauder fixed point theorem to prove the existence of discrete Klein-Gordon breathers in the setting of a convex on-site potential \cite{hennig2021}. 
Other related work concerns breather solutions in infinite Fermi-Pasta-Ulam-Tsingou (FPUT) lattices, in which the nonlinear potential involves inter-site terms (see \cite{FPUbook} and the references therein). Results on the existence of time periodic solutions in FPUT lattices can be found in \cite{FPUbook}*{Chapter 2}, as well as the more recent work in \cites{Arioli2019,yoshimura2021}.

In this paper, we take a different approach to multi-breathers. We start with a single-site breather, which we call the primary breather. We take the existence of such a breather for a particular $d>0$ as a hypothesis. We then construct a multi-breather by joining together multiple, well-separated copies of this primary breather. Consecutive copies of the primary breather can be either in-phase or out-of-phase. In effect, we replace the condition that the coupling parameter $d>0$ is small with the condition that the individual copies of the primary breather are well separated. 
The construction of complex coherent structures from simple building blocks has a rich mathematical history (see \cite{Sandstede1998}, and the references therein). The mathematical technique that we will use to accomplish this is an implementation of the Lyapunov-Schmidt reduction known as Lin's method. This method has been successfully employed in many systems, including semilinear parabolic PDEs \cites{Sandstede1998,doi:10.1137/0150029} and discrete dynamical systems \cite{Knobloch2000}. We used a similar method to construct multi-solitons in DNLS \cite{Parker2020} and multi-kinks in DKG \cite{Parker2021}. 

Heuristically, we use Lin's method to construct a double breather from the primary breather as follows (see \cite{Parker2020}*{Section 1} for a similar construction of double pulses in DNLS). Let $q_n(t)$ be the primary breather solution to \cref{eq:introDKG} with period $T$, i.e. $q_n(t+T) = q_n(t)$ for all integers $n$. Then a $T$-periodic solution $u_n(t)$ is a double breather if there is an integer $N\gg 1$ such that
\[
\sup_{n\leq 0,\:t\in[0,T]} |u_n(t) - q_{n+N}(t)| + \sup_{n\geq0,\:t\in[0,T]} |u_n(t) - q_{n-N}(t)|
\]
is small, i.e. $u_n(t)$ resembles (to leading order) the sum of two copies of the primary breather $q_n(t)$ translated by $N$ units to the left and to the right, respectively. We note that $u_n(t)$ must be close to the sum of the two translates of the primary breather for all $t \in [0,T]$. See the top panel of \cref{fig:lin} for an illustration. The double breather (\cref{fig:linb}) resembles two sequential copies of the primary breather (\cref{fig:lina}). Importantly, however, it is not identical to two adjoined translates of the primary breather; compare the center node $(n=0)$ of the double breather in \cref{fig:linb} to the corresponding node $(n=2)$ of the primary breather in \cref{fig:lina}. For each $N\gg1$, Lin's method yields a piecewise double breather that comprises three pieces
\begin{equation}\label{eq:piecewisebreather}
\begin{cases}
u_n^1(t) & n \in (-\infty, -N] \\
u_n^2(t) & n \in [-N, N] \\
u_n^3(t) & n \in [N, \infty),
\end{cases}
\end{equation}
as shown in Figure~\ref{fig:lin}. This piecewise function will be a genuine double breather if and only if these pieces coincide at $n=\pm N$, i.e. $u_{-N}^1(t) = u_{-N}^2(t)$ and $u_{N}^2(t) = u_{N}^3(t)$ for $t \in [0,T]$ (see schematic in \cref{fig:linc}). This approach reduces the existence problem for double breathers to solving these two jump conditions. 

\begin{figure}
	\begin{center}
	\begin{subfigure}{0.45\linewidth}
		\caption{}
		\includegraphics[width=7.5cm]{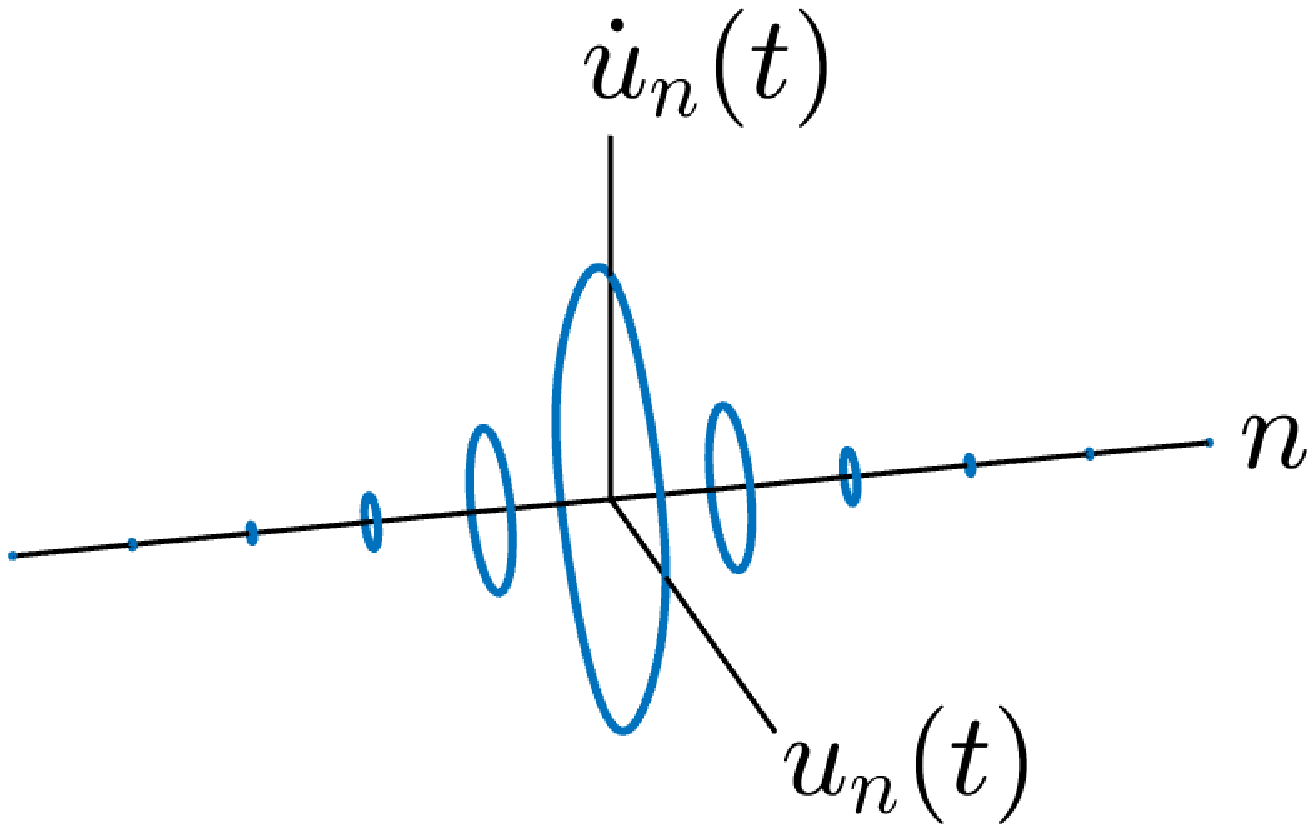}
		\label{fig:lina}
	\end{subfigure}
	\begin{subfigure}{0.45\linewidth}
		\caption{}
		\includegraphics[width=7.5cm]{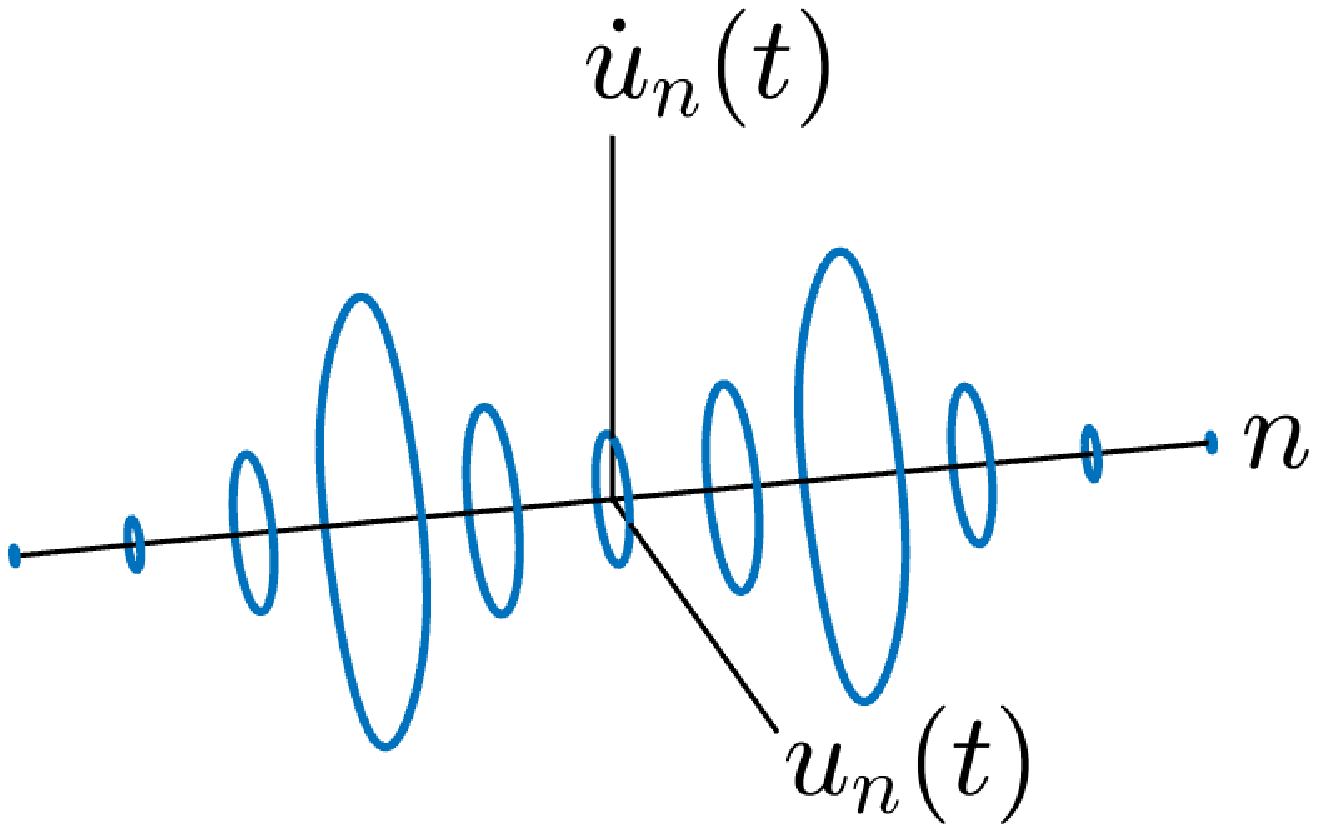}
		\label{fig:linb}
	\end{subfigure}
	\begin{subfigure}{0.9\linewidth}
		\caption{}
		\hspace{1cm}
		\includegraphics[width=13cm]{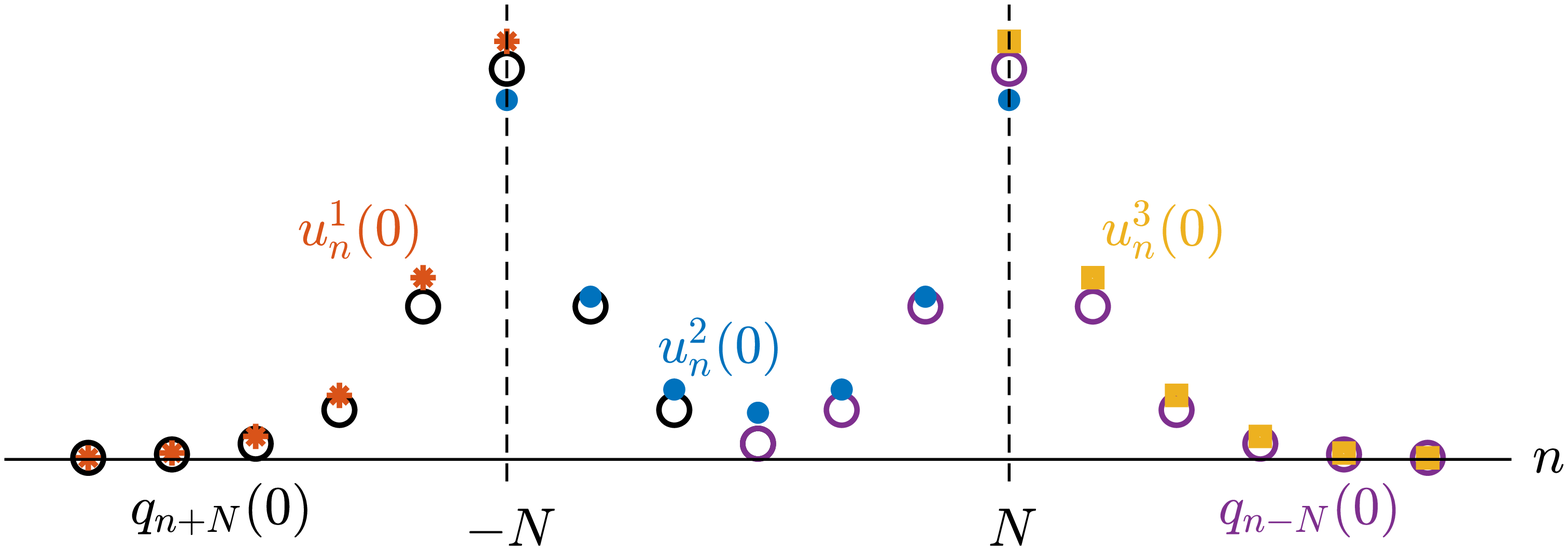}
		\label{fig:linc}
	\end{subfigure}
	\end{center}
	\caption{
	Top panel plots $\dot{u}_n(t)$ vs. $u_n(t)$ at each lattice site $n$ for a primary breather (a) and an in-phase double breather (b) for discrete sine-Gordon. Lattice site $n=0$ is located where the axes meet.
	Period $T=9.8944$, coupling parameter $d=0.25$. (c) is a schematic showing the piecewise construction of a double breather \cref{eq:piecewisebreather} with Lin's method, illustrated at $t=0$. The piecewise double breather $u_n$ comprises three pieces $u_n^1$, $u_n^2$, and $u_n^3$ (red stars, blue filled circles, and yellow squares, from left to right). Two copies of the primary breather $q_n$ are placed at $n=-N$ and $n=N$, respectively (black open circles and purple open circles, from left to right). This is a genuine double breather if and only if the two jumps at $n=\pm N$ are zero for all $t \in [0,T].$}
	\label{fig:lin}
\end{figure}

Lin's method can also be used to determine spectral stability. For a multi-pulse, multi-kink, or multi-breather, there are specific elements in the point spectrum, which we will hereafter term interaction eigenmodes, that result from nonlinear interactions between neighboring copies of the primary coherent structure. For breathers, which are periodic in time, these interaction eigenmodes assume the form of Floquet multipliers or Floquet exponents.
(We note that these are called internal modes in \cite{cuevas-maraver2016}; since we find from numerical simulations that other internal modes split off from the continuous spectrum bands as the coupling parameter $d$ is increased, we use the term interaction eigenmodes to refer to this specific set of internal modes).
In a similar fashion to how we constructed the multi-breather,
we can use Lin's method to find these interaction eigenmodes by constructing the corresponding eigenfunction. To leading order, this eigenfunction is a piecewise linear combination of translates of the kernel eigenfunction associated with the primary breather.
This reduces the spectral problem to a finite-dimensional matrix eigenvalue problem, which can then be solved. While the matrix obtained this way has the same form as that in \cites{Pelinovsky2012,cuevas-maraver2016}, its elements are computed using the primary breather solution for a specific $d$ rather than at the AC limit, thus the interaction eigenmodes can be computed more accurately for a greater range of $d$, i.e. further from the AC limit.

For the DKG equation, we follow a similar approach to that which we used in \cite{Parker2020}. We reformulate the equation using a spatial dynamics approach by recasting it as a lattice dynamical system on an appropriate function space, and then we use Lin's method to both construct multi-breathers and determine the interaction eigenmodes associated with these solutions. The major limitation with this method, however, is that for a periodic breather on $[0,T]$, the most appropriate function space is $C^\infty_\per([0,T])$, which is not a closed subspace of $L^2_\per([0,T])$. It is not straightforward to adapt the necessary mathematical tools, such as the stable manifold theorem and exponential dichotomies, to this space. As an alternative, since we are interested in smooth solutions, we project the system onto the finite-dimensional subspace $X_M$ of $L^2_\per([0,T])$ consisting of the first $M$ Fourier modes of the standard orthonormal basis, and consider the problem restricted to this finite-dimensional Hilbert space.

Although this means that the results we obtain are only approximate, we believe this is a reasonable approximation, since the Fourier coefficients for smooth functions decay exponentially, and this method is valid for arbitrary, finite $M$ (see \cref{fig:freqdecay} below).
That being said, it is not easy to quantify how good this approximation actually is. For example, as in many other instances, if resonances in the system occur, they may be affected by this truncation.
Since numerical discretization of periodic solutions is often done using a Fourier spectral method, this approximation applies directly to this discretization. Finally, the results from this method are in very good agreement with the results obtained by directly computing the Floquet spectrum of the corresponding multi-breather, both for the soft sine-Gordon and the hard $\phi^4$ potentials. In addition, the spectral pattern agrees qualitatively with that found using the approach in \cites{Pelinovsky2012,cuevas-maraver2016} for small $d$, and the matrix reduction found by using Lin's method and a spatial dynamics approach has the same form as that in those references.
For both soft and hard potentials, the eigenvalue pattern is determined by the phase differences (in-phase vs. out-of-phase) between adjacent copies of the primary breather. For the hard potential, it also depends on whether the distances between copies of the primary breather are an even or odd number of lattice points.

This paper is organized as follows. In \cref{sec:bg}, we present the mathematical background for the discrete Klein-Gordon equation, including breather solutions, the continuous spectrum, and the reformulation using spatial dynamics. In \cref{sec:findim}, we formulate our finite-dimensional approximation, which we then use in \cref{sec:multi} to prove results about the existence and spectrum of multi-breathers in the approximate system. The proof of the spectral results is deferred to \cref{app:specproof}. Numerical results are presented in \cref{sec:numerics}, which are in very good agreement with the predictions from the main theorems. Results are presented for both the soft sine-Gordon potential and the hard $\phi^4$ potential. In addition, we present results from timestepping simulations to illustrate the effect of the spectrum on the dynamical evolution of the system. We end with a brief concluding section, which suggests avenues for future research.

\section{Mathematical background}\label{sec:bg}

We will consider the discrete Klein-Gordon (DKG) equation with on-site nonlinearity $f(u)$
\begin{equation}\label{eq:DKG}
\ddot{u}_n = d (\Delta_2 u)_n - f(u_n),
\end{equation}
on the integer lattice $\Z$, where $t \in \R$ is the evolution time, $u_n(t) \in \R$ is the displacement of the $n$th particle in the lattice, $(\Delta_2 u)_n = u_{n+1} - 2 u_n + u_{n-1}$ is the discrete second difference operator, and $f(u) = V'(u)$ for a smooth, on-site potential function $V(u)$. We use the following assumptions for the potential $V$:
\begin{enumerate}[(i)\leftmargin=\parindent]
\item $V$ is an even function, and $V(0) = 0$. This implies $V'(0) = 0$.
\item $V''(0)>0$.
\end{enumerate}
We note that the Morse potential, which is considered in \cite{cuevas-maraver2016}, does not satisfy the first assumption, since it is not an even function. For any time $t$, we take the displacements $\{u_n(t)\}_{n \in \Z} \in \ell^2(\Z)$, and we denote this sequence by $\uvec(t)$. Existence and uniqueness of solutions to \cref{eq:DKG} is discussed in \cite{cuevas-maraver2016}. Since $f(u)$ is an odd function, if $\uvec(t)$ is a solution to \cref{eq:DKG}, then $-\uvec(t)$ is as well. Equation \cref{eq:DKG} is Hamiltonian \cites{KevrekidisWeinstein2000,cuevas-maraver2016}, and can be written as
\begin{equation}\label{eq:Hform}
\frac{d}{dt}\begin{pmatrix} u_n \\ v_n \end{pmatrix} = 
\begin{pmatrix} 0 & 1 \\ -1 & 0 \end{pmatrix}\begin{pmatrix} \partial \calH / \partial u_n \\ \partial \calH / \partial v_n \end{pmatrix},
\end{equation}
where $v_n = \dot{u}_n$ is the velocity of the $n$th particle in the lattice, and $\calH$ is the conserved energy
\begin{equation}\label{eq:H}
	\calH(\uvec) = \sum_{n=-\infty}^\infty 
	\left( \frac{1}{2} v_n^2 + \frac{d}{2} (u_{n+1} - u_n)^2 + V(u_n) \right).
\end{equation}

\subsection{Breathers}\label{sec:DKGbreather}

We are interested in breather solutions to \cref{eq:DKG}, which are periodic in time and exponentially localized in their spatial
profile over the lattice. Specifically, a breather solution is a function $\uvec \in \ell^2(\Z, H^2_\per[0,T])$, where $H^2_\per[0,T]$ is the Hilbert-Sobolev space of periodic, real-valued functions on $[0,T]$. 
The fundamental period $T$ is the smallest positive real number for which $\uvec(t+T) = \uvec(t)$ for all $t$. At the AC limit ($d = 0$), the individual sites in the lattice are decoupled. At each site, $u_n(t)$ is a $T$-periodic solution to the nonlinear oscillator equation
\begin{equation}\label{eq:singlesiteAC}
\ddot{\phi} + V'(\phi) = 0,
\end{equation}
which has conserved energy $E = \frac{1}{2}\dot{\phi}^2 + V(\phi)$. For fixed energy $E$, equation \cref{eq:singlesiteAC} has a unique, even solution $\phi(t)$, which we will call the fundamental AC breather \cite{Pelinovsky2012}. This solution satisfies the initial conditions $\phi(0) = a$ and $\dot{\phi}(0) = 0$, where $a$ is the smallest, positive root of $V(a) = E$. The fundamental period $T$ of $\phi(t)$ is a function of the energy $E$, and is given by
\begin{equation}\label{eq:TE}
T(E) = \sqrt{2}\int_{-a}^a \frac{du}{\sqrt{E - V(u)}}.
\end{equation}
The potential $V(u)$ is a hard potential if the period $T$ decreases as the energy $E$ increases, and a soft potential if $T$ increases as $E$ increases \cites{Pelinovsky2012,cuevas-maraver2016}.
Pelinovsky and Sakovich prove the existence of multi-site breathers close to the AC limit, i.e. for $d$ small, which are even functions of $t$ \cite{Pelinovsky2012}. Specifically, for a finite set of lattice sites $S = \{ k_1, \dots, k_N \}$, with $k_i < k_{i+1}$, they start with a solution
\begin{equation}
\uvec^{(0)}(t) = \sum_{i=1}^N \sigma_i \phi(t) \evec_{k_i}
\end{equation}
at the AC limit, where $\phi(t)$ is the fundamental AC breather, $\evec_{k_i}$ is the unit vector for site $k_i$ in the integer lattice, and $\sigma_i = \pm 1$ is the phase factor for the oscillator at site $k_i$. Adjacent oscillators are in-phase if $\sigma_i \sigma_{i+1} = 1$, and out-of-phase if $\sigma_i \sigma_{i+1} = -1$. They then use the implicit function theorem to prove the existence of a multi-site breather $\uvec^{(d)}(t)$ to \cref{eq:DKG} for sufficiently small $d$ \cite{Pelinovsky2012}*{Theorem 1}. 
The fact that in-phase and out-of-phase structures
are the only ones available in discrete Klein-Gordon 
lattices with nearest-neighbor interactions
has been shown in~\cite{KOUKOULOYANNIS20132022}.
In that light, we will restrict our considerations
to such configurations in what follows. Nevertheless,
it is relevant to mention in passing that the examination 
of so-called phase-shift multi-breathers (with relative
phases different than $0$ or $\pi$) in lattices with
interactions beyond nearest-neighbor ones remains an
active topic of investigation in Klein-Gordon (and DNLS)
settings~\cite{PENATI201992}.

\subsection{Linearization}\label{sec:DKGlinear}

For a specific coupling constant $d$ and fundamental period $T$, let $\uvec(t)$ be a breather solution to \cref{eq:DKG}. To study the spectral stability of $\uvec(t)$, we linearize equation \cref{eq:DKG} about $\uvec(t)$ by substituting the perturbation ansatz $\uvec(t) + \epsilon \vvec(t)$ and keeping terms of order $\epsilon$ to obtain the linearized equation
\begin{equation}\label{eq:DKGlinear}
\ddot{v}_n = d (\Delta_2 v)_n - f'(u_n)v_n,
\end{equation}
which can be written as the first order linear system
\begin{equation}\label{eq:DKGlinear1}
\frac{d}{dt} \begin{pmatrix} v_n \\ w_n \end{pmatrix} = 
\begin{pmatrix}
w_n \\ 
d (\Delta_2 v)_n - f'(u_n)v_n
\end{pmatrix}
\end{equation}
by letting $w_n = \dot{v}_n$. Since $\uvec(t)$ has period $T$, it follows from Floquet theory that its spectral stability depends on the Floquet multipliers, which are the spectrum of the monodromy operator $\calM = \Phi(0, T)$, where $\Phi(s, t)$ is the evolution operator for \cref{eq:DKGlinear1}. 
If $\mu$ is a Floquet multiplier, then the corresponding Floquet exponent $\lambda$ (which is unique modulo $2 \pi i/T$) is related to $\mu$ by $\mu = e^{\lambda T}$. 
For every Floquet exponent $\lambda$, there is a corresponding solution $\vvec(t) = e^{\lambda t} \wvec(t)$ to the linearized equation \cref{eq:DKGlinear}, where $\wvec(t)$ is periodic with period $T$ (see, for example, \cite{Kapitula2013}*{Lemma 2.1.29}). Substituting this ansatz into \cref{eq:DKGlinear}, we obtain the Floquet eigenvalue problem
\begin{equation}\label{eq:DKGeig}
d (\Delta_2 w)_n - f'(u_n)w_n - \ddot{w}_n = 2 \lambda \dot{w}_n + \lambda^2 w_n,
\end{equation}
where $\wvec \in \ell^2(\Z, H^2_\per[0,T]) \subset \ell^2(\Z, L^2_\per[0,T])$, and we use the inner product
\begin{equation}\label{eq:IP1}
\langle \uvec, \vvec \rangle_{\ell^2(\Z, L^2_\per[0,T])} = \sum_{n=-\infty}^\infty \int_0^T u_n(s) \overline{v_n(s)} ds
\end{equation}
on $\ell^2(\Z, L^2_\per[0,T])$. We can write equation \cref{eq:DKGeig} as 
\begin{equation}\label{eq:DKGeigL}
\calL(\uvec)\wvec = (2 \lambda \partial_t + \lambda^2 )\wvec,
\end{equation}
where the linear operator $\calL(\uvec)$ is defined by the LHS of \cref{eq:DKGeig}. Since \cref{eq:DKG} is Hamiltonian, the Floquet exponents must come in quartets $\lambda = \pm \alpha \pm \beta i$. It follows that the Floquet multipliers $\mu$ can only occur in one of three patterns: a pair $\{ \mu, \overline{\mu} \}$ on the unit circle; a pair $\{ \mu, \mu^{-1} \}$ on the real line; or a quartet $\{ \mu, \overline{\mu}, \mu^{-1}, \overline{\mu}^{-1} \}$ off of the unit circle. Therefore, the breather solution $\uvec(t)$ is spectrally unstable unless all of its Floquet multipliers lie on the unit circle.

There is always a Floquet exponent at 0 (corresponding to the Floquet multiplier $\mu = 1$), since $\dot{\uvec}$ is a solution to \cref{eq:DKGeig}, i.e. $\calL(\uvec)\dot{\uvec} = 0$, which can be verified by differentiating \cref{eq:DKG} with respect to $t$. Furthermore, there exists a solution $\yvec \in \ell^2(\Z, H^2_\per[0,T])$ which solves 
\begin{equation}
\calL(\uvec)\yvec = 2 \ddot{\uvec},
\end{equation}
and can be chosen to be perpendicular to $\dot{\uvec}$ with respect to the inner product \cref{eq:IP1} (see \cite{Pelinovsky2012}*{Section 3}, noting that the corresponding linear operator to $\calL(\uvec)$ in \cite{Pelinovsky2012} has the opposite sign). In fact, we can actually compute $\yvec(t)$. 
Letting $\omega = 2 \pi / T$ be the frequency of the breather, and normalizing the period of the breather to $2 \pi$ by rescaling the time variable to $\tau = \omega T$, as in \cite{kevrekidis2016}, equation \cref{eq:DKG} becomes
\begin{equation}\label{eq:DKGomega}
\omega^2 \partial_\tau^2 u_n = d (\Delta_2 u)_n - f(u_n).
\end{equation}
Differentiating with respect to $\omega$, we obtain
\begin{equation}\label{eq:DKGdiffw}
d (\Delta_2 \partial_\omega u)_n - f'(u_n)\partial_\omega u_n 
- \omega^2 \partial_\tau^2 ( \partial_\omega u_n) = 2 \omega \partial_\tau^2 u_n.
\end{equation}
Changing variables back to $t$, this becomes $\calL(\uvec)\partial_\omega \uvec = \frac{2}\omega \ddot{\uvec}$, thus $\yvec = \omega \partial_\omega \uvec$.

\subsection{Continuous spectrum}

The continuous spectrum of $\calL(\uvec)$ is the set of all $\lambda$ for which the limiting problem 
\begin{equation}\label{eq:DKGeigcont}
d (\Delta_2 w)_n - f'(0)w_n - \ddot{w}_n = 2 \lambda \dot{w}_n + \lambda^2 w_n
\end{equation}
has a bounded solution in $n$. 
Following the procedure in \cite{cuevas-maraver2016}*{Section 2.1}, the continuous spectrum consists of the bands
\begin{equation}\label{eq:contspec}
\begin{aligned}
\lambda &= i\left( \pm \omega(\theta) - \frac{2 \pi m}{T} \right) && \qquad m \in \Z \\
\omega(\theta) &= \sqrt{ f'(0) + 4 d \sin^2\left( \frac{\theta}{2} \right) } && \qquad\theta \in [-\pi, \pi]
\end{aligned}
\end{equation}
on the imaginary axis. The corresponding Floquet multipliers comprise two bands on the unit circle,  which are symmetric about the real axis, and are given by
\begin{equation}\label{eq:contspecmult}
\begin{aligned}
\mu = \exp \left( \pm i \sqrt{f'(0) + 4 d \sin^2 \left(\frac{\theta}{2}\right) }T \right)  && \qquad \theta \in [0, \pi].
\end{aligned}
\end{equation}
Let $m_0$ be the largest nonnegative integer such that $T - 2 \pi m_0 > 0$. At the AC limit, the bands consist of the two points $\mu = \exp(\pm i \theta_0)$, where $\theta_0 =  T - 2 \pi m_0$. For $d>0$, the bands stretch from $\mu = \exp(\pm i \theta_0)$ to $\mu = \exp(\pm i \theta_1)$, where $\theta_1 = \sqrt{f'(0) + 4 d }T - 2 \pi m_0$. Following the analysis in \cite{cuevas-maraver2016}*{Section 2.2}, if $T \in (n \pi, (n+1)\pi)$ for $n$ even, the upper band has positive Krein signature, and the lower band has negative Krein signature. As $d$ is increased, the bands grow towards $(-1,0)$ (see \cref{fig:bands}, left). The ends of the bands meet when $\theta_1 = \pi$, which occurs when $d = \frac{1}{4} \left( \frac{(1 + 2 m_0)^2 \pi^2}{T^2} - 1\right)$, at which point they merge into a single band. They meet again when $\theta_1 = 0$, which occurs when $d = \frac{1}{4} \left( \frac{(2 + 2 m_0)^2 \pi^2}{T^2} - 1\right)$, at which point they comprise the entire unit circle.
Conversely, if $T \in (n \pi, (n+1)\pi)$ for $n$ odd, the upper band has negative Krein signature, the lower band has positive Krein signature, and the bands grow towards (1,0) as $d$ is increased (see \cref{fig:bands}, right).

\begin{figure}
	\begin{center}
	\begin{subfigure}{0.45\linewidth}
		\caption{}
		\includegraphics[width=7.5cm]{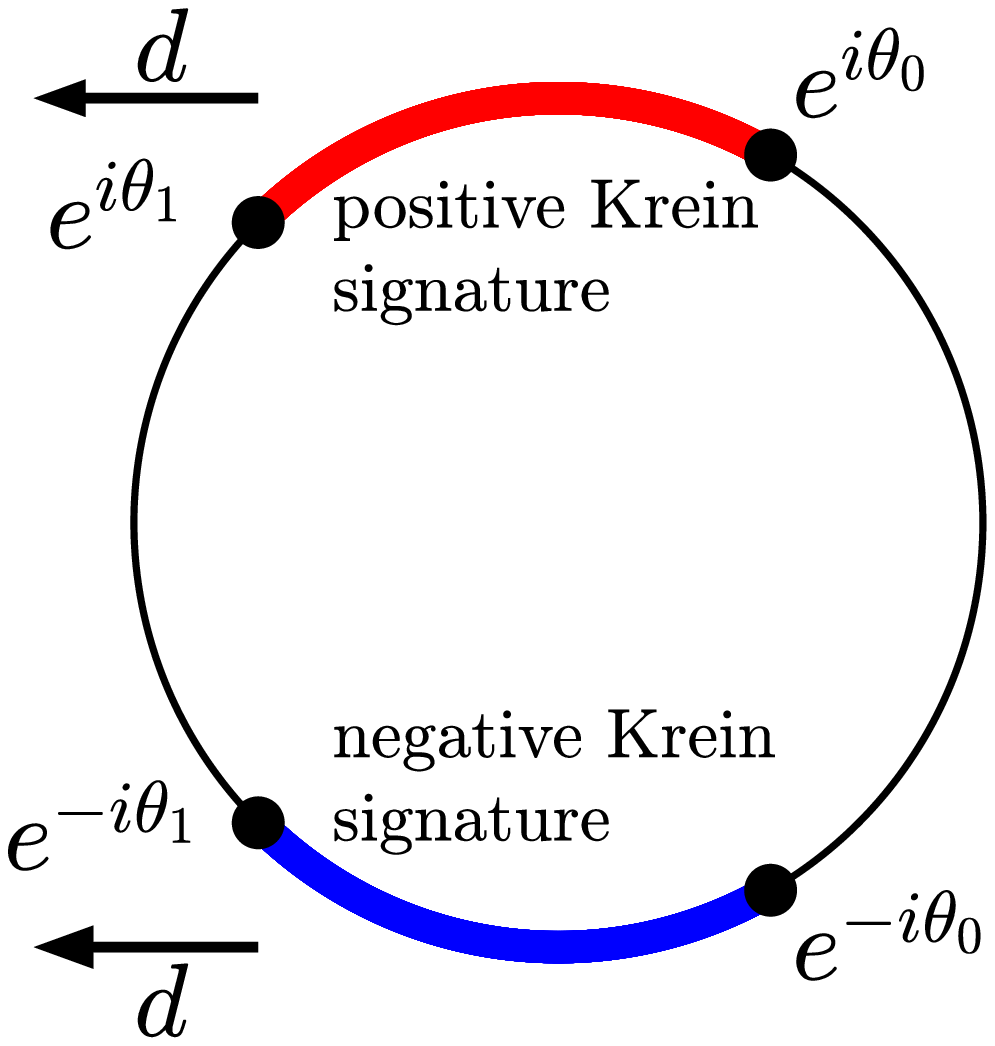}
	\end{subfigure}
	\begin{subfigure}{0.45\linewidth}
		\caption{}
		\includegraphics[width=7.5cm]{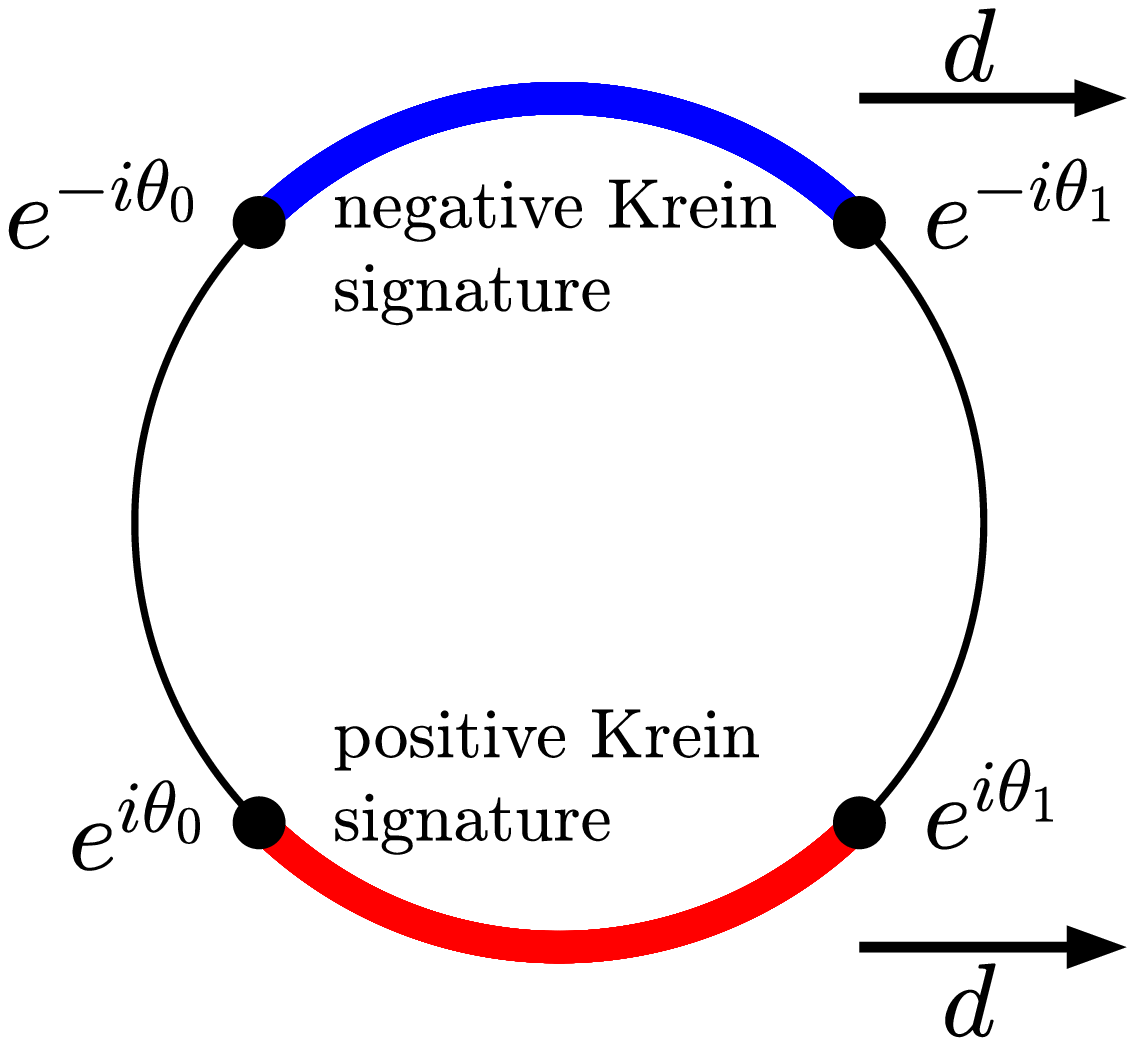}
	\end{subfigure}
	\end{center}
	\caption{Schematic showing continuous spectrum bands on the unit circle for $T \in (n \pi, (n+1)\pi)$ with $n$ even (a) and $n$ odd (b). Krein signature of bands is indicated, and bands grow in the direction of the arrow with increasing $d$.}
	\label{fig:bands}
\end{figure}

\subsection{Spatial dynamics}

We now reformulate both the DKG equation \cref{eq:DKG} and the eigenvalue problem \cref{eq:DKGeig} using a spatial dynamics approach, as in \cites{Parker2020,Parker2021}. Let $\uvec(t)$ be a breather solution to \cref{eq:DKG} with period $T$, and define $U(n) = (u(n), \tilde{u}(n)) = ( u_n, u_{n-1} )$. Then equation \cref{eq:DKG} can be written as the lattice dynamical system
\begin{equation}\label{eq:dynEq}
U(n+1) = F(U(n)),
\end{equation}
where
\begin{equation}\label{eq:F}
F\begin{pmatrix}u \\ \tilde{u} \end{pmatrix} = 
\begin{pmatrix}2u  + \dfrac{1}{d}f(u) + \dfrac{1}{d} \partial_t^2 u - \tilde{u} \\
u
\end{pmatrix}.
\end{equation}
We note that since $f$ is a odd function, if $U(n)$ is a solution to \cref{eq:dynEq}, then $-U(n)$ is as well. The Floquet eigenvalue problem \cref{eq:DKGeig} can similarly be written as 
\begin{equation}\label{eq:dynEVP}
W(n+1) = \left[ DF(U(n)) + (2 \lambda \partial_t + \lambda^2) B \right] W(n),
\end{equation}
where
\begin{equation}\label{eq:DFU}
DF(U(n)) = \begin{pmatrix}
2 + \dfrac{f'(u(n))}{d} + \dfrac{1}{d}\partial_t^2  & -1 \\ 1 & 0
\end{pmatrix}, \qquad
B = \begin{pmatrix} 1 & 0 \\ 0 & 0 \end{pmatrix}.
\end{equation}
The zero function is an equilibrium solution to \cref{eq:dynEq}. At this point, the standard procedure (see, for example, \cites{Parker2021,Parker2020,Sandstede1998}) is to consider $U(n)$ as a homoclinic orbit of the equilibrium at 0. The complication here is that for the difference equation \cref{eq:dynEq} to be well-posed, we require $U(n) \in C_\per^\infty([0,T],\R^2)$ for all $n$, rather than just $U(n) \in H^2_\per([0,T], \R^2)$, since each application of $F$ involves differentiating twice with respect to $t$, and equation \cref{eq:dynEq} involves applying $F$ an arbitrary number of times. In essence, the spatial dynamics formulation in equation \cref{eq:dynEq} is more restrictive than the original system.
Since $C_\per^\infty([0,T])$ is not a closed subspace of $L^2([0,T])$, it is not straightforward to adapt the stable manifold theorem and results on exponential dichotomies to this problem, even if $DF(0)$ has the desired spectral properties. As an alternative, we will consider a finite-dimensional approximation, where we project the problem onto a finite-dimensional subspace of $L_\per^2([0,T])$. Roughly, this subspace consists of the first $M$ Fourier modes of the standard Hilbert basis for $L_\per^2([0,T])$. Although this is a limited result, we note that since we require $U(n)$ to be smooth, its Fourier coefficients will decay exponentially. Since we can take $M$ as large as we like, this should be a very good approximation. In addition, since many of the numerical simulations are performed using Fourier spectral methods, this theory applies directly to such numerical discretization. Before we formulate our approximation, we prove some important results about the spectrum of $DF(0)$.

\subsection{Spectrum of \texorpdfstring{$DF(0)$}{DF(0)}}

The linearization of \cref{eq:dynEq} about the equilibrium at 0 is the constant coefficient linear operator 
\begin{equation}\label{eq:DF0}
DF(0) = \begin{pmatrix}
\dfrac{1}{d}\partial_t^2 + \dfrac{f'(0)}{d} + 2 & -1 \\ 1 & 0
\end{pmatrix},
\end{equation}
which is invertible with inverse
\begin{equation}\label{eq:DF0inv}
DF(0)^{-1} = \begin{pmatrix}
0 & 1 \\ -1 & \dfrac{1}{d}\partial_t^2 + \dfrac{f'(0)}{d} + 2
\end{pmatrix}
\end{equation}
First, we determine the eigenvalues and eigenfunctions of $DF(0)$.

\begin{lemma}\label{lemma:DF0eigs}
The set of eigenvalues of $DF(0)$ is given by $\bigcup_{k \in \Z} \{\lambda_k, \lambda_k^{-1} \}$, where 
\begin{equation}\label{eq:DF0lambdak}
\lambda_k = \frac{1}{2}\left( r_k + \sqrt{r_k^2 - 4} \right), \quad r_k = -\frac{4 k^2 \pi^2}{d T^2} + \frac{f'(0)}{d} + 2.
\end{equation}
For $k \neq 0$, these have algebraic multiplicity 2, since $\lambda_{-k} = \lambda_k$. For each $k \in \Z$, $\{\lambda_k, \lambda_k^{-1} \}$ is either real, or a complex conjugate pair on the unit circle. The eigenfunctions corresponding to $\left\{ \lambda_k, \lambda_k^{-1} \right\}$ are $\left\{ U_k(t), U_k^{-1}(t) \right\}$, which are defined, up to constant multiple, by 
\begin{equation}\label{eq:DF0eigenfns}
\begin{aligned}
U_k(t) &= \begin{pmatrix}v_k(t) \\ \lambda_k^{-1}  v_k(t) \end{pmatrix}, \quad
U_k^{-1}(t) = \begin{pmatrix}v_k(t) \\ \lambda_k v_k(t) \end{pmatrix}, \quad
v_k(t) = \frac{1}{T} \exp\left( i \frac{2 \pi k t}{T} \right).
\end{aligned}
\end{equation}
\end{lemma}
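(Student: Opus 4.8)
The plan is to exploit the fact that every entry of $DF(0)$ is built from the constant-coefficient operator $\partial_t^2$, which is simultaneously diagonalized by the Fourier basis $\{v_k\}_{k\in\Z}$ with $\partial_t^2 v_k = -\tfrac{4\pi^2 k^2}{T^2} v_k$. Consequently $DF(0)$ does not couple distinct Fourier modes, and the eigenvalue problem block-diagonalizes. First I would write $W = (w_1, w_2)^\top$ and expand each component in the Fourier basis; on the two-dimensional subspace spanned by $(v_k, 0)$ and $(0, v_k)$ the operator $DF(0)$ acts as the constant matrix
\[
M_k = \begin{pmatrix} -\dfrac{4\pi^2 k^2}{d T^2} + \dfrac{f'(0)}{d} + 2 & -1 \\ 1 & 0 \end{pmatrix} = \begin{pmatrix} r_k & -1 \\ 1 & 0 \end{pmatrix}.
\]
Because the full operator is the direct sum $\bigoplus_{k} M_k$ over this decomposition, its point spectrum is precisely the union of the spectra of the $M_k$, which handles the completeness assertion.

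Second, I would diagonalize each $M_k$. Its trace is $r_k$ and its determinant is $1$, so its characteristic polynomial is $\lambda^2 - r_k \lambda + 1$, whose roots are $\tfrac12(r_k \pm \sqrt{r_k^2 - 4})$; setting $\lambda_k$ to be the root with the $+$ sign and computing the product of the two roots as $\tfrac14(r_k^2 - (r_k^2-4)) = 1$ shows the other root is exactly $\lambda_k^{-1}$. Invertibility of $DF(0)$, already recorded in \cref{eq:DF0inv}, guarantees $0$ is never a root, so $\lambda_k^{-1}$ is well defined. Solving $M_k W = \lambda W$ gives $w_1 = \lambda w_2$, i.e. $w_2 = \lambda^{-1} w_1$; normalizing $w_1 = v_k$ then reproduces the stated eigenfunctions $U_k = (v_k, \lambda_k^{-1} v_k)^\top$ and $U_k^{-1} = (v_k, \lambda_k v_k)^\top$. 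The real/unit-circle trichotomy follows from the sign of the discriminant $r_k^2 - 4$: when $|r_k|\ge 2$ the roots form a real reciprocal pair, and when $|r_k| < 2$ they form a complex pair whose product is $1$ and whose sum $r_k$ is real, hence a complex conjugate pair on the unit circle.

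Finally, the multiplicity claim comes from the observation that $r_k$ depends only on $k^2$, so $r_{-k} = r_k$ and hence $\lambda_{-k} = \lambda_k$; for $k \neq 0$ the two distinct modes $v_k$ and $v_{-k}$ therefore yield the same eigenvalue pair, giving algebraic multiplicity $2$, while $k=0$ contributes a single mode. I expect the main obstacle to be functional-analytic rather than algebraic: one must justify that working mode-by-mode genuinely captures the spectrum of the unbounded operator on the function space at hand, i.e. that no eigenfunction can mix infinitely many Fourier modes and that there are no spectral values outside $\bigcup_k \{\lambda_k, \lambda_k^{-1}\}$. Since each eigenfunction produced is a single smooth Fourier mode, membership in $C^\infty_\per$ is immediate; the care needed lies in the domain bookkeeping for $\partial_t^2$ and in confirming there are no accidental coincidences $\lambda_k = \lambda_{k'}$ with $|k| \neq |k'|$, which is ruled out because $r_k = r_{k'}$ forces $k^2 = k'^2$.
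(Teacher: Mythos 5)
Your proposal is correct and follows essentially the same route as the paper's proof: both reduce the problem to the quadratic $\lambda^2 - r_k\lambda + 1 = 0$ on each Fourier mode, the only difference being that you Fourier-expand first and diagonalize the resulting $2\times 2$ blocks, whereas the paper eliminates the second component first and then solves the scalar periodic eigenvalue problem with spectral parameter $r = \lambda + \lambda^{-1}$. Your added remarks on completeness of the mode-by-mode decomposition and the non-coincidence $\lambda_k \neq \lambda_{k'}$ for $k^2 \neq k'^2$ are sound and slightly more explicit than the paper's treatment, but do not change the argument.
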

\begin{proof}
Consider the eigenvalue problem $DF(0) U(t) = \lambda U(t)$ on $H^2_\per([0,T],\R^2)$, where $U(t) = (v(t), w(t))^T$. We note that $\lambda = 0$ is not an eigenvalue, since that implies $v = w = 0$. The eigenvalue problem then reduces to the system of equations
\begin{align}\label{eq:DF0EVPsystem}
\left( \frac{1}{d}\partial_t^2 + \frac{f'(0)}{d} + 2 \right) v(t) = \left( \lambda + \frac{1}{\lambda} \right) v(t), \quad
w(t) = \frac{1}{\lambda} v(t).
\end{align}
Letting $r = \lambda + \frac{1}{\lambda}$ and using the periodic boundary conditions $v(T) = v(0)$, the set of solutions to \cref{eq:DF0EVPsystem} is given by
\begin{align}
v_k(t) &= \frac{1}{T} \exp\left( i \frac{2 \pi k t}{T} \right), \quad r_k = -\frac{4 k^2 \pi^2}{d T^2} + \frac{f'(0)}{d} + 2 && k \in \Z,
\end{align}
where the functions $v_k(t)$ have been normalized. The corresponding eigenvalues of $DF(0)$ are then given by $\left\{ \lambda_k, \lambda_k^{-1} \right\}$, where $\lambda_k$ is defined by \cref{eq:DF0lambdak}, and the corresponding eigenfunctions are given by \cref{eq:DF0eigenfns}. The pair $\left\{ \lambda_k, \lambda_k^{-1} \right\}$ is real if $|r_k| \geq 2$, and is complex with modulus 1 if $|r_k| < 2$.
\end{proof}

We note that the spectrum of $DF(0)$ depends on both the coupling parameter $d$ and the period $T$. It follows from \cref{lemma:DF0eigs} that the spectrum of $DF(0)$ is bounded away from the unit circle provided $|r_k| > 2$ for all $k$. The following lemma gives some conditions on $T$ and $d$ to guarantee that this is the case.

\begin{lemma}\label{lemma:DF0hyp}
The spectrum of $DF(0)$ is bounded away from the unit circle if, for a specific nonnegative integer $k$, $T$ and $d$ are chosen so that
\begin{equation}\label{eq:Tdpair}
\frac{2 k \pi}{\sqrt{f'(0)}} < T < \frac{2 (k+1) \pi}{\sqrt{f'(0)}} , \qquad 0 < d < \frac{(k+1)^2\pi^2}{T^2} - \frac{f'(0)}{4}.
\end{equation}
\begin{proof}
Since $f'(0) > 0$, it follows that $r_0 = 2 + \frac{1}{d}f'(0) > 2$. In addition, $r_k$ is strictly decreasing in $k$, with $r_k \rightarrow -\infty$ as $k \rightarrow \infty$. Thus $|r_k| > 2$ for all $k$ if both $r_k > 2$ and $r_{k+1} < -2$ for some nonnegative integer $k$, from which the conditions \cref{eq:Tdpair} follow.
\end{proof}
\end{lemma}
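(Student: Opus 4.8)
The plan is to reduce the geometric statement (spectrum bounded away from the unit circle) entirely to an elementary fact about the scalar sequence $r_k$. By \cref{lemma:DF0eigs}, the eigenvalues of $DF(0)$ occur in reciprocal pairs $\{\lambda_k, \lambda_k^{-1}\}$, and such a pair lies on the unit circle exactly when $|r_k| < 2$, whereas it consists of two distinct real reciprocals (one inside, one outside the unit disk) when $|r_k| > 2$. Since $r_k$ depends on $k$ only through $k^2$, it suffices to control $r_k$ for $k \geq 0$. Thus the whole problem becomes: find conditions on $T$ and $d$ forcing $|r_k| > 2$ for every $k$, with a uniform gap.

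First I would record two structural facts about $(r_k)_{k \geq 0}$ read directly off \cref{eq:DF0lambdak}. Since $f'(0) > 0$ and $d > 0$, the $k=0$ term satisfies $r_0 = 2 + f'(0)/d > 2$, so the sequence starts strictly above the band $[-2,2]$. Moreover $r_k = -\tfrac{4\pi^2}{dT^2}k^2 + f'(0)/d + 2$ is strictly decreasing in $k \geq 0$ and tends to $-\infty$. Hence $(r_k)$ runs monotonically from a value above $2$ down to $-\infty$.

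The crux is then a jump argument. Because $(r_k)$ decreases monotonically from above $2$ to $-\infty$, the only way to keep every $r_k$ outside the closed band $[-2,2]$ is for the sequence to step over the band in a single increment: there must exist a nonnegative integer $k$ with $r_k > 2$ and $r_{k+1} < -2$. (If instead some term landed in $[-2,2]$, \cref{lemma:DF0eigs} would place a reciprocal pair on the unit circle.) Conversely, if this single-step jump holds for some $k$, monotonicity forces $r_j > 2$ for all $j \leq k$ and $r_j < -2$ for all $j \geq k+1$, so $|r_j| > 2$ for every $j$. I would then translate the two inequalities into the stated constraints: $r_k > 2$ is equivalent to $\sqrt{f'(0)}\,T > 2k\pi$, i.e.\ the lower bound on $T$ in \cref{eq:Tdpair}, while $r_{k+1} < -2$ rearranges to $d < (k+1)^2\pi^2/T^2 - f'(0)/4$, the stated upper bound on $d$; requiring this $d$-interval to be nonempty (its right endpoint positive) is precisely the upper bound $T < 2(k+1)\pi/\sqrt{f'(0)}$.

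I do not anticipate a serious obstacle; the one point meriting care is upgrading ``off the unit circle'' to ``bounded away from it.'' This is immediate once the jump structure is in hand: as $k \to \infty$ the real eigenvalues satisfy $|\lambda_k| \to \infty$ and $|\lambda_k^{-1}| \to 0$, so they march away from the circle, and the eigenvalues nearest $|\lambda| = 1$ correspond to the finitely many $r_k$ closest to the band, namely $r_k$ (just above $2$) and $r_{k+1}$ (just below $-2$). Each of these is strictly outside $[-2,2]$, so the distance from the spectrum to the unit circle attains a positive minimum over finitely many candidates, yielding a genuine uniform gap.
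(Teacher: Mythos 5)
Your proposal is correct and follows essentially the same route as the paper's proof: reduce to the condition $|r_k|>2$ for all $k$ via \cref{lemma:DF0eigs}, use $r_0>2$ together with the strict monotone decrease of $r_k$ to $-\infty$ to see that it suffices to have $r_k>2$ and $r_{k+1}<-2$ for a single nonnegative integer $k$, and then translate these two inequalities (plus nonemptiness of the $d$-interval) into \cref{eq:Tdpair}. Your added remarks on the necessity of the single-step jump and on upgrading ``off the circle'' to a uniform gap are correct but not needed for the stated sufficiency claim.
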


We take the following assumption on the spectrum of $DF(0)$, which is the analogue to hyperbolicity in the finite-dimensional case.

\begin{hypothesis}\label{hyp:hyp}
The coupling constant $d$ and period $T$ are chosen so that the spectrum of $DF(0)$ is bounded away from the unit circle.
\end{hypothesis}

\section{Finite dimensional approximation}\label{sec:findim}

In this section, we define our finite dimensional approximation for \cref{eq:dynEq}. For $M \geq 1$, let 
\begin{equation}\label{eq:XM}
X_M = \spn\left\{ \bigcup_{k = -M}^M v_k(t) \right\}, \qquad
v_k(t) = \frac{1}{T} \exp\left( i \frac{2 \pi k t}{T} \right)
\end{equation}
be the $(2M+1)$-dimensional subspace of $L_\per^2([0,T])$ spanned by the Fourier basis functions with wavenumber $|k| \leq M$, and let $P_M: L_\per^2([0,T]) \rightarrow X_M$, defined by
\begin{equation}\label{eq:PM}
P_M u(t) = \sum_{k=-M}^M \langle u, v_k \rangle_{L^2([0,T])} v_k(t)
= \sum_{k=-M}^M \left( \int_0^T u(s) \overline{v_k(s)} ds \right) v_k(t)
\end{equation}
be the corresponding projection operator. In addition, let $X_{M,e}$ be the $(M+1)$-dimensional subspace of $X_M$ comprising functions which are even in $t$, i.e.
\begin{equation}
X_{M,e} = \left\{ f \in X_M : f(-t) = f(t) \text{ for all }t \in \R \right\}.
\end{equation}

For $M \geq 1$, define the approximation of the discrete Klein-Gordon equation \cref{eq:DKG} on $\ell^2(\Z, X_M)$ by
\begin{equation}\label{eq:DKGapprox}
\begin{aligned}
\ddot{u}_n &= d (\Delta_2 u)_n - g(u_n) && \qquad u_n \in X_M,
\end{aligned}
\end{equation}
where $g: X_M \rightarrow X_M$ is given by 
\begin{equation}g(u) = P_M f(u).
\end{equation} 
When $u = 0$, $g(0) = P_M f(0) = 0$, and $g'(0) = P_M f'(0) = f'(0)$, since $f'(0)$ is a constant function, which is in $X_M$ for all $M$.
Furthermore, since $f$ is an odd function, $g$ is as well. In general, it not the case that $P_M f(u) = f(P_M u)$, thus we cannot simply obtain a solution to \cref{eq:DKGapprox} by projecting a solution of \cref{eq:DKG} onto $X_M$. However, since the Fourier coefficients of a smooth, $T$-periodic function $u(t)$ decay exponentially, equation \cref{eq:DKGapprox} should be a reasonable approximation to \cref{eq:DKG} for large $M$. Linearization of \cref{eq:DKGapprox} about a solution $\uvec \in \ell^2(\Z, X_M)$ yields the eigenvalue problem
\begin{equation}\label{eq:DKGMeig}
\begin{aligned}
d (\Delta_2 w)_n - g'(u_n)w_n - \ddot{w}_n = 2 \lambda \dot{w}_n + \lambda^2 w_n,
\end{aligned}
\end{equation}
which we write as
\begin{equation}\label{eq:DKGMeigL}
\calL_M(\uvec)\wvec = (2 \lambda \partial_t + \lambda^2 )\wvec,
\end{equation} 
where $\calL_M(\uvec)$ is the linear operator on $\ell^2(\Z, X_M)$ defined by the LHS of \cref{eq:DKGMeigL}. As in \cref{sec:DKGlinear}, $\calL_M(\uvec) \dot{\uvec} = 0$, and there exists a solution $\yvec$ to $\calL_M(\uvec) \yvec = 2 \ddot{\uvec}$, where $\yvec = \omega \partial_\omega \uvec$ and $\omega = 2 \pi / T$.

Reformulating \cref{eq:DKGapprox} from a spatial dynamics perspective yields the discrete dynamical system on $X_M^2$
\begin{align}\label{eq:dynEqM}
U(n+1) &= F_M(U(n)) && U(n) \in X_M^2,
\end{align}
where
\begin{equation}\label{eq:FM}
F_M\begin{pmatrix}u \\ \tilde{u} \end{pmatrix} = 
\begin{pmatrix}2u  + \dfrac{1}{d}g(u) + \dfrac{1}{d} \partial_t^2 u - \tilde{u} \\
u
\end{pmatrix}
\end{equation}
Since $F_M(-U) = -F_M(U)$, it follows that if $U(n)$ is a solution to \cref{eq:dynEqM}, so is $-U(n)$. Similarly, the eigenvalue problem \cref{eq:DKGMeig} can be reformulated as
\begin{equation}\label{eq:dynEVPM}
W(n+1) = \left[ DF_M(U(n)) + (2 \lambda \partial_t + \lambda^2) B \right] W(n),
\end{equation}
where $B$ is defined in \cref{eq:DFU}. Since $g'(0) = f'(0)$, the linear operator $DF_M(0)$ on $X_M^2$ is also given by \cref{eq:DF0}. 

Finally, let $\{\tau(s) : s \in \R\}$ be the one-parameter group of unitary translation operators on $X_M^2$, defined by $[\tau(s)]U(\cdot) = U(\cdot - s)$, which has infinitesimal generator $\tau'(0) = \partial_t$. We note that this group is well-defined on $X_M^2$, since 
\[
\tau(s) v_k(t) =
\frac{1}{T} \exp\left( i \frac{2 \pi k (t-s)}{T}\right) 
= \exp\left( -i \frac{2 \pi k s}{T} \right) \frac{1}{T} \exp\left( i \frac{2 \pi k t}{T}\right) 
= \exp\left( -i \frac{2 \pi k s}{T} \right) v_k(t),
\]
i.e., the group action multiplies a basis element by a constant. In fact, the eigenvalue of \cref{eq:DKGMeig} at 0 is a result of this translational symmetry. The function $F_M$ from \cref{eq:dynEqM} (as well as $F$ from the full system \cref{eq:dynEq}) commutes with this one-parameter group, i.e. $F(\tau(s) U) = \tau(s) F(U)$. It is crucial to note that this symmetry is lost if we consider the problem \cref{eq:dynEqM} on $X_{M,e}^2$, since the space of even functions is not translation invariant. 

\section{Multi-breathers}\label{sec:multi}

Our strategy will be to first prove that multi-breathers exist on the subspace $X_{M,e}^2$ of even functions. Since there is no translational symmetry, the stable and unstable manifolds of the origin will intersect transversely, which greatly simplifies the analysis. This parallels the restriction in \cite{Pelinovsky2012} to breathers which are even in $t$, and is consistent with the odd symmetry of the nonlinearity $f$. 
Once that is accomplished, we will return to the full space $X_{M}^2$ for the eigenvalue problem, and use Lin's method as in \cites{Parker2021,Parker2020,Sandstede1998} to construct the interaction eigenfunctions as piecewise linear combinations of the eigenfunction corresponding to translation symmetry. This technique is similar to the one we employed in \cite{Parker2020} for DNLS, where, to prove the existence of multi-pulses, we removed the gauge symmetry by restricting the problem to real-valued solutions.

\subsection{Primary breather}

As noted above, we will take the existence of a primary, single-site breather as a hypothesis. 
Fix $d$ and $T$ such that \cref{hyp:hyp} holds. First, we consider the system \cref{eq:dynEqM} on $X_{M,e}^2$. By \cref{lemma:DF0eigs}, the $2M+2$ eigenvalues of $DF_M(0)$ on $X_{M,e}^2$ are given by $S_M = \bigcup_{k=0}^M \{\lambda_k, \lambda_k^{-1} \}$, where these are defined in the statement of \cref{lemma:DF0eigs}. By \cref{hyp:hyp}, the spectrum of $DF_M(0)$ is real and does not intersect the unit circle, thus 0 is a hyperbolic equilibrium point of \cref{eq:dynEqM}. Define the stable and unstable subsets of the spectrum of $DF_M(0)$ by
\[
S_M^s = \{ \lambda \in S_M : |\lambda| < 1\}, \qquad S_M^u = \{ \lambda \in S_M : |\lambda| > 1\}.
\]
By symmetry, $|S_M^s| = |S_M^u| = M+1$. Define
\begin{equation}\label{eq:defrM}
r_M = \min \{ |\lambda| : \lambda \in S_M^u, |\lambda| > 1 \}.
\end{equation}
Since $X_{M,e}^2$ is finite-dimensional, the stable manifold theorem holds. For each $M \geq 1$, let $W_{M,e}^s(0)$ and $W_{M,e}^u(0)$ be the $(M+1)$-dimensional stable and unstable manifolds of the equilibrium at 0, which are subsets of $X_{M,e}^2$. A breather solution to \cref{eq:dynEqM} is a homoclinic orbit which lies in the intersection of the stable and unstable manifolds. We take the existence of such a solution as a hypothesis.

\begin{hypothesis}\label{hyp:breather}
Let $d$ and $T$ be chosen according to \cref{hyp:hyp}. There exists a positive integer $M_0$ such that for all $M \geq M_0$, the stable and unstable manifolds $W_{M,e}^s(0)$ and $W_{M,e}^u(0)$ intersect transversely in $X_{M,e}^2$ in a homoclinic orbit $Q_M(n) = (q(n), \tilde{q}(n))^T = (q_n, q_{n-1})^T$. 
\end{hypothesis}
The first component $q_n$ is a breather solution to \cref{eq:DKGapprox}. As a consequence of the stable manifold theorem, we have the estimate 
\begin{equation}\label{eq:U1decayest}
\|Q_M(n)\|_{L^2_\per([0,T])} \leq C r_M^{-|n|}.
\end{equation}

We now consider \cref{eq:dynEqM} on the larger space $X_M^2$. The spectrum of $DF_M(0)$ on $X_M^2$ is exactly the same as that on $X_{M,e}^2$, except the eigenvalues corresponding to $k = 1, \dots, M$ have multiplicity of 2. It follows that 0 is also a hyperbolic equilibrium of \cref{eq:dynEqM} on $X_M^2$. Let $W_M^s(0)$ and $W_M^u(0)$ be the $(2M+1)$-dimensional stable and unstable manifolds of the equilibrium at 0, which this time are subsets of $X_M^2$. For $M \geq M_0$, $Q_M(n)$ is also a homoclinic orbit connecting these stable and unstable manifolds.  
This intersection, however, will not be transverse. Due to translational symmetry, these manifolds will have an intersection which is at least one-dimensional.
In the next hypothesis, we assume that this intersection is non-degenerate, i.e. this intersection is exactly one-dimensional. (We note that systems with higher dimensional intersections do exist; one example is the Ablowitz-Ladik discretization of the 
nonlinear Schr{\"o}dinger equation, where it stems from the existence of
exact single soliton solutions~\cites{Ablowitz1975,Ablowitz1976,Kapitula2001}).

\begin{hypothesis}\label{hyp:breathernondegen}
Let $d$ and $T$ be chosen according to \cref{hyp:hyp}, and let $M_0$ and $Q_M(n)$ be as in \cref{hyp:breather}. Then for all $M \geq M_0$, the stable and unstable manifolds $W_M^s(0)$ and $W_M^u(0)$ have an intersection in $Q_M(n)$ which is exactly one-dimensional.
\end{hypothesis}

The variational equation is the linearization of \cref{eq:dynEqM} on $X_M^2$ about the homoclinic orbit solution $Q_M(n)$, which is given by
\begin{equation}\label{eq:vareq}
\begin{aligned}
W(n+1) &= DF_M(Q_M(n)) W(n) && \qquad W(n) \in X_M^2.
\end{aligned}
\end{equation}
Since the tangent spaces of $W_M^s(0)$ and $W_M^u(0)$ have a one-dimensional intersection by \cref{hyp:breathernondegen}, it follows that $\dot{Q}_M(n) = (\dot{q}_n, \dot{q}_{n-1})$ is the unique, bounded solution to \cref{eq:vareq}, up to scalar multiples. ($\dot{Q}_M(n)$ is not a solution to \cref{eq:vareq} on $X_{M,e}^2$, since it is an odd function). We can thus decompose the tangent spaces to $W_M^u(0)$ and $W_M^s(0)$ at $Q_M(0)$ as
\begin{equation}\label{eq:TWdecomp}
T_{Q_M(0)}W^u_M(0) = \R \dot{Q}_M(0) \oplus Y_M^-, \qquad  
T_{Q_M(0)}W^s_M(0) = \R \dot{Q}_M(0) \oplus Y_M^+,
\end{equation}
where $\dim Y_M^- = \dim Y_M^+ = 2M$. In addition, the adjoint variational equation
\begin{equation}\label{eq:adjvareq}
Z(n) = DF_M(Q_M(n))^* Z(n+1)
\end{equation}
has a unique bounded solution, given by
\begin{equation}\label{eq:Z1}
Z_M(n) = (-\dot{q}_{n-1}, \dot{q}_n)^T,
\end{equation}
and $Z_M(0) \perp \R \dot{Q}_M(0) \oplus Y_M^- \oplus Y_M^+$ by \cite{Parker2020}*{Lemma 1}. We can thus decompose $X_M^2$ as
\begin{equation}\label{eq:Xdecomp}
X_M^2= \R \dot{Q}_M(0) \oplus Y_M^- \oplus Y_M^+ \oplus \R Z_M(0).
\end{equation}

\subsection{Existence}

We construct a multi-breather on $X_{M,e}^2$ by splicing together multiple copies of the primary breather $Q_M(n)$ in a chain. We characterize a multi-breather in the following way. Let $m > 1$ be the total number of copies of the primary breather in the chain. Let $N_i$ ($i = 1, \dots, m-1$) be the distances (in lattice points) between the center point of each breather. We seek to construct a solution $U(n)$ which can be written piecewise in the form
\begin{equation}\label{eq:Upiecewise}
\begin{aligned}
U_i^-(n) &= \sigma_i Q_M(n) + \tilde{U}_i^-(n) && n \in [-N_{i-1}^-, 0] && \quad i = 1, \dots, m\\
U_i^+(n) &= \sigma_i Q_M(n) + \tilde{U}_i^+(n) && n \in [0, N_i^+] && \quad i = 1, \dots, m,
\end{aligned}
\end{equation}
where $\sigma_i \in \{1, -1\}$ represents the orientation of each copy of the primary breather, $N_i^+ = \lfloor \frac{N_i}{2} \rfloor$, $N_i^- = N_i - N_i^+$, and $N_0^- = N_m^+ = \infty$. Adjacent copies of the primary breather are in-phase if $\sigma_i \sigma_{i+1} = 1$, or out-of-phase if $\sigma_i \sigma_{i+1} = -1$. (As in \cite{Pelinovsky2012}, other phase relations are not considered here; see also~\cite{KOUKOULOYANNIS20132022}). The functions $\tilde{U}_i^\pm$ in \cref{eq:Upiecewise} are remainder terms, which will be small,
so that the multi-breather resembles a sequence of well-separated copies of the primary breather, to leading order.
We also define the characteristic distance
\begin{equation}\label{defN}
N = \frac{1}{2} \min\{ N_i \},
\end{equation}
which will be used in the estimates of the remainder terms $\tilde{U}_i^\pm$. The individual pieces $U_i^\pm(n)$ are joined together end-to-end as in \cites{Sandstede1998,Knobloch2000,Parker2020,Parker2021} to create the multi-breather $U(n)$, which can be written in piecewise form as
\begin{equation}
\begin{aligned}
U(n) &= \begin{cases}
U_i^-\left( n - \sum_{j=1}^{i-1}N_j \right) & \sum_{j=1}^{i-1}N_j - N_{i-1}^- + 1 \leq n \leq \sum_{j=1}^{i-1}N_j \\
U_i^+\left( n - \sum_{j=1}^{i-1}N_j \right) & \sum_{j=1}^{i-1}N_j + 1 \leq n \leq \sum_{j=1}^{i-1}N_j + N_i^+
\end{cases}
&& i = 1, \dots, m,
\end{aligned}
\end{equation}
where we define $\sum_{j=1}^0 N_j = 0$. We have the following theorem concerning the existence of multi-breathers on $X_{M,e}^2$. 
We note that this is a solution to the finite dimensional approximation \cref{eq:dynEqM} on $X_{M}^2$ for arbitrary, fixed $M$.

\begin{theorem}\label{th:multi-breathers}
Assume \cref{hyp:hyp} and \cref{hyp:breather}, and let $M \geq M_0$, where $M_0$ is defined in \cref{hyp:breather}. 
Let $Q_M(n)$ be the primary breather solution to the discrete dynamical system \cref{eq:dynEqM} on $X_M^2$ from \cref{hyp:breather}.
Then there exists a positive integer $N^*$ with the following property. For all $m > 1$ and distances $N_i \geq N^*$, there exists a unique solution $U(n)$ to the discrete dynamical system \cref{eq:dynEqM} on $X_M^2$, where $U(n) \in X_{M,e}^2$. This solution comprises, to leading order, $m$ sequential copies of the primary breather $Q_M(n)$, and can be written piecewise in the form \cref{eq:Upiecewise}. For the remainder terms $\tilde{U}_i^\pm(n)$, we have the estimates
\begin{equation}\label{eq:Uestimates}
\begin{aligned}
\tilde{U}_i^+(N_i^+) &= \sigma_{i+1} Q_M(-N_i^-) + \mathcal{O}(r_M^{-2N}) \\
\tilde{U}_{i+1}^-(-N_i^-) &= \sigma_{i} Q_M(N_i^+) + \mathcal{O}(r_M^{-2N}) \\ 
\| \tilde{U}_i^-(n)\|_{X_M} &\leq C r_M^{-N_{i-1}^-} r_M^{-(N_{i-1}^- + n)} && \qquad n = 2, \dots, m\\
\|\tilde{U}_i^+(n) \|_{X_M} &\leq C r_M^{-N_i^+} r_M^{-(N_i^+ - n)} && \qquad n = 1, \dots, m-1 \\
\| \tilde{U}_1^-(n)\|_{X_M} &\leq C r_M^{-2N} r_M^{n} \\
\|\tilde{U}_m^+(n) \|_{X_M} &\leq C r_M^{-2N} r_M^{-n},
\end{aligned}
\end{equation}
which hold as well for derivatives of $\tilde{U}_i^\pm(n)$.
\begin{proof}
The proof is a straightforward adaptation of the proofs of Theorems 1 and 3 in \cite{Parker2020}, and is very similar to the proof of \cite{Parker2021}*{Theorem 1}. Since the stable and unstable manifolds $W_{M,e}^s(0)$ and $W_{M,e}^u(0)$ intersect transversely in $X_{M,e}$, the implementation of Lin's method does not involve jump conditions.
\end{proof}
\end{theorem}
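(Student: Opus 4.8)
The plan is to prove existence via Lin's method, exploiting the fact that on the even subspace $X_{M,e}^2$ the homoclinic orbit $Q_M(n)$ is transverse, so the construction collapses to a contraction mapping with no residual jump conditions. First I would record the exponential dichotomies that make the method work. By \cref{hyp:hyp} the equilibrium $0$ is hyperbolic, so the constant-coefficient equation $W(n+1) = DF_M(0)W(n)$ has an exponential dichotomy on $\Z$ with rate controlled by $r_M$ from \cref{eq:defrM}. Since $Q_M(n) \to 0$ exponentially by \cref{eq:U1decayest}, a roughness argument yields exponential dichotomies for the variational equation \cref{eq:vareq} on each half-line $[0,\infty)$ and $(-\infty,0]$, with stable and unstable projections $P^s(n), P^u(n)$. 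The crucial structural input from \cref{hyp:breather} is that, restricted to $X_{M,e}^2$, equation \cref{eq:vareq} has no nonzero bounded solution: the would-be bounded solution $\dot{Q}_M$ is odd in $t$ and hence excluded from $X_{M,e}^2$. Equivalently, the ranges of the two half-line projections at $n=0$ intersect trivially and together span $X_{M,e}^2$; this is the transversality that removes the jump conditions present in the general formulation of \cites{Sandstede1998,Knobloch2000}.

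Next I would set up the fixed-point problem. On the $i$th interval I write the solution as $\sigma_i Q_M(n) + \tilde{U}_i^\pm(n)$ as in \cref{eq:Upiecewise} and substitute into \cref{eq:dynEqM}, producing a linear-plus-nonlinear difference equation for the remainder, $\tilde{U}_i^\pm(n+1) = DF_M(Q_M(n))\tilde{U}_i^\pm(n) + \mathcal{N}_i^\pm(\tilde{U}_i^\pm)(n)$, where $\mathcal{N}_i^\pm$ collects the quadratic-and-higher terms of $g$ and is $\mathcal{O}(\|\tilde{U}_i^\pm\|^2)$. Using variation of constants against the half-line dichotomies, each $\tilde{U}_i^\pm$ is expressed in terms of prescribed boundary data at the junction $n=0$ and at the far endpoints $n = \pm N_i^\pm$, plus a convolution of $\mathcal{N}_i^\pm$ with the dichotomy Green's function. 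The pieces are then coupled by two families of matching conditions: continuity of $U_i^-$ and $U_i^+$ at $n=0$, and the requirement that the right endpoint of piece $i$ and the left endpoint of piece $i+1$ glue into a single orbit of \cref{eq:dynEqM}. In the general construction on $X_M^2$ these conditions leave a residual jump measured in the $\R Z_M(0)$ direction of \cref{eq:Xdecomp}; but $Z_M(0) = (-\dot{q}_{-1}, \dot{q}_0)^T$ is odd in $t$ and therefore absent from $X_{M,e}^2$, so on the even subspace no solvability condition survives and the matching can be solved directly for the boundary data.

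Existence and uniqueness then follow from the uniform contraction mapping principle. Collecting all remainders into the product space with the exponentially weighted norm implied by \cref{eq:Uestimates}, the coupling between neighboring pieces is governed by the overlap of two exponentially decaying tails and is therefore $\mathcal{O}(r_M^{-N})$ by \cref{eq:U1decayest} and \cref{defN}. Choosing $N^*$ large enough makes the fixed-point map a contraction uniformly in $m$ and in the distances $N_i \geq N^*$, giving a unique $\tilde{U}_i^\pm$. The leading-order estimates in \cref{eq:Uestimates} come from reading off the dominant term of the variation-of-constants formula: the value $\tilde{U}_i^+(N_i^+)$ is, to leading order, the tail $\sigma_{i+1} Q_M(-N_i^-)$ of the neighboring copy reaching back across the gap, with the $\mathcal{O}(r_M^{-2N})$ error arising from the nonlinear self-interaction and second-neighbor overlaps; the interior bounds follow from the dichotomy rates. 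The corresponding derivative estimates follow by applying the same argument to the equation satisfied by the discrete differences of $\tilde{U}_i^\pm$, or from the smoothness of the dichotomy projections in the finite-dimensional space $X_{M,e}^2$.

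The main obstacle I anticipate is not analytic subtlety — because $X_{M,e}^2$ is finite-dimensional, the dichotomies and the stable manifold theorem are classical — but rather the bookkeeping of the matching: verifying that transversality on the even subspace genuinely eliminates the jump equations (so that no solvability condition remains), and tracking the interaction terms finely enough to obtain the stated $\mathcal{O}(r_M^{-2N})$ orders rather than merely $\mathcal{O}(r_M^{-N})$. Since this is precisely the step carried out in \cites{Parker2020,Parker2021}, I would organize the argument to invoke those constructions directly, checking only that the odd symmetry of $\dot{Q}_M$ and $Z_M$ and the even symmetry enforced on the remainders $\tilde{U}_i^\pm$ are compatible with the splicing in \cref{eq:Upiecewise}.
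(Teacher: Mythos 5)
Your proposal is correct and follows essentially the same route as the paper: the paper's proof simply invokes the Lin's-method constructions of \cite{Parker2020}*{Theorems 1 and 3} and \cite{Parker2021}*{Theorem 1}, with the sole structural observation being exactly the one you identify — that transversality of $W_{M,e}^s(0)$ and $W_{M,e}^u(0)$ on the even subspace (where the odd functions $\dot{Q}_M$ and $Z_M$ are absent) eliminates the jump conditions, reducing everything to a uniform contraction. Your write-up supplies the dichotomy/variation-of-constants details that the paper leaves implicit, but there is no difference in method.
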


\subsection{Spectral stability}

For a multi-breather consisting of $m$ components, as long as the individual copies of the primary breather are sufficiently well separated, there will be $2(m-1)$ eigenvalues in spectrum of \cref{eq:DKGMeig}, i.e., one pair per additional copy of the primary breather, which will be located near the origin. We call these interaction eigenmodes, since they result from nonlinear interactions between the tails of adjacent copies of the primary breather.
As in \cites{Parker2020,Sandstede1998}, we locate these eigenmodes by using Lin's method to construct the corresponding eigenfunctions (see \cref{app:specproof} for details). We note that this theorem gives the Floquet exponents $\lambda$, which are the eigenvalues of \cref{eq:DKGMeig}. The Floquet multipliers $\mu$ are related to these by $\mu = e^{\lambda T}$. 
We also note that the eigenvalues found this way are for the finite dimensional approximation \cref{eq:dynEqM} on $X_{M}^2$ for arbitrary, fixed $M$.
The proof is given in \cref{app:specproof}.

\begin{theorem}\label{th:spectrum}
Assume \cref{hyp:hyp}, \cref{hyp:breather}, and \cref{hyp:breathernondegen}, and let $M \geq M_0$, where $M_0$ is defined in \cref{hyp:breather}. Let $Q_M(n) = (q_n, q_{n-1})$ be the primary breather solution 
to the discrete dynamical system \cref{eq:dynEqM} on $X_M^2$
from \cref{hyp:breather}, and let $Y(n) = (y_n, y_{n-1}) = \omega \partial_\omega Q_M(n)$.
Let $U(n)$ be an $m$-component multi-breather constructed as in \cref{th:multi-breathers} with distances $N_i$ and orientation parameters $\sigma_i$. Then there exists $\delta > 0$ with the following property. There exists a bounded, nonzero solution $W(n)$ of the eigenvalue problem \cref{eq:dynEVPM} on $X_M^2$
for $|\lambda| < \delta$ if and only if $E(\lambda) = 0$, where
\begin{equation}\label{Elambda}
E(\lambda) = \det\left(A + \frac{1}{d}K \lambda^2 I + R(\lambda)\right).
\end{equation}
$A$ is the tri-diagonal $m \times m$ matrix
\begin{align}\label{eq:matrixA}
A &= \begin{pmatrix}
-a_1 & a_1 & & & \\
-\tilde{a}_1 & \tilde{a}_1 - a_2 & a_2 \\
& -\tilde{a}_2 & \tilde{a}_2 - a_3 & a_3 \\
& \ddots & & \ddots \\
& & & -\tilde{a}_{m-1} & \tilde{a}_{m-1}  \\
\end{pmatrix},
\end{align}
with
\begin{equation}\label{eq:ai}
\begin{aligned}
a_i &= \sigma_i \sigma_{i+1} \int_0^T \left( \dot{q}_{N_i^+}\dot{q}_{-N_i^- - 1} 
- \dot{q}_{N_i^+ - 1}\dot{q}_{-N_i^-} \right) dt \\
\tilde{a}_i &= \sigma_i \sigma_{i+1} \int_0^T \left( \dot{q}_{-N_i^-} \dot{q}_{N_i^+ - 1} 
- \dot{q}_{-N_i^- - 1}\dot{q}_{N_i^+} \right) dt,
\end{aligned}
\end{equation}
\begin{align}\label{eq:M}
K =
\sum_{n = -\infty}^\infty \int_0^T \left( \dot{q}_n^2 + 2 \dot{y}_n \dot{q}_n \right) dt,
\end{align}
and the remainder term has uniform bound
\begin{equation}\label{eq:Rbound}
\|R(\lambda)(c)\|_{X_M^2} \leq C \left( r_M^{-N} + |\lambda|\right)^3.
\end{equation}
\end{theorem}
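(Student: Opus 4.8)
The plan is to implement Lin's method for the eigenvalue problem \cref{eq:dynEVPM}, exactly paralleling the existence construction in \cref{th:multi-breathers} and the spectral arguments in \cite{Parker2020} and \cite{Parker2021}. Since $X_M^2$ is finite-dimensional and $0$ is hyperbolic for \cref{eq:dynEqM} under \cref{hyp:hyp}, the variational equation \cref{eq:vareq} admits exponential dichotomies on each half-line, with dichotomy rate controlled by $r_M$ from \cref{eq:defrM}. I would build the candidate eigenfunction $W(n)$ piecewise, using one piece per copy of the primary breather. On the $i$-th piece the leading term is $c_i \dot{Q}_M(n)$---the translational mode, which by \cref{hyp:breathernondegen} spans the one-dimensional space of bounded solutions of \cref{eq:vareq}---together with a correction living in the complementary subspaces $Y_M^\pm$ of the decomposition \cref{eq:TWdecomp}. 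The $\lambda$-dependent operator $(2\lambda\partial_t + \lambda^2)B$ is treated as a perturbation of size $\mathcal{O}(|\lambda|)$, and the coupling of each piece to its neighbors' tails is of size $\mathcal{O}(r_M^{-N})$.

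Next I would derive the matching (jump) conditions. On each piece I solve the inhomogeneous variational equation by variation of constants against the dichotomy projections, prescribing the jumps at the midpoints between adjacent breathers to lie in the one-dimensional complement $\R Z_M(0)$ of the decomposition \cref{eq:Xdecomp}, where $Z_M(n) = (-\dot q_{n-1}, \dot q_n)^T$ is the bounded solution of the adjoint variational equation \cref{eq:adjvareq}. The piecewise $W(n)$ is a genuine bounded eigenfunction if and only if every jump vanishes; since each jump lies in $\R Z_M(0)$, it is detected by a single scalar, namely its pairing with $Z_M(0)$. Collecting these scalar conditions over all interfaces produces a homogeneous linear system for the coefficient vector $c = (c_1, \dots, c_m)$ of the form $(A + \tfrac{1}{d}K\lambda^2 I + R(\lambda))\,c = 0$. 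A nonzero bounded $W(n)$ therefore exists for $|\lambda| < \delta$ if and only if this system is solvable, i.e. $E(\lambda) = \det(A + \tfrac1d K\lambda^2 I + R(\lambda)) = 0$, which is the assertion of the theorem.

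It then remains to identify the three contributions explicitly. Pairing the leading parts $c_i\dot Q_M$ and $c_{i+1}\dot Q_M$ of adjacent pieces against $Z_M$ at the $i$-th interface, and using the tail form of the breather recorded in \cref{eq:Uestimates} together with $Z_M(n) = (-\dot q_{n-1}, \dot q_n)^T$, yields the discrete-Wronskian interaction integrals $a_i$ and $\tilde a_i$ of \cref{eq:ai}, and hence the tridiagonal matrix $A$ of \cref{eq:matrixA}; note that the row sums of $A$ vanish, reflecting the persistent translational mode at $\lambda = 0$. The perturbation contributes $\int_0^T Z_M \cdot (2\lambda\partial_t + \lambda^2)B(c_i \dot Q_M)\,dt$ on each piece. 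The linear-in-$\lambda$ term vanishes because the resulting integrand reduces to $\dot q_n \ddot q_n = \tfrac12\partial_t(\dot q_n^2)$, whose integral over a period is zero; this is the key reason no $\lambda^1$ term appears. The $\lambda^2$ term is evaluated using $Y = \omega\partial_\omega Q_M$, which solves $\calL_M(\uvec)\yvec = 2\ddot{\uvec}$ (as established in \cref{sec:findim}), and produces the diagonal coefficient $\tfrac1d K\lambda^2 I$ with $K$ as in \cref{eq:M}. All remaining terms---cross terms between tail interactions and the perturbation, and higher-order pieces of the variation-of-constants expansion---are gathered into $R(\lambda)$ and bounded using the exponential tail estimates \cref{eq:Uestimates} and the smallness of $\lambda$, giving the uniform bound \cref{eq:Rbound}.

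The main obstacle is not the functional-analytic input---exponential dichotomies and the stable manifold theorem are routine in the finite-dimensional space $X_M^2$---but the careful bookkeeping of the expansion. In particular, the hard parts will be: correctly sorting which terms contribute to $A$, to $\tfrac1d K\lambda^2 I$, and to $R(\lambda)$; verifying the cancellation of the linear-in-$\lambda$ contribution by periodicity; evaluating the $\lambda^2$ coefficient through the generalized-kernel relation $\calL_M(\uvec)\yvec = 2\ddot{\uvec}$ to obtain the explicit form of $K$; and establishing the cubic remainder bound $(r_M^{-N} + |\lambda|)^3$, which requires tracking the mixed $\mathcal{O}(r_M^{-N})\cdot\mathcal{O}(|\lambda|)$ cross terms uniformly in both the separation $N$ and the spectral parameter $\lambda$.
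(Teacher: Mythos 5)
Your overall architecture---piecewise Lin's-method ansatz, exponential dichotomies for the variational equation \cref{eq:vareq}, variation of constants, scalar jump conditions obtained by pairing with the adjoint bounded solution $Z_M$, and reduction to the determinant condition \cref{Elambda}---is exactly the route the paper takes. There is, however, one concrete gap in the expansion. The paper's ansatz is $W_i^\pm(n) = c_i\bigl(\dot{U}_i^\pm(n) + \lambda Y_i^\pm(n)\bigr) + \tilde{W}_i^\pm(n)$ with $Y = \omega\partial_\omega U$: the term $\lambda c_i Y$ is built into the leading order from the start. This is not cosmetic. Upon substitution into \cref{eq:dynEVPM}, the identity $Y(n+1) = DF_M(U(n))Y(n) + 2\partial_t B\dot{U}(n)$ cancels the order-$\lambda$ forcing $2\lambda c_i B\partial_t\dot{U}(n)$ exactly, and the residual inhomogeneity at order $\lambda^2$ is $c_i B[2\partial_t Y(n) + \dot{U}(n)]$; pairing this against $Z_M(n+1)$ and summing over $n$ is what produces the full coefficient $K = \sum_n\int_0^T(\dot{q}_n^2 + 2\dot{y}_n\dot{q}_n)\,dt$. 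Your ansatz has leading term $c_i\dot{Q}_M$ only, and you compute the perturbation's contribution as $\int Z_M\cdot(2\lambda\partial_t+\lambda^2)B(c_i\dot{Q}_M)$; at order $\lambda^2$ that yields only $\sum_n\int\dot{q}_n^2\,dt$ and misses the cross term $2\dot{y}_n\dot{q}_n$. Your periodicity argument correctly kills the order-$\lambda$ contribution \emph{to the jump}, but it does not account for the order-$\lambda$ response of the correction itself, which is precisely $\lambda c_i Y$ (since $Y$ solves the inhomogeneous variational equation with that forcing) and which feeds back into the jump at order $\lambda^2$ through the operator $2\lambda\partial_t B$. Either include $\lambda Y$ in the ansatz as the paper does, or solve for the correction to first order in $\lambda$ before evaluating the $\lambda^2$ jump; as written your plan would produce the wrong $K$.

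A secondary bookkeeping difference: the paper imposes exact matching at the interfaces $n=N_i^+$, $n=-N_i^-$ (the $C_i c$ conditions in \cref{eq:EVPsystem}) and allows the jump, constrained to $\C Z_M(0)$, at the center $n=0$ of each breather, where $Z_M$ is $\mathcal{O}(1)$ and genuinely spans the complement in the decomposition \cref{eq:Xdecomp}. You place the jumps at the midpoints between breathers, where $Z_M$ is exponentially small; that convention can be made to work but obscures the leading-order evaluation of $a_i$ and $\tilde{a}_i$, and you should adopt the paper's placement if you want the pairing computation in \cref{eq:jump2} to come out as stated.
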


\begin{remark}\label{remark:solvability}
Substituting $y_n = \omega \partial_\omega q_n$, we can write \cref{eq:M} as 
\begin{align}\label{eq:M2}
K =
\sum_{n = -\infty}^\infty \int_0^T \dot{q}_n \left( 2 \omega \partial_\omega \dot{q}_n + \dot{q}_n \right) dt,
\end{align}
which is the solvability condition from \cite{kevrekidis2016}. Integrating by parts, we can also write \cref{eq:M} as 
\begin{align}\label{eq:M3}
K =
\sum_{n = -\infty}^\infty \int_0^T \left( \dot{q}_n^2 - 2 y_n \ddot{q}_n \right) dt =
\sum_{n = -\infty}^\infty \int_0^T \left( \dot{q}_n^2 - 2 \omega (\partial_\omega q_n) \ddot{q}_n \right) dt.
\end{align}
In the AC limit, since the primary breather comprises a single excited site $\phi(t)$ at one of the lattice points, equation \cref{eq:M3} becomes
\begin{align}\label{eq:MAC}
K = \int_0^T \left( \dot{\phi}^2 - 2 \omega(\partial_\omega \phi) \ddot{\phi} \right) dt = - \frac{T^2(E)}{T'(E)},
\end{align}
by the proof of \cite{Pelinovsky2012}*{Lemma 2}, where $T(E)$ is defined in \cref{eq:TE}, noting that $-\omega \partial_\omega \phi$ corresponds to $v$ in that reference.
\end{remark}

\begin{corollary}\label{corr:even}
If the primary breather $Q_M = (q_n, q_{n-1})$ has the even spatial symmetry $q_{-n} = q_n$, the matrix $A$ in \cref{th:spectrum} simplifies to
\begin{align}\label{eq:matrixAsymm}
A &= \begin{pmatrix}
-a_1 & a_1 & & & \\
a_1 & -a_1 - a_2 & a_2 \\
& a_2 & -a_2 - a_3 & a_3 \\
& \ddots & & \ddots \\
& & & a_{m-1} & -a_{m-1}  \\
\end{pmatrix},
\end{align}
with $a_i = \sigma_i \sigma_{i+1} b_i$ and
\begin{align*}
b_i &= \begin{cases}
\int_0^T \dot{q}_{\frac{N_i}{2}}\left( \dot{q}_{\frac{N_i}{2}+1} - \dot{q}_{\frac{N_i}{2}-1}\right) dt & N_i \text{ even} \\
\int_0^T \left( \dot{q}_{\frac{N_i-1}{2}}\dot{q}_{\frac{N_i+3}{2}} - \dot{q}_{\frac{N_i+1}{2}}\dot{q}_{\frac{N_i-3}{2}} \right) dt & N_i \text{ odd.}
\end{cases}
\end{align*}
The matrix $A$ has one eigenvalue at 0, and the remaining $(m-1)$ eigenvalues $\{\mu_1, \dots, \mu_{m-1}\}$ are real and distinct. As long as the continuous spectrum does not interfere, there are $(m-1)$ pairs of interaction eigenmodes, which are given by 
\begin{align}\label{eq:inteigs}
\lambda_j &= \pm \sqrt{-\frac{d \mu_j}{K}} + \mathcal{O}(r^{-2N}) && j = 1, \dots, m-1.
\end{align}
These are either real or imaginary by Hamiltonian symmetry.
\begin{proof}
This follows from \cref{th:spectrum} using same argument as in \cite{Parker2020}*{Theorem 5}.
\end{proof}
\end{corollary}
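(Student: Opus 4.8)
The plan is to specialize \cref{th:spectrum} to the even-symmetric case and then analyze the resulting matrix pencil. First I would impose the spatial symmetry $q_{-n} = q_n$ on the coefficients $a_i$ and $\tilde a_i$ from \cref{eq:ai}. Writing $N_i = N_i^+ + N_i^-$ with $N_i^+ = \lfloor N_i/2 \rfloor$, I would substitute $\dot q_{-k} = -\dot q_k$ (since $q$ even forces $\dot q$ odd in $t$? no---rather $q_{-n}=q_n$ is a \emph{spatial} symmetry, so $\dot q_{-n} = \dot q_n$ as functions of $t$) into the two integrals. Carefully tracking the floor/ceiling split in the even versus odd $N_i$ cases, the four-index expressions for $a_i$ and $\tilde a_i$ should collapse to the single quantity $b_i$ with the stated piecewise formula, and crucially one should find $\tilde a_i = a_i$, which is exactly what converts the general tridiagonal matrix \cref{eq:matrixA} into the symmetric form \cref{eq:matrixAsymm}. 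This is the routine but bookkeeping-heavy part.

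Next I would establish the spectral properties of the symmetric matrix $A$ in \cref{eq:matrixAsymm}. This matrix is (up to sign) a weighted graph Laplacian on a path with $m$ vertices and edge weights $a_i$: each row sums to zero, so the all-ones vector lies in its kernel, giving the claimed eigenvalue at $0$. Since $A$ is real symmetric, all eigenvalues are real; that they are distinct follows because a Jacobi (tridiagonal, symmetric, with nonzero off-diagonal entries $a_i \neq 0$) matrix has simple spectrum---a standard fact I would invoke, assuming the $a_i$ are nonzero, which holds generically for well-separated breathers. This yields the real, distinct eigenvalues $\{\mu_1, \dots, \mu_{m-1}\}$ together with the zero eigenvalue.

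Then I would turn to the reduced eigenvalue equation $E(\lambda) = 0$ from \cref{Elambda}. Treating $\lambda$ as small and the remainder $R(\lambda)$ as higher order by \cref{eq:Rbound}, the leading balance is $\det(A + \tfrac{1}{d} K \lambda^2 I) = 0$, i.e. $-\tfrac{1}{d}K\lambda^2$ must be an eigenvalue of $A$. Solving $\mu_j = -\tfrac{1}{d}K\lambda^2$ for each nonzero $\mu_j$ gives $\lambda_j^2 = -\tfrac{d\mu_j}{K}$, hence the two square roots in \cref{eq:inteigs}; the zero eigenvalue of $A$ recovers the persistent $\lambda = 0$ mode from translational symmetry. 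The $\mathcal O(r^{-2N})$ correction comes from incorporating $R(\lambda)$ via a perturbation argument (implicit function theorem applied to each simple root), exactly as in \cite{Parker2020}*{Theorem 5}. Finally, the reality-or-imaginary dichotomy of each $\lambda_j$ follows from Hamiltonian symmetry: $\lambda_j^2$ is real, so $\lambda_j$ is either real or purely imaginary depending on the sign of $-d\mu_j/K$.

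The main obstacle I anticipate is the first step---correctly reducing $a_i$ and $\tilde a_i$ to $b_i$ and verifying $\tilde a_i = a_i$ in both the even-$N_i$ and odd-$N_i$ cases. The index arithmetic with $N_i^+ = \lfloor N_i/2\rfloor$ and $N_i^- = \lceil N_i/2 \rceil$ is delicate, and the even/odd distinction in the formula for $b_i$ is precisely where an off-by-one slip would produce the wrong integrand. Everything downstream (the graph-Laplacian structure, the simplicity of the spectrum, and the square-root formula for $\lambda_j$) is then either standard linear algebra or a direct transcription of the argument in \cite{Parker2020}, so I would expect those steps to go through without serious difficulty.
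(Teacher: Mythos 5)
Your overall route is the same one the paper intends (the paper's proof is a one-line deferral to \cite{Parker2020}*{Theorem 5}): reduce the coefficients using the spatial symmetry, read off the spectrum of the resulting weighted path Laplacian (zero row sums give the kernel vector $(1,\dots,1)^T$, symmetry gives real eigenvalues, the Jacobi structure with nonzero off-diagonal entries gives simplicity), and then solve $\det\left(A + \tfrac{1}{d}K\lambda^2 I\right) = 0$ to leading order and perturb each simple root to absorb $R(\lambda)$. Steps two and three of your proposal are correct and are exactly what the cited argument does, including the observation that $\lambda_j^2 = -d\mu_j/K$ is real so each $\lambda_j$ is real or purely imaginary.

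There is, however, a concrete sign error in your first step. Comparing \cref{eq:matrixA} with \cref{eq:matrixAsymm}, the $(i+1,i)$ entry is $-\tilde a_i$ in the former and $a_i$ in the latter, so the relation you need is $\tilde a_i = -a_i$, not $\tilde a_i = a_i$ as you assert. If you carried $\tilde a_i = a_i$ through, the second row would read $(-a_1,\ a_1-a_2,\ a_2)$, the matrix would not be symmetric, and your subsequent ``real symmetric Jacobi matrix'' argument would not apply. In fact, $\tilde a_i = -a_i$ holds identically from \cref{eq:ai} with no symmetry assumption at all: the integrand of $\tilde a_i$ is term-by-term the negative of the integrand of $a_i$. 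The even spatial symmetry $q_{-n}=q_n$ (hence $\dot q_{-n}=\dot q_n$, as you correctly note mid-sentence) is needed only to rewrite $a_i$ as $\sigma_i\sigma_{i+1}b_i$ with the stated nonnegative-index integrands; your worry about the floor/ceiling bookkeeping there is legitimate but the stated formulas for $b_i$ do check out in both parities. Fix the sign of the $\tilde a_i$ relation and the rest of your argument goes through.
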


Two special cases to consider are the double breather ($m=2$), and the multi-breather where all copies of the primary breather are equally spaced. For a double breather, the pair of interaction eigenmodes is given by
\begin{align}\label{eq:inteigsdouble}
\lambda &= \sqrt{\frac{2 \sigma_1 \sigma_2 b_1 d}{K}} + \mathcal{O}(r^{-2N}).
\end{align}
If $b_1$ and $K$ have the same sign, the in-phase double breather is spectrally unstable, and the out-of-phase double breather is spectrally neutrally stable. The reverse is true if $b_1$ and $K$ have opposite signs. For an $m$-site multi-breather, if $N_i = N_1$ for all $i$, i.e., the excited sites are equally spaced, then $b_i = b_1$ for all $i$. The matrix $A$ from \cref{eq:matrixAsymm} reduces to $A = b_1 S$, where 
\begin{align}\label{eq:matrixequal}
S &= \begin{pmatrix}
-\sigma_1 \sigma_2 & \sigma_1 \sigma_2 & & & \\
\sigma_1 \sigma_2 & -\sigma_2(\sigma_1+\sigma_3) & \sigma_2 \sigma_3 \\
& \sigma_2 \sigma_3 & -\sigma_3(\sigma_2+\sigma_4) & \sigma_3 \sigma_4 \\
& \ddots & & \ddots \\
& & & \sigma_{m-1}\sigma_m & -\sigma_{m-1}\sigma_m  \\
\end{pmatrix}.
\end{align}
The eigenvalues of $S$ depend only on the phase differences $\sigma_i \sigma_{i+1}$ between consecutive excited sites, and not on the individual phases $\sigma_i$. We note that this matrix reduction of the eigenvalue problem has the same form as that in equation (35) of \cite{Pelinovsky2012}*{Lemma 2}. In particular, the matrix $S$ in \cref{eq:matrixequal} and the matrix in that lemma are similar (with similarity transformation given by $P = \text{diag}(\sigma_1, \dots, \sigma_m)$), and thus have the same eigenvalues. The quantity $b_1$ corresponds to $K_N$ in that lemma. For $i = 1, \dots, m-1$, let $p_i = \sigma_i\sigma_{i+1}$ be the phase differences between consecutive copies of the primary breather, where $p_i = 1$ indicates in-phase, and $p_i = -1$ indicates out-of-phase. Let $n_-$ be the number of negative $p_i$, and $n_+$ the number of positive $p_i$. By \cite{Sandstede1998}*{Lemma 4} (see also the proof of \cite{Pelinovsky2012}*{Lemma 3}), $S$ has a single eigenvalue at 0, $n_+$ positive eigenvalues, and $n_-$ negative eigenvalues. If $b_1$ and $K$ have the same sign, there are $n_+$ pairs of unstable interaction eigenmodes, and $n_-$ pairs of neutrally stable interaction eigenmodes. Thus, the only neutrally stable multi-breather is one in which each pair of consecutive copies of the primary breather is out-of-phase. The reverse is true if $b_1$ and $K$ have opposite signs. For multi-breathers in which the excited sites are not equally spaced, the stability pattern depends on the relative signs of each $b_i$ and $K$, following \cite{Sandstede1998}*{Lemma 4}.

\section{Numerical results}\label{sec:numerics}

We construct both the primary breather and multi-breathers by numerical parameter continuation from the AC limit using the software package AUTO \cite{auto07p}. To do this, we first choose an energy level $E$, from which we compute the period $T$ using \cref{eq:TE}, and then we determine the AC breather solution $\phi(t)$ by solving \cref{eq:singlesiteAC} with the initial conditions given in \cref{sec:DKGbreather}. For the starting solution for the parameter continuation at $d = 0$, we take $u_k(t) = \pm \phi(t)$ at a finite set of lattice points, and $u_k(t) = 0$ everywhere else. In general, we will take the initial excited sites to be well separated in the lattice, since the results above only hold for well-separated copies of the primary breather. That being said, this numerical method is effective for any starting configuration at the AC limit (see \cref{fig:SGintersite}, for example, for a double breather comprising two adjacent excited sites).
For the parameter continuation, we use the equation
\begin{equation}\label{eq:HformAUTO}
\frac{d}{dt}\begin{pmatrix} u_n \\ v_n \end{pmatrix} = 
\begin{pmatrix} 0 & 1 \\ -1 & 0 \end{pmatrix}\begin{pmatrix} \partial \calH / \partial u_n \\ \partial \calH / \partial v_n \end{pmatrix} + 
\epsilon \begin{pmatrix} \partial \calH / \partial u_n \\ \partial \calH / \partial v_n \end{pmatrix},
\end{equation}
where $\calH$ is the Hamiltonian defined by \cref{eq:H}, and $\epsilon$ is a small parameter used to break the Hamiltonian structure. The parameter $\epsilon$ is necessary since homoclinic orbits and periodic orbits in Hamiltonian systems are generally codimension zero phenomena, i.e. they persist as a system parameter (the coupling parameter $d$ in this case) is varied; the additional parameter $\epsilon$ converts the problem into a codimension one problem (see \cites{Champneys1997,Beyn1990} for more details on this technique). As the coupling parameter $d$ is varied in \cref{eq:HformAUTO} by numerical parameter continuation, the Hamiltonian-breaking parameter $\epsilon$ is approximately of order $10^{-15}$, thus is negligible.
To avoid continuing in the direction of translation symmetry, we add the phase condition
\[
\langle \dot{u}_k^\text{old}, u_k - u_k^\text{old} \rangle =
\int_0^T v_k^\text{old}( u_k - u_k^\text{old}) dt = 0,
\]
where $u_k^\text{old}$ refers to the function at the previous continuation step. The index $k$ is chosen to be one of the lattice points at which the initial condition is $\phi(t)$. We approximate the integer lattice with a finite lattice of length $2L+1$, numbered from $-L$ to $L$, and we take Dirichlet boundary conditions at the ends of the lattice, i.e. $u_{L+1} = u_{-L-1} = 0$. Thus equation \cref{eq:Hform} becomes an ODE in $4L+2$ dimensions, with periodic boundary conditions imposed on each variable. Once a breather solution has been found, its Floquet spectrum is found numerically by using Matlab's eigenvalue solver \texttt{eig} to compute the eigenvalues of the monodromy operator associated with the linearized system \cref{eq:DKGlinear1}. Since \cref{eq:DKGlinear1} results from writing a second order linear ODE in $2L+1$ dimensions as first order linear system in $4L+2$ dimensions, the corresponding eigenvectors are of the form $(v_n, w_n)$, where $v_n$ and $w_n$ are $(2L+1)$-dimensional.

\subsection{Discrete sine-Gordon equation}

We first consider the discrete sine-Gordon equation, where the nonlinearity is given by $f(u) = \sin u$. For all of the plots in the section, unless otherwise noted, we use a lattice size $L = 15$, and the energy at the AC limit is $E = 0.25$, which corresponds to a period of $T = 7.91$. We note that $2 \pi < T < 3 \pi$, thus the upper continuous spectrum band has positive Krein signature, and the lower band has negative Krein signature (see \cref{fig:bands}, left).
\cref{fig:singlea} shows the primary breather solution for $d = 0.25$ at $t = 0$, which is symmetric on the spatial lattice. The corresponding Floquet spectrum is shown in \cref{fig:singleb}. There is a single Floquet multiplier (with multiplicity 2) at 1, which comes from translation symmetry in $t$. The continuous spectrum bands are a discrete set of points, which is a result of the finiteness
of the lattice and of the spatial discretization, and have effectively merged into a single band by this value of $d$. 
\cref{fig:singlec} shows the exponential decay of the $L^2$ norm on $[0,T]$ of the primary breather solution $u_n(t)$ at lattice site $n$, as $n$ is increased. \cref{fig:singled} plots the relative error of this decay rate compared to $r_M$, which is defined in \cref{eq:defrM}. The exponential decay of the coefficients of the Fourier series expansion (in $t$) of the primary breather is shown in \cref{fig:freqdecaya}. As expected, since the breather is even in $t$, only even frequencies are represented in the frequency spectrum. Furthermore, the scaled amplitudes for all frequencies $k\geq 20$ are below $10^{-10}$, suggesting that the finite dimensional approximation is reasonable. 
\cref{fig:freqdecayb} shows the exponential decay of the $L^2$ norm difference $\| Q_M(n) - Q_{512}(n) \|_{\ell^2(\Z, L^2_\per[0,T])}$ between the solution $Q_M(n)$ truncated after $M$ Fourier nodes and the solution $Q_{512}(n)$ with 512 Fourier nodes, suggesting that an approximation with 32 Fourier nodes is more than adequate for this set of parameters.

\begin{figure}
	\begin{center}
	\begin{subfigure}{0.45\linewidth}
		\caption{}
		\includegraphics[width=7.5cm]{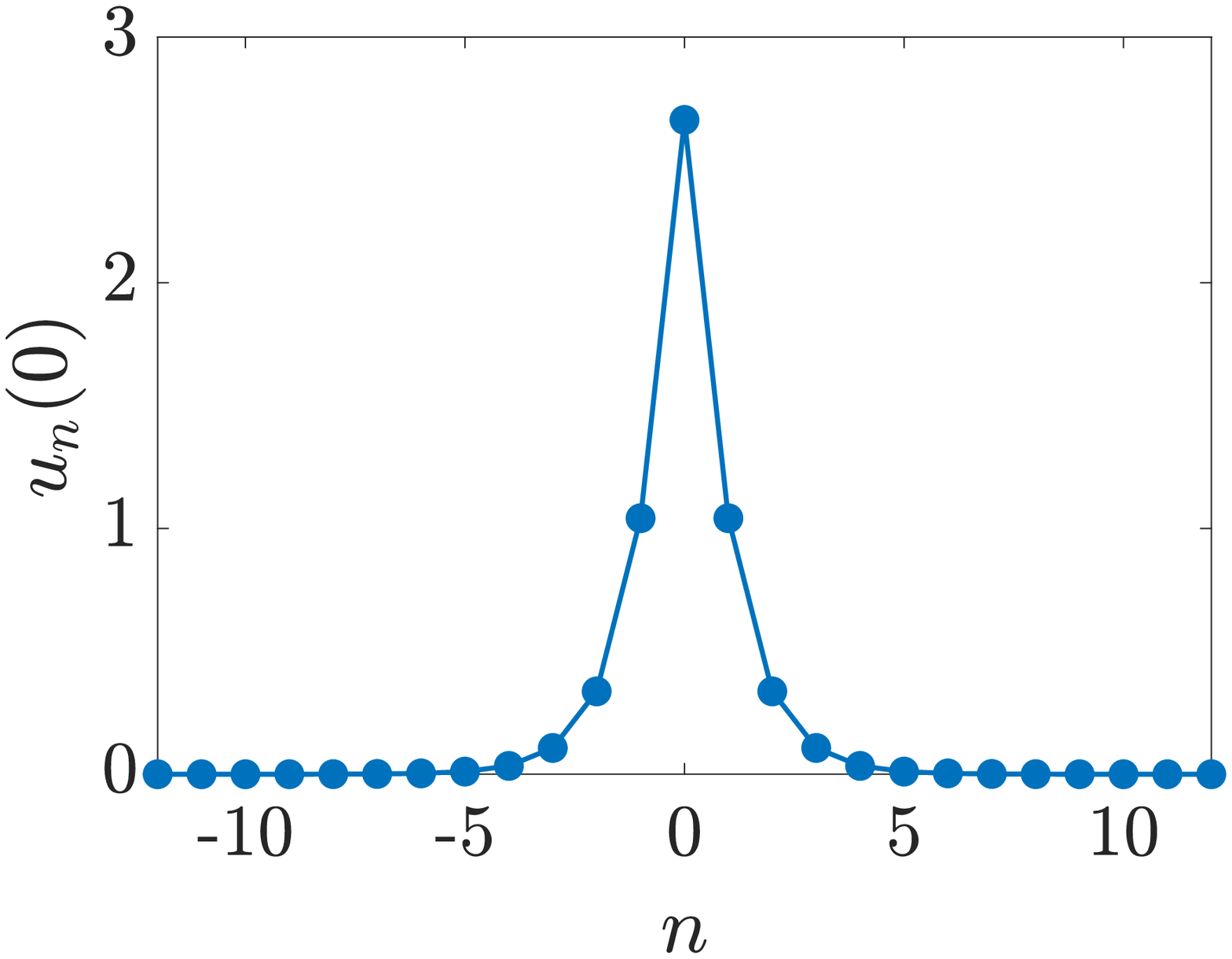}
		\label{fig:singlea}
	\end{subfigure}
	\begin{subfigure}{0.45\linewidth}
		\caption{}
		\includegraphics[width=7.5cm]{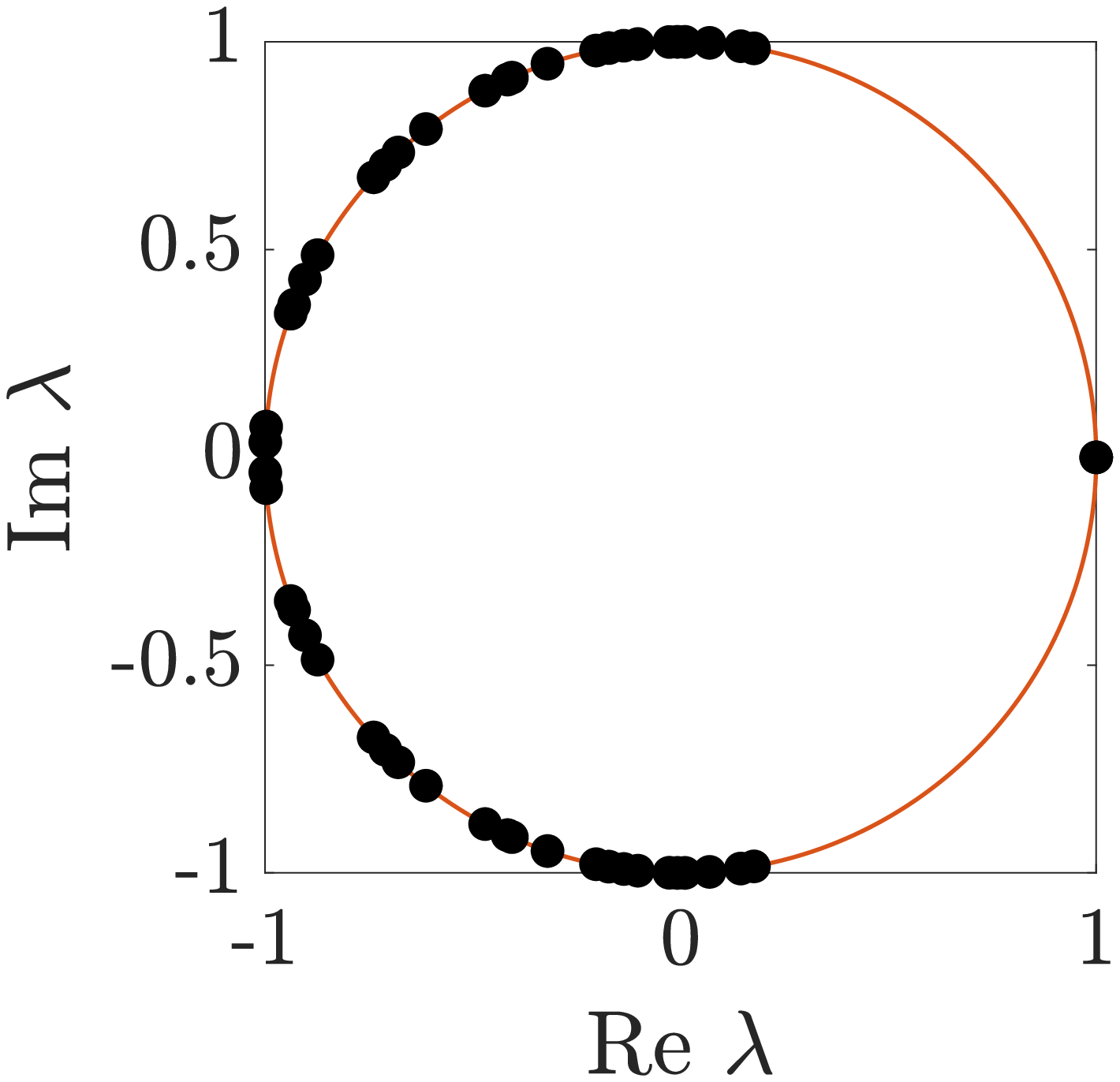}
		\label{fig:singleb}
	\end{subfigure}
	\begin{subfigure}{0.45\linewidth}
		\caption{}
		\includegraphics[width=7.5cm]{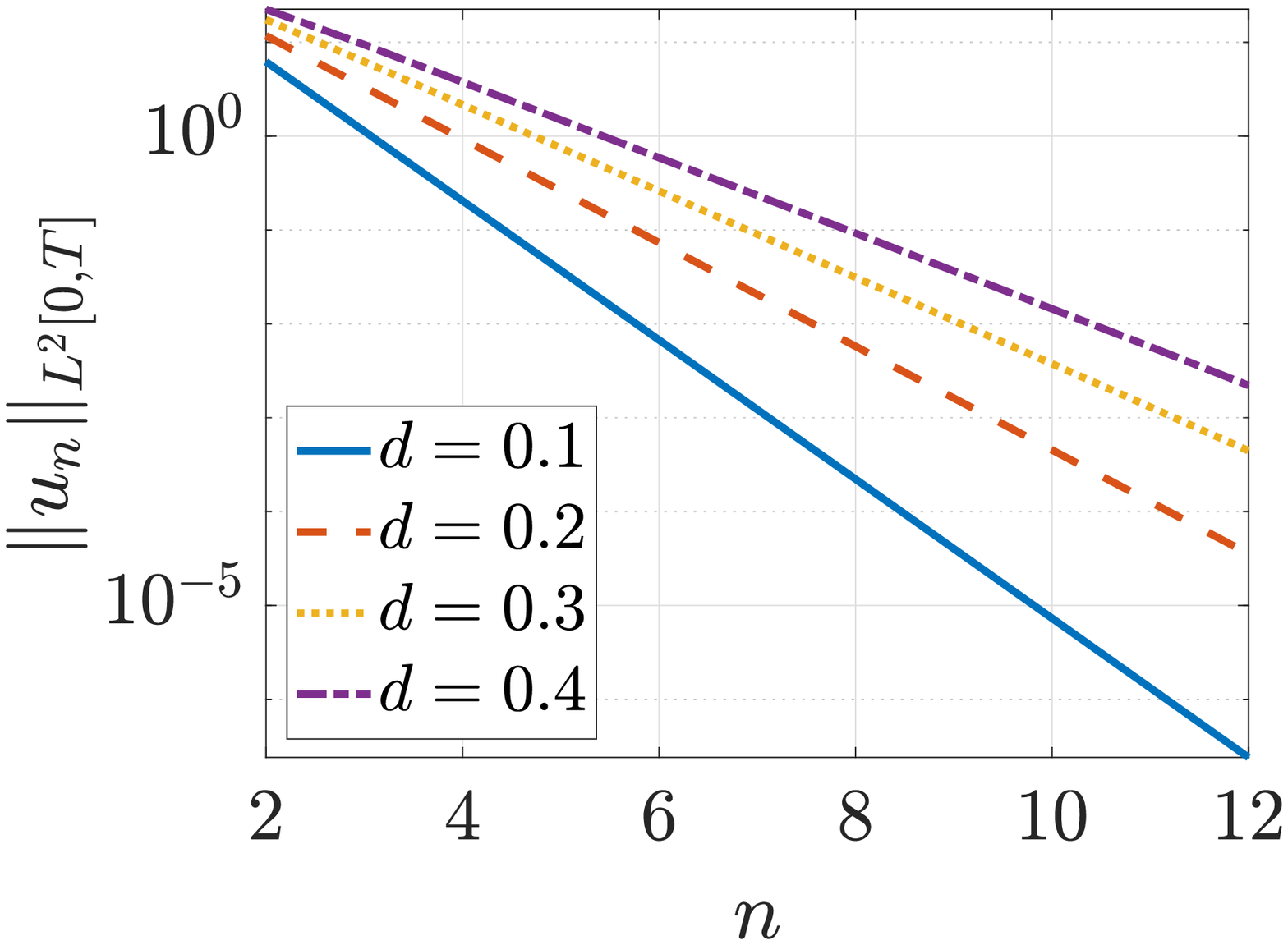}
		\label{fig:singlec}
	\end{subfigure}
	\begin{subfigure}{0.45\linewidth}
		\caption{}
		\includegraphics[width=7.5cm]{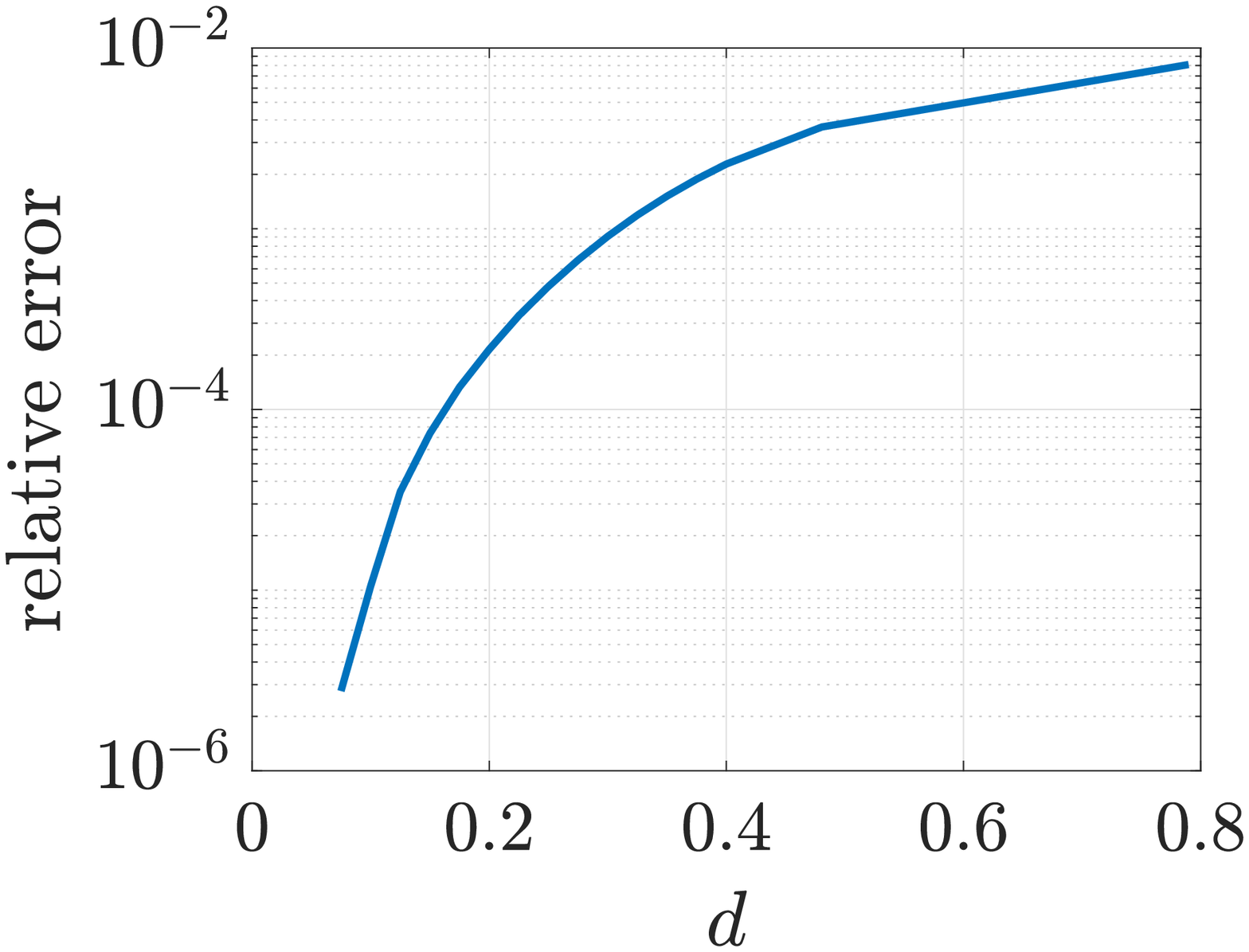}
		\label{fig:singled}
	\end{subfigure}
	\end{center}
	\caption{(a) Initial condition $u_n(0)$, and (b) Floquet spectrum, for primary breather for $d = 0.25$ for discrete sine-Gordon. (c) Semilog plot of $L^2$ norm of solution $u_n(t)$ over one period $[0,T]$ vs. the lattice index $n$ for varying coupling parameter $d$. (d) Relative error of exponential decay rate seen in (c) versus that predicted by equation \cref{eq:U1decayest}. }
	\label{fig:single}
\end{figure}

\begin{figure}
	\begin{center}
	\begin{subfigure}{0.45\linewidth}
		\caption{}
		\includegraphics[width=7.5cm]{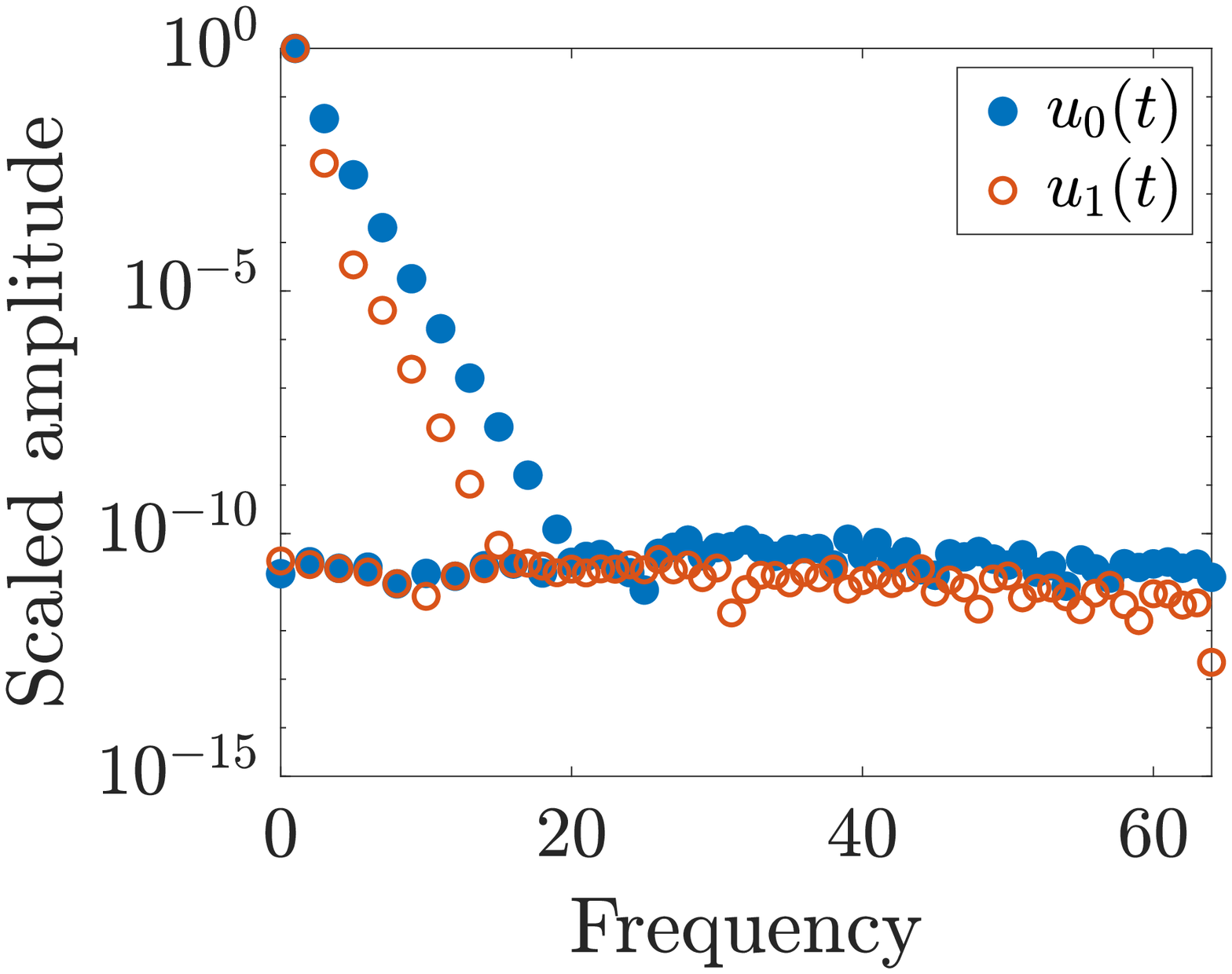}
		\label{fig:freqdecaya}
	\end{subfigure}
	\begin{subfigure}{0.45\linewidth}
		\caption{}
		\includegraphics[width=7.5cm]{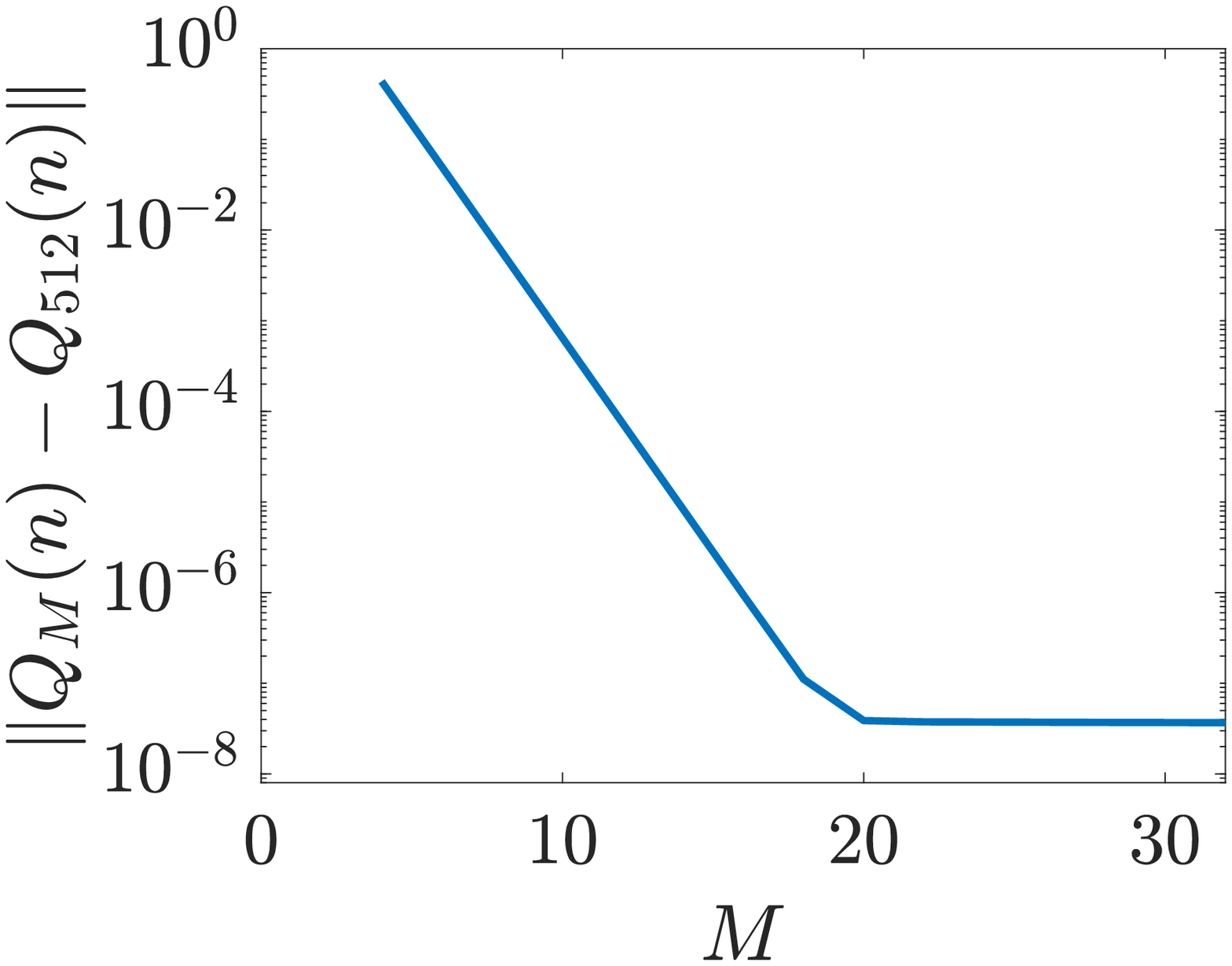}
		\label{fig:freqdecayb}
	\end{subfigure}
	\end{center}
	\caption{(a) Semilog plot showing the decay of the coefficients of the Fourier series expansion for a single-site breather at the center site $u_0(t)$ and neighboring site $u_1(t)$ for discrete sine-Gordon. The vertical axis is rescaled so that the fundamental frequency has amplitude of 1. 
	(b) Semilog plot showing the $L^2$ norm difference between the solution $Q_M(n)$ truncated after $M$ Fourier nodes and the solution $Q_{512}(n)$ with 512 Fourier nodes. Coupling parameter $d=0.25$ in both cases.}
	\label{fig:freqdecay}
\end{figure}

\cref{fig:double} shows the initial conditions (solution at $t=0$) for the out-of-phase (left) and in-phase (right) double breather for $N_1 = 6$ at $d = 0.25$, together with their Floquet spectra and the eigenfunction corresponding to the interaction eigenmode. The out-of-phase double breather has a pair of interaction eigenmodes on the unit circle, thus is spectrally neutrally stable, while the in-phase double breather has a pair of real interaction eigenmodes off of the unit circle and on the positive real axis, thus is spectrally unstable. This same pattern holds for all $N_1$, both even and odd. We note that for the solvability condition, we have $K < 0$ for the sine-Gordon potential, and that the quantity $b_1 < 0$ in \cref{eq:inteigsdouble} for both even and odd $N_1$. Since $K$ and $b_1$ have the same sign for all $N_1$, this agrees with the stability predictions following equation \cref{eq:inteigsdouble}.

\cref{fig:eigerrora} shows the exponential decay of the Floquet exponents $\lambda$ (as computed from the monodromy matrix) for in-phase double breathers as the separation distance $N_1$ is increased. (Recall that the Floquet exponents $\lambda$ and Floquet multipliers $\mu$ are related by $\mu = e^{\lambda T}$).
This is similar for out-of-phase double breathers (not shown). The remaining panels in \cref{fig:eigerror} show the relative error between the computation of these Floquet eigenmodes using the formula \cref{eq:inteigsdouble} and the computation of these modes from the monodromy matrix. As expected, the relative error increases with the coupling parameter $d$, and decreases with the separation distance $N_1$. \cref{fig:SGintersite} shows the initial condition $u_n(0)$ and Floquet spectra for two special two-site breathers, for which the two excited sites at the AC limit are in adjacent lattice nodes. We will call these inter-site centered breathers, in analogy to the inter-site centered soliton in DNLS \cite{Kevrekidis2009}.
As predicted by \cite{cuevas-maraver2016}, the in-phase inter-site centered breather is unstable, and the out-of-phase inter-site centered breather is spectrally stable, at least for small $d$. Since the two copies of the primary breather in the inter-site centered breather are not well separated, the eigenvalue results of \cref{sec:multi} do not apply. This is a
distinguishing feature between the theory presented herein
and the earlier results, e.g., of~\cites{Archilla2003,Koukouloyannis2009}. The latter are
well-tailored to this case, while the theory presented herein
is suitable for the case of well-separated breathers, as 
described above.

\begin{figure}
	\begin{center}
	\begin{subfigure}{0.45\linewidth}
		\caption{}
		\includegraphics[width=7.5cm]{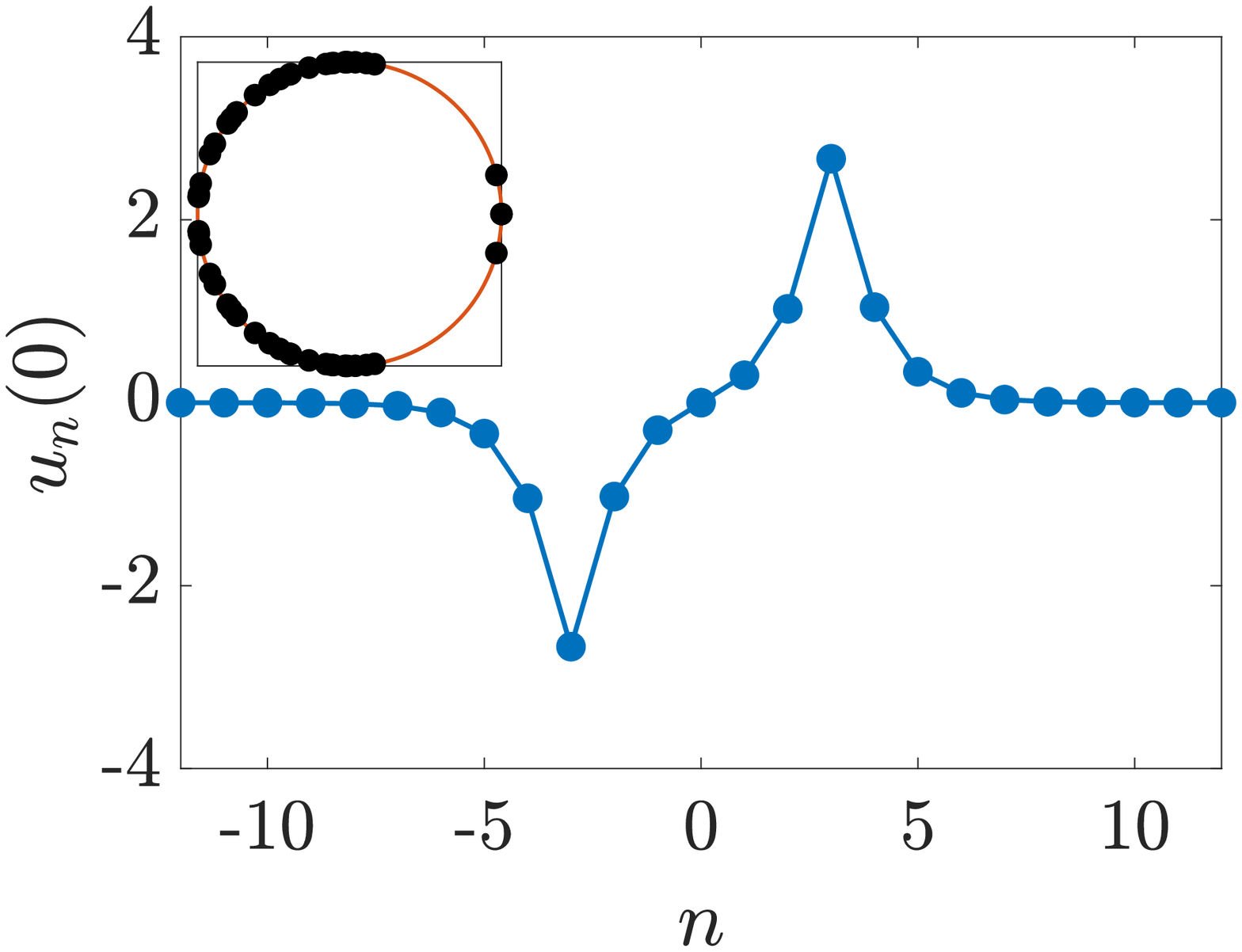} \hspace{-0.5cm}
		\label{fig:doublea}
	\end{subfigure}
	\begin{subfigure}{0.45\linewidth}
		\caption{}
		\includegraphics[width=7.5cm]{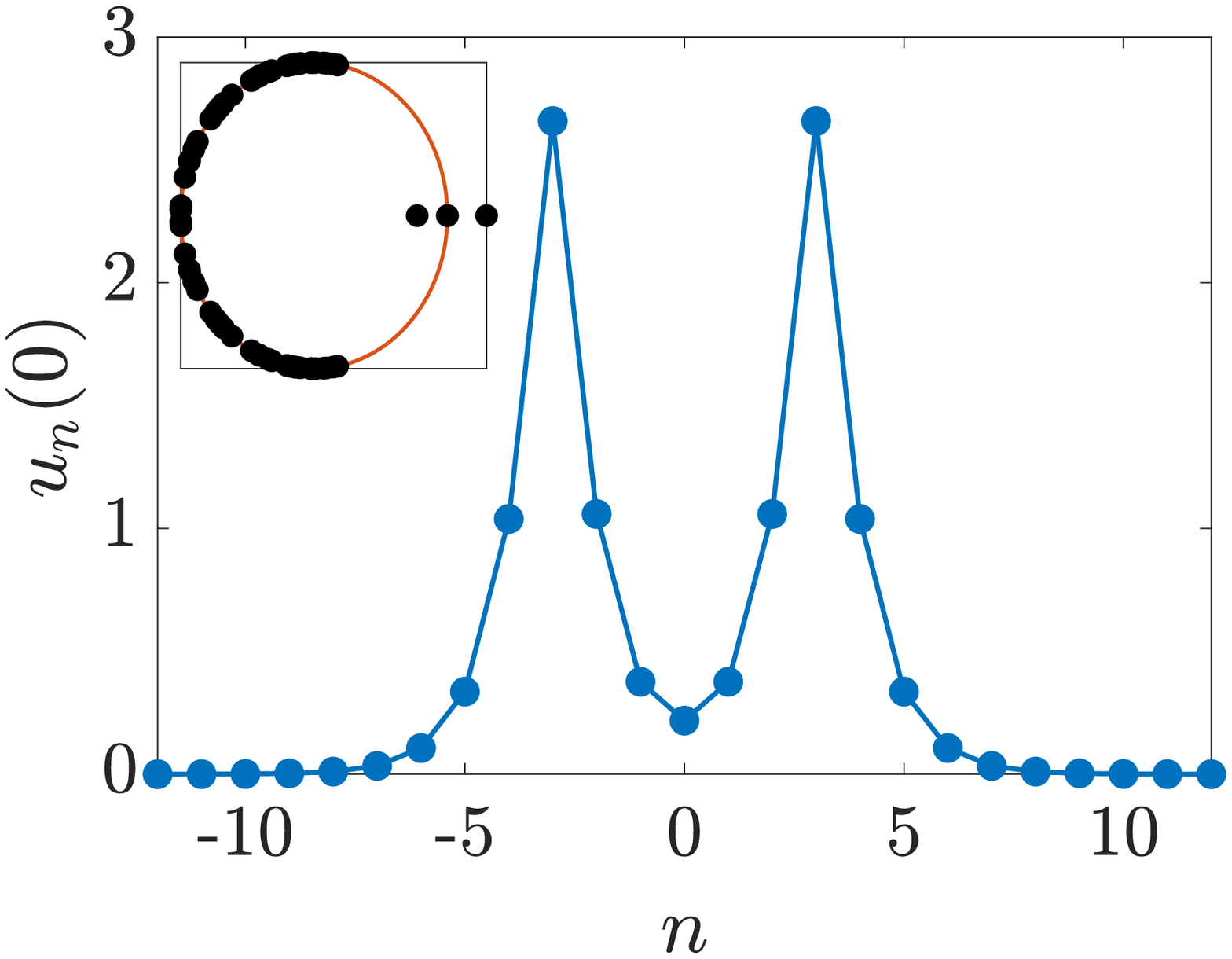}
		\label{fig:doubleb}
	\end{subfigure}
	\begin{subfigure}{0.45\linewidth}
		\caption{}
		\includegraphics[width=7.5cm]{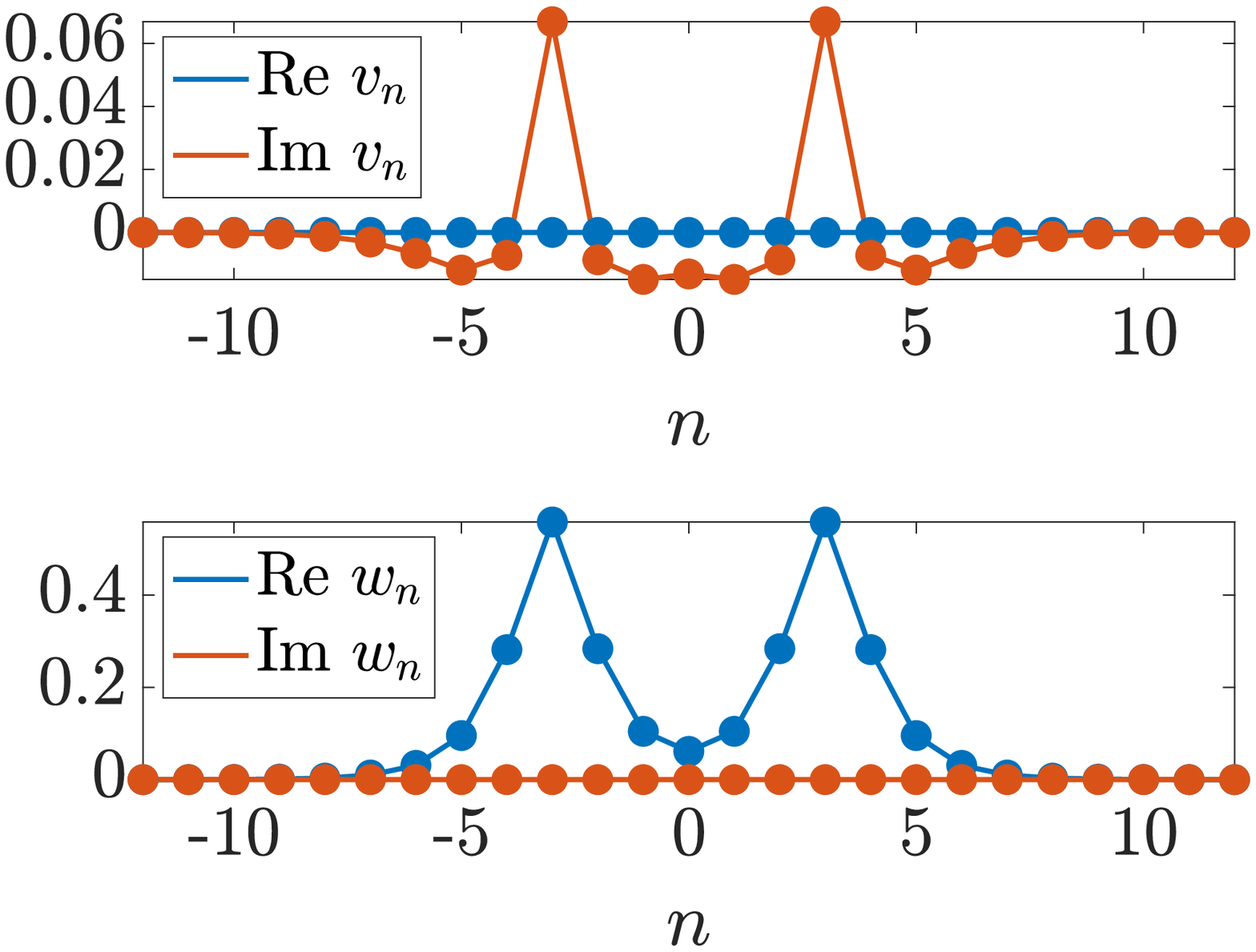} \hspace{-0.5cm}
		\label{fig:doublec}
	\end{subfigure}
	\begin{subfigure}{0.45\linewidth}
		\caption{}
		\includegraphics[width=7.5cm]{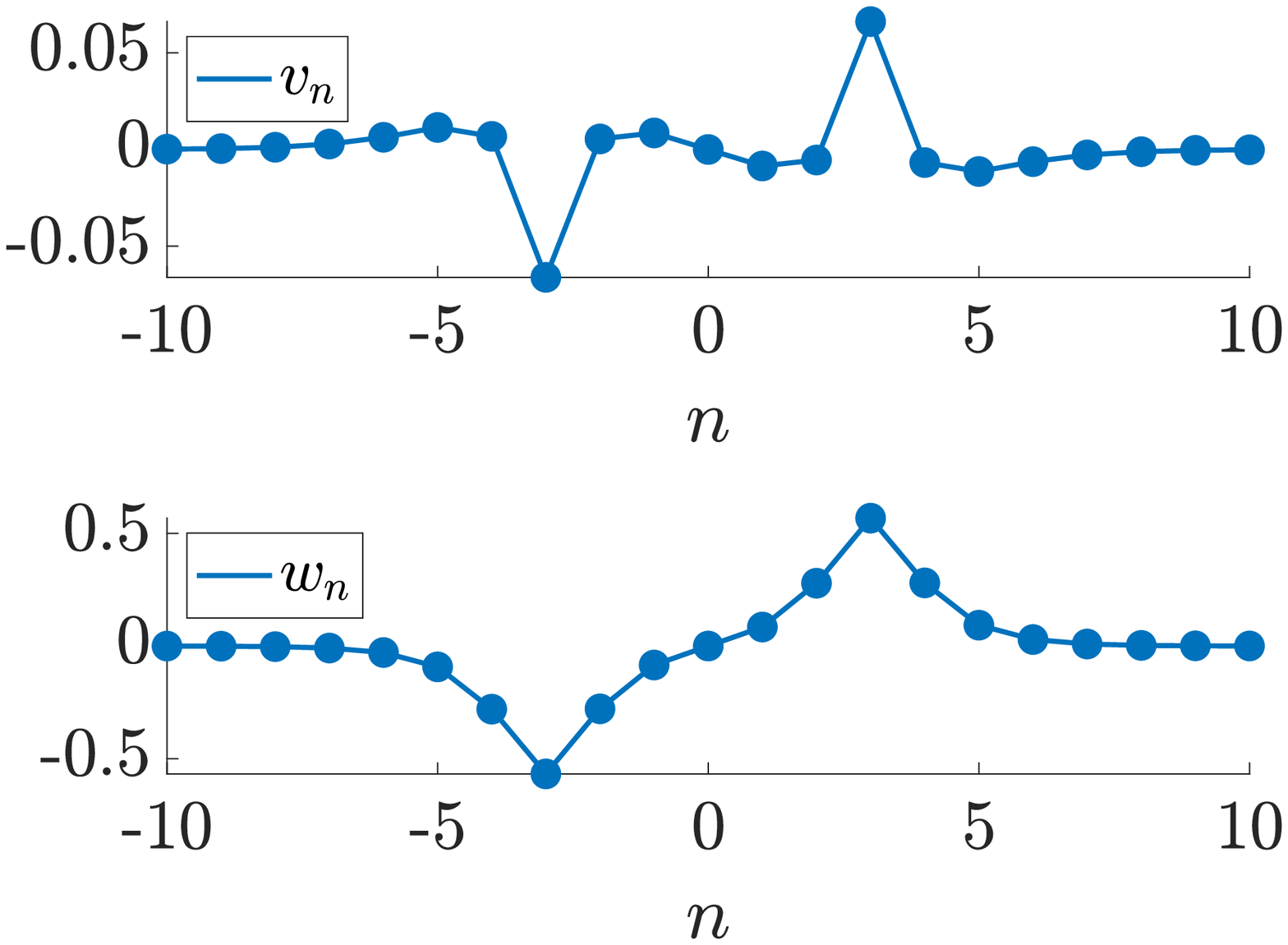}
		\label{fig:doubled}
	\end{subfigure}
	\end{center}
	\caption{Initial condition $u_n(0)$ with Floquet spectrum in inset (top) and eigenfunction $(v_n, w_n)$ corresponding to interaction eigenmode (bottom) for out-of-phase double breather (left) and in-phase double breather (right) for discrete sine-Gordon. Orange solid line in Floquet spectrum plot corresponds to unit circle. Coupling parameter $d = 0.25$, breather distance $N_1 = 6$.}
	\label{fig:double}
\end{figure}

\begin{figure}
	\begin{center}
	\begin{subfigure}{0.45\linewidth}
		\caption{}
		\includegraphics[width=7.5cm]{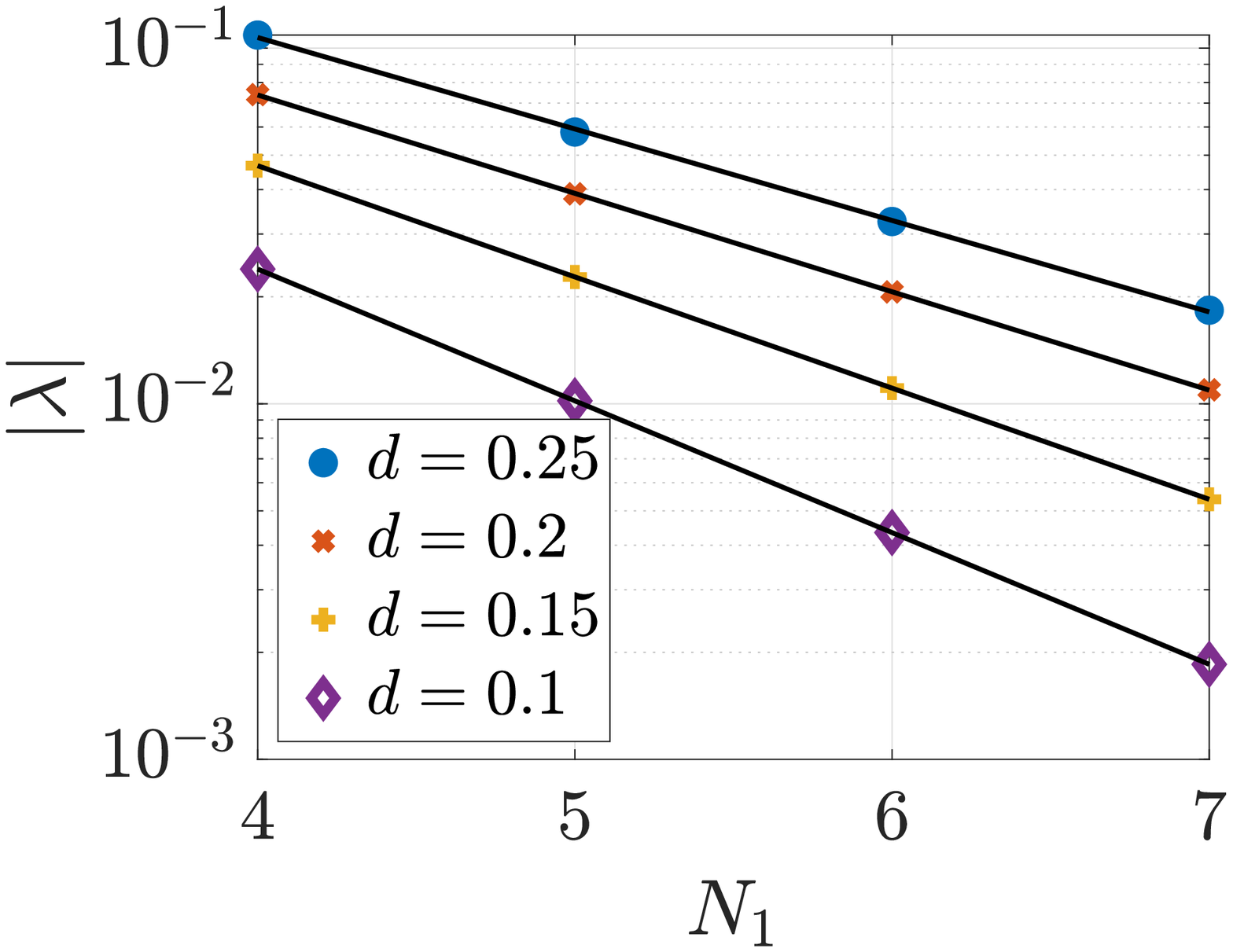}
		\label{fig:eigerrora}
	\end{subfigure}
	\begin{subfigure}{0.45\linewidth}
		\caption{}
		\includegraphics[width=7.5cm]{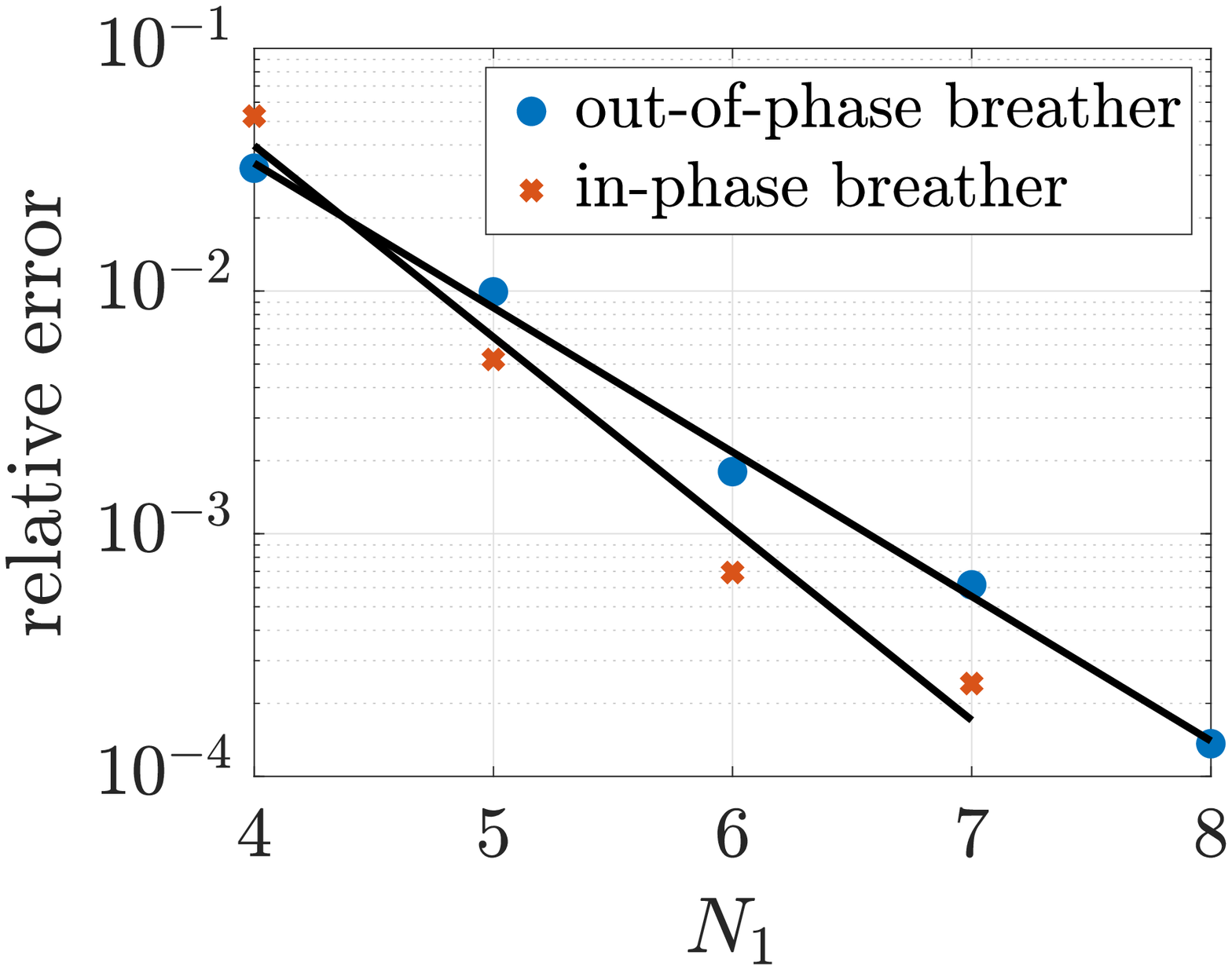} 
		\label{fig:eigerrorb} 
	\end{subfigure}
	\begin{subfigure}{0.45\linewidth}
		\caption{}
		\includegraphics[width=7.5cm]{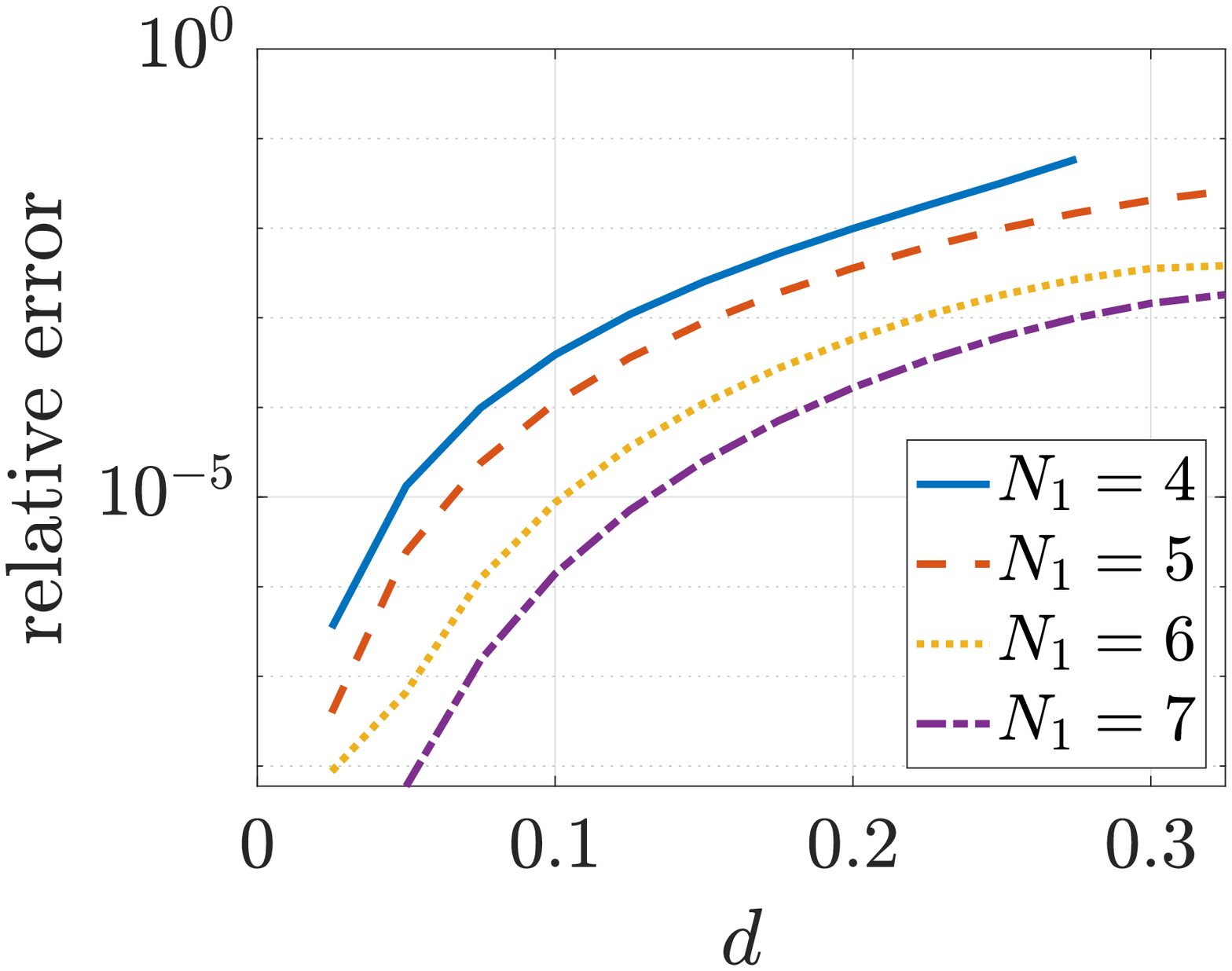} \hspace{-0.5cm}
		\label{fig:eigerrorc} 
	\end{subfigure}
	\begin{subfigure}{0.45\linewidth}
		\caption{}
		\includegraphics[width=7.5cm]{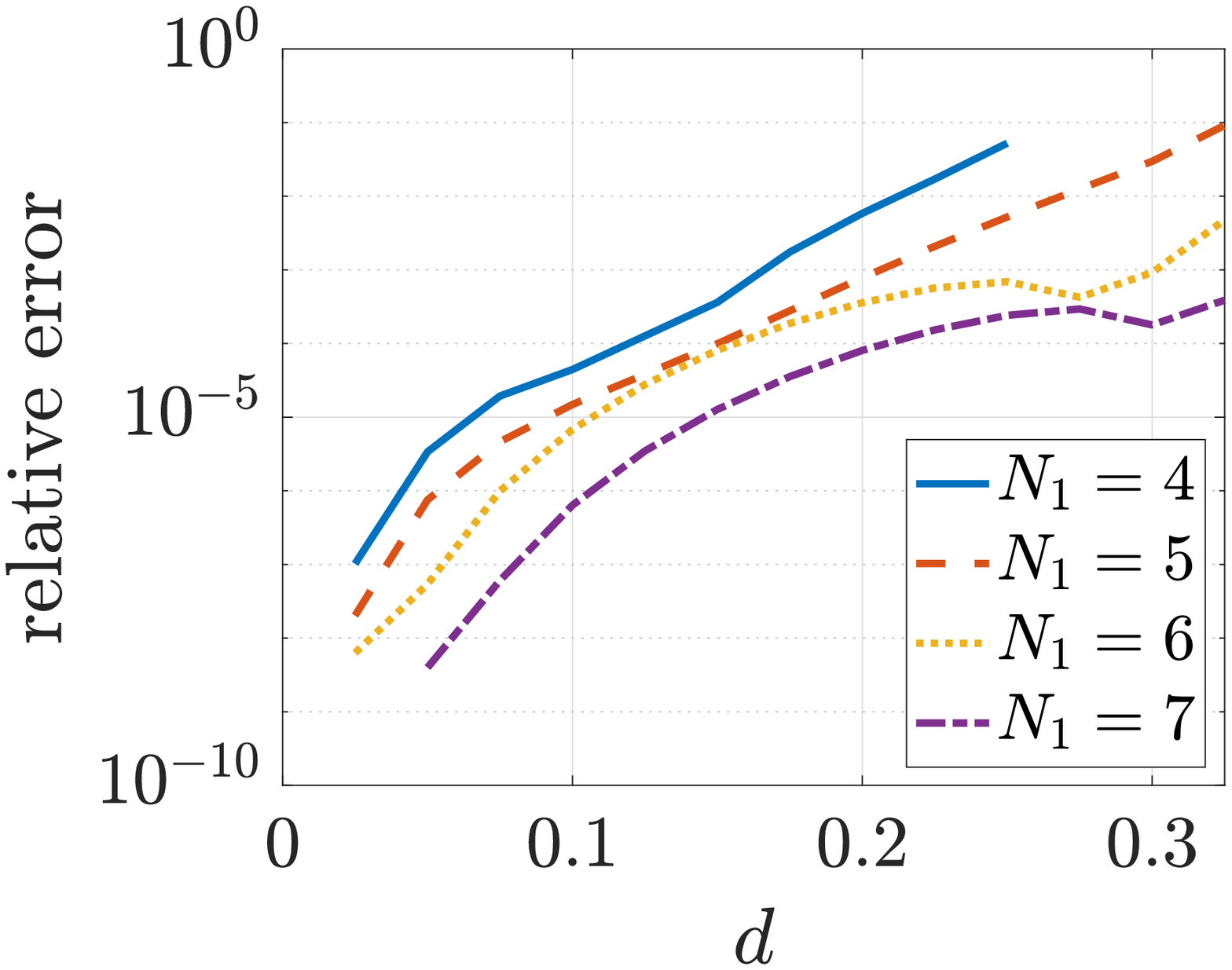} \hspace{-0.5cm}
		\label{fig:eigerrord} 
	\end{subfigure}
	\end{center}
	\caption{(a) Semilog plot of magnitude of Floquet exponent $\lambda$ corresponding to interaction eigenmode vs. $N_1$ for in-phase double breather with $d = 0.1, 0.15, 0.2, 0.25$ for discrete sine-Gordon. (b) Semilog plot of the relative error of Floquet multipliers of double breathers vs. $N_1$ for the case of 
	coupling parameter $d = 0.25$.
	Semilog plot of the relative error of interaction eigenmode computation vs. $d$ for out-of-phase double breathers (c) and in-phase double breathers (d) for breather distance $N_1 = 4,5,6,7$.}
	\label{fig:eigerror}
\end{figure}

\begin{figure}
	\begin{center}
	\begin{subfigure}{0.45\linewidth}
		\caption{}
		\includegraphics[width=7.5cm]{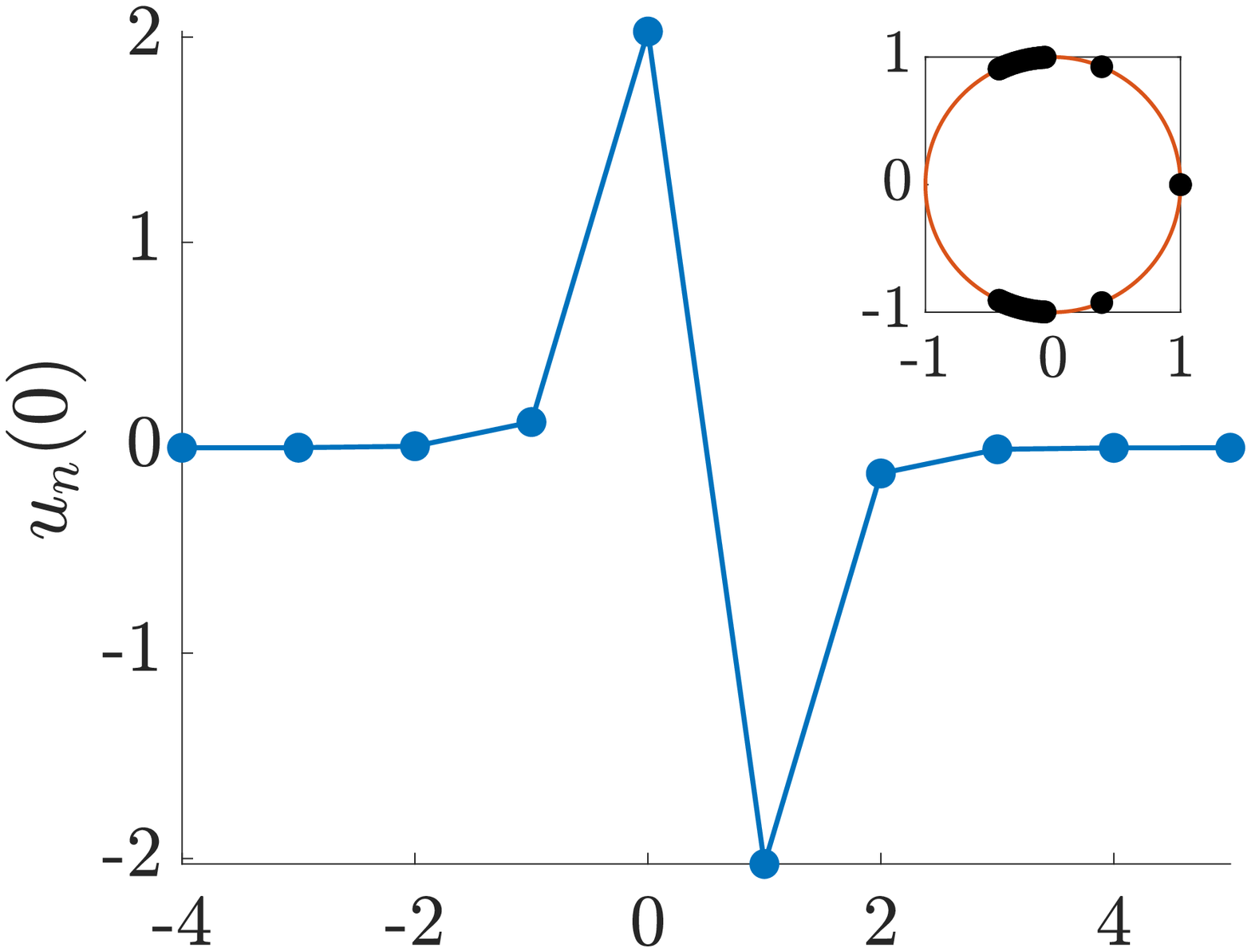} \hspace{-0.5cm}
		\label{fig:SGintersitea} 
	\end{subfigure}
	\begin{subfigure}{0.45\linewidth}
		\caption{}
		\includegraphics[width=7.5cm]{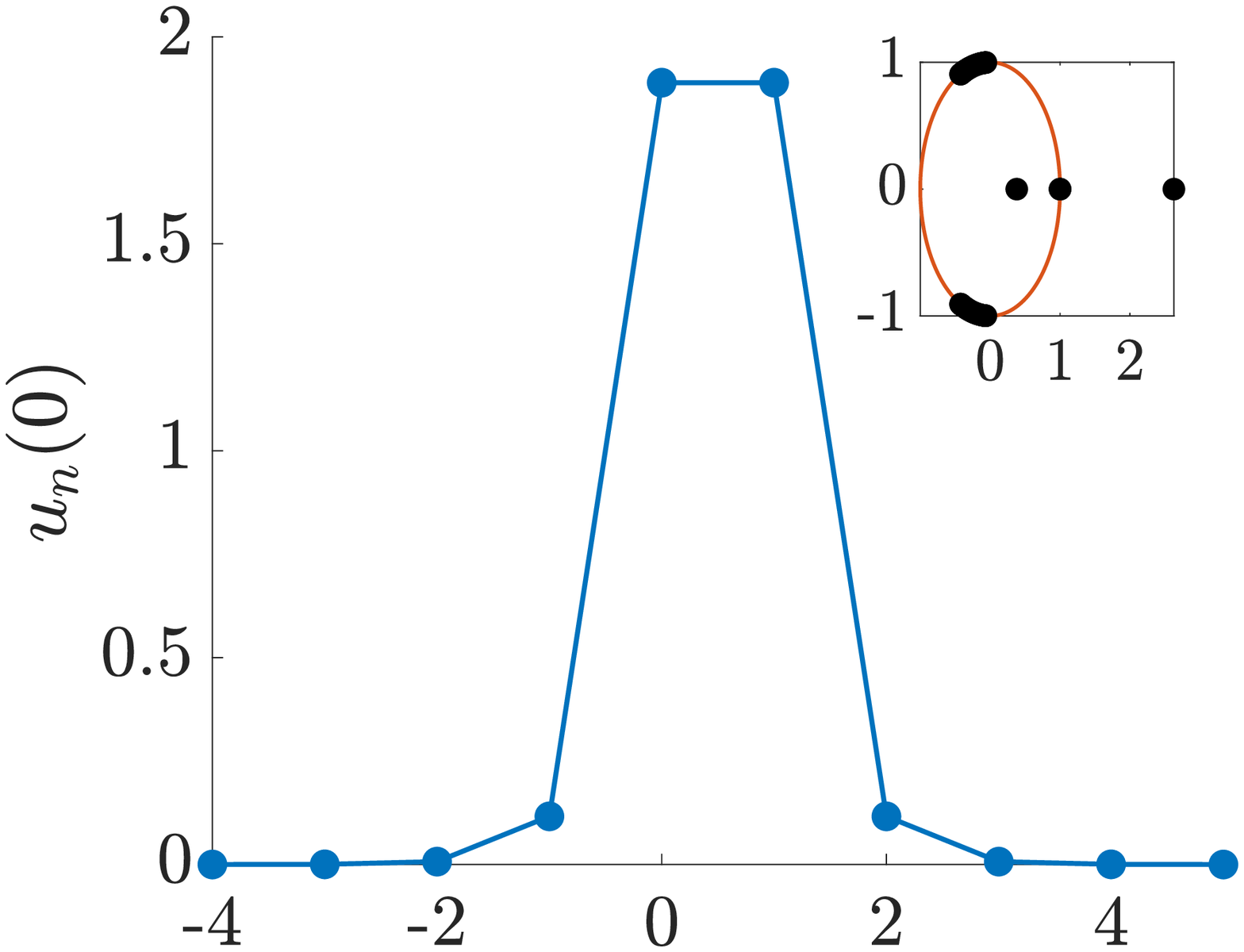} \hspace{-0.5cm}
		\label{fig:SGintersiteb} 
	\end{subfigure}
	\end{center}
	\caption{Initial condition $u_n(0)$ and Floquet spectrum (inset) for out-of-phase inter-site centered breather (a), and in-phase inter-site centered breather (b) for discrete sine-Gordon. Coupling parameter $d=0.1$.}
	\label{fig:SGintersite}
\end{figure}

For completeness here, and also per its intrinsic interest,
we show in 
\cref{fig:bifdiagSGoop1} the bifurcation diagram for an out-of-phase double breather, starting from the AC limit. The parameter continuation starts on the lower branch, where the double breather has a pair of interaction eigenmodes on the unit circle (label 1). The upper branch is a double breather whose constituent breathers
are out-of-phase, yet each encompasses two in-phase adjacent excited sites (label 7), i.e. two inter-site centered breathers, which we recall leads
to instability (\cref{fig:SGintersiteb}); this breather has two pairs of real interaction eigenmodes, in addition to a pair of interaction eigenmodes on the unit circle. The middle branch is an asymmetric double breather, comprising one single-site breather and another intersite-centered breather which is out-of-phase with it (label 8); this breather has one pair of real interaction eigenmodes, and one pair of interaction eigenmodes on the unit circle. There is a corresponding branch (not shown) where the order of the two breathers is reversed. As $d$ is increased along the lower branch, a pair of internal modes appears on the unit circle (label 2); these have opposite Krein signature from the nearby interaction eigenmode. These collide and move off of the unit circle (label 3) in the form of a complex quartet (corresponding to an oscillatory instability). At this point, a second pair of internal modes has appeared on the unit circle, which then passes between the first pair of Floquet modes (not shown). By label 4, the pair of eigenmodes which left the unit circle has rejoined the unit circle. Between label 4 and label 5, the second pair of internal modes collides (with the branch traced by label 8 and label 9) at (1,0) in a {\it subcritical} pitchfork bifurcation and moves off of the unit circle. This pitchfork is symmetry-breaking, and produces the asymmetric, middle branches of the bifurcation diagram, indicated by the
dashed line. Between label 5 and label 6, there is a turning point (fold bifurcation), in which the first pair of internal modes collides and moves off of the unit circle. 
This collision represents a saddle-center bifurcation with the
branch corresponding to label 7, designated by the dotted line, a branch
that bears generically two real multiplier pairs. We find it
quite relevant to recall that
qualitatively similar pitchfork bifurcations are seen for multi-kinks in DKG (\cite{Parker2021}*{Figure 4}) and for vortex pairs in the 2-dimensional DNLS equation (\cite{Bramburger2020}*{Figure 4}).
Indeed, this seems to represent a generic mode of disappearance
in such discrete nonlinear lattice dynamical systems of states
that do not persist all the way to the continuum limit.

\begin{figure}
	\hbox{
	\hspace{-2cm}
	\includegraphics[width=20cm]{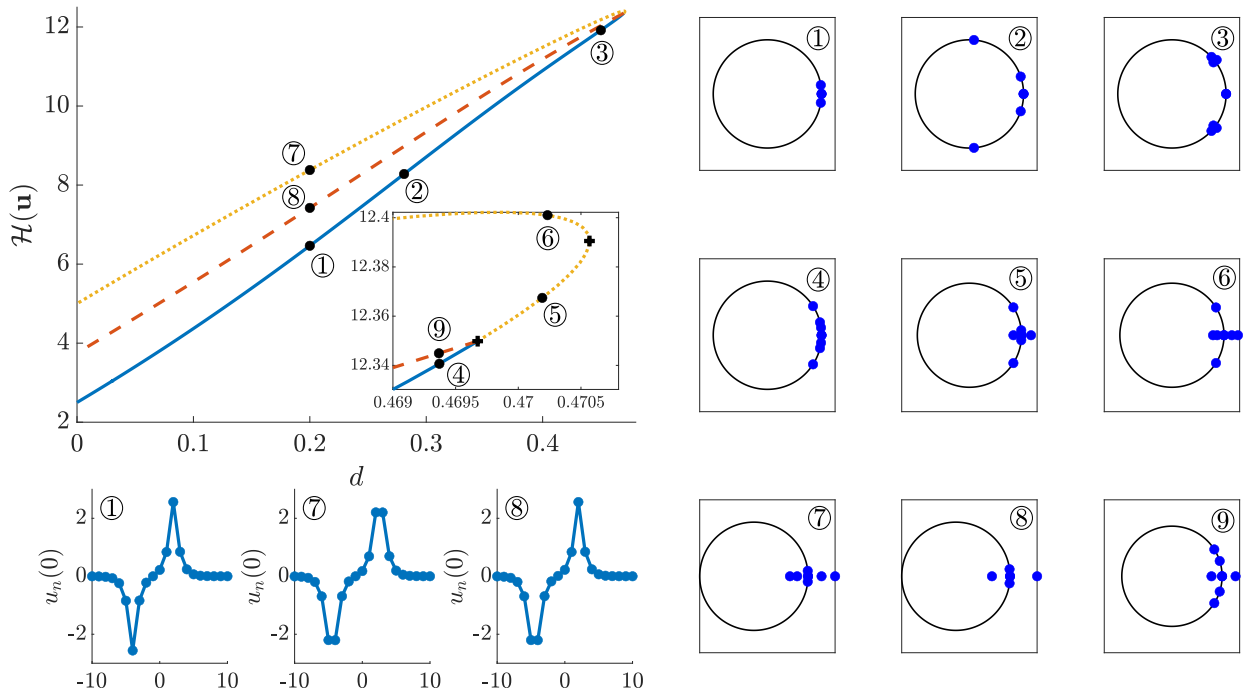} 
	}
	\caption{Bifurcation diagram plotting energy $\mathcal{H}(\uvec)$ vs. $d$ for out-of-phase double breather in the soft sine-Gordon potential with $N_1 = 6$. Points of interest are marked with black dots and labeled with circled numbers, which correspond to Floquet spectra (and in select cases also to the configurations at the bottom left). Continuous spectrum bands are not shown for clarity. Bifurcations are indicated in the inset with a black cross.}
	\label{fig:bifdiagSGoop1}
\end{figure}

\cref{fig:bifdiagSGip1} shows the bifurcation diagram for an in-phase double breather, again starting from the AC limit. The diagram is qualitatively similar, in that two bifurcations (a pitchfork
and a saddle-center) result from collisions of internal modes at (1,0). The main difference is that, since the starting in-phase breather has a pair of real interaction eigenmodes, these are not involved in any collisions with the internal modes. In both cases, there is a turning point (fold bifurcation) in the bifurcation diagram at a critical value $d_0$ of the coupling parameter $d$, at which point the parameter continuation reverses direction in $d$, again marking a saddle-center 
bifurcation. For the out-of-phase double breather, this occurs after the pitchfork bifurcation of the site-centered breathers with 
the asymmetric ones (see inset in \cref{fig:bifdiagSGoop1}).
On the contrary, for the in-phase double breather, 
the pitchfork bifurcation occurs between the branch comprising a pair of inter-site centered breathers and the asymmetric branch bearing one on-site and one inter-site centered
breather (see inset in \cref{fig:bifdiagSGip1}). A plot of the turning point $d_0$ vs. the separation distance $N_1$ suggests that $d_0$ increases linearly with $N_1$ for both the in-phase and the out-of-phase double breather (\cref{fig:SGd0}).

\begin{figure}
	\hbox{
	\hspace{-2cm}
	\includegraphics[width=20cm]{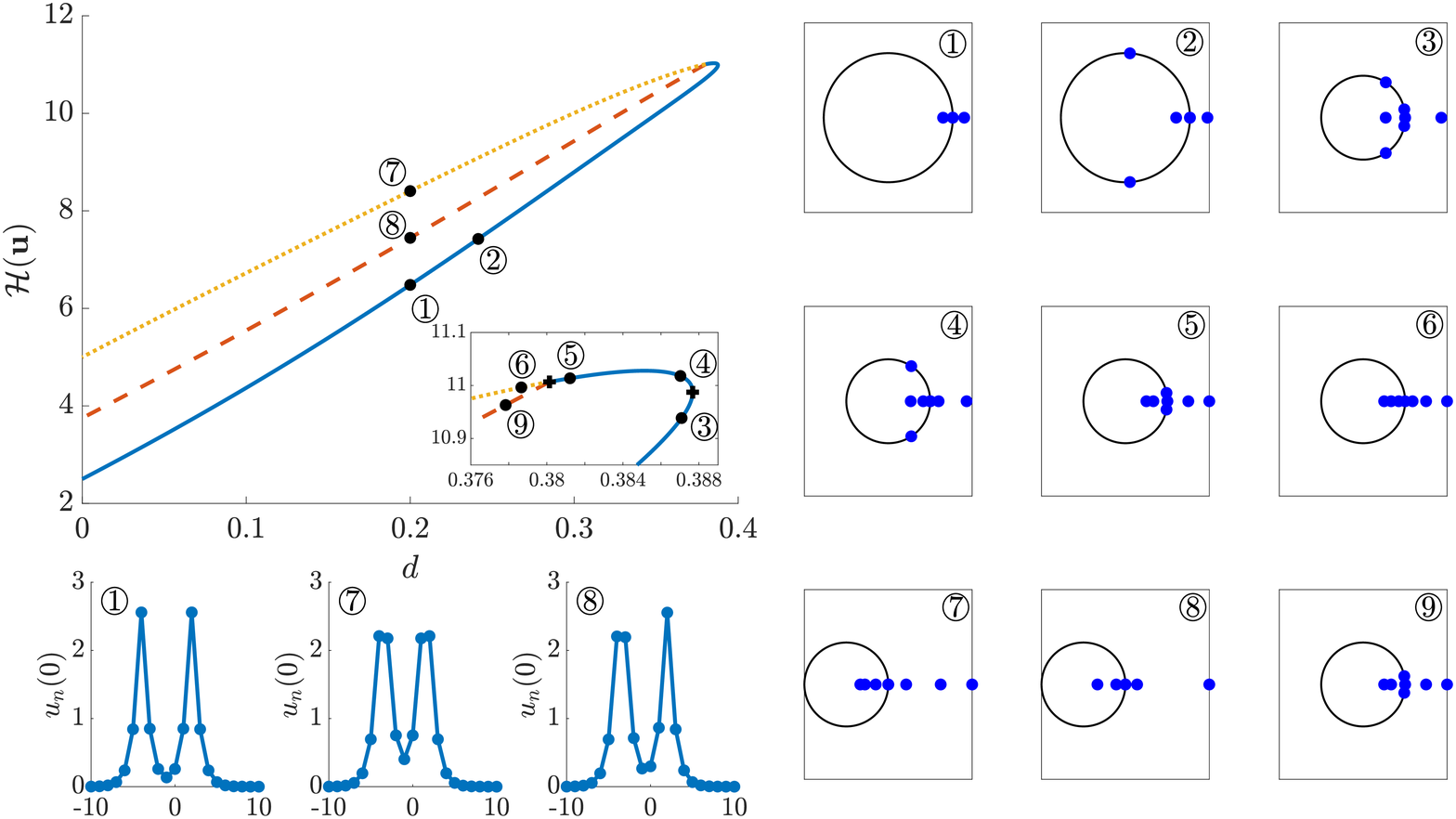} 
	}
	\caption{Bifurcation diagram plotting the energy $\mathcal{H}(\uvec)$ vs. $d$ for in-phase double breather in soft sine-Gordon potential with $N_1 = 6$. Points of interest are marked with black dots and labeled with circled letters, which correspond to Floquet spectra. 
	Again, see bottom left for select spatial configurations.
	Continuous spectrum bands are not shown for clarity. Bifurcations are indicated with a black cross. }
	\label{fig:bifdiagSGip1}
\end{figure}

\begin{figure}
	\begin{center}
	\includegraphics[width=7.5cm]{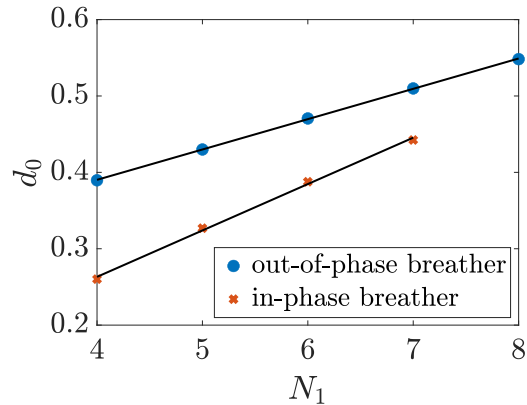} 
	\end{center}
	\caption{Turning point of parameter continuation $d_0$ vs. separation distance $N_1$ for out-of-phase and in-phase double breathers for discrete sine-Gordon.}
	\label{fig:SGd0}
\end{figure}

Finally, we demonstrate the effect of the interaction eigenmodes on the dynamics of the double breathers by performing long-term numerical timestepping experiments. For a timestepping scheme, we use a symplectic and symmetric implicit Runge-Kutta method \cite{HairerBook} to preserve the symplectic structure of the Hamiltonian system. Specifically, we use the Matlab implementation of the \texttt{irk2} scheme of order 12 from \cite{Hairer2003}.
For each experiment, we perturb the stationary breather by adding the eigenvector $(v_n, w_n)$ corresponding to the interaction eigenmode $\mu$, multiplied by a small factor $\delta$. (If the eigenvector is complex, we use the real part of the eigenvector for the perturbation).
For a double breather $u_n(t)$, let $u_L(t) = u_{-N_1^-}(t)$ and $u_R(t) = u_{N_1^+}(t)$ be the center sites (i.e. the sites of maximal excitation) of the left and the right breather, respectively. First, we look at the out-of-phase double breather (\cref{fig:timestepSGstablea}). We choose the coupling parameter $d=0.2$ and separation distance $N_1 = 4$ so that we avoid the (nonlinearity induced) 1:2 resonance between the interaction eigenmode and the continuous spectrum bands, which leads to a nonlinear instability (see \cite{cuevas-maraver2016} for details). 
Although we expect that nonlinear instabilities due to these higher order resonances will occur, they will be manifested at times beyond the time intervals of our simulations. 
When the out-of-phase double breather is perturbed, the peak amplitudes of $u_L(t)$ and $u_R(t)$ oscillate in opposite directions with frequency given by the imaginary part of $\log(\mu)/T$ (see \cref{fig:timestepSGstable}), with relative error less than 0.005, where $\mu$ is the Floquet multiplier on the unit circle corresponding to the interaction eigenmode. 

When the in-phase double breather is perturbed in the same way, the oscillations in $u_L(t)$ initially increase in both amplitude and period, while those in $u_R(t)$ decrease in both amplitude and period (\cref{fig:timestepSGunstabled}). (These are reversed if the perturbation is in the opposite direction). We can interpret this behavior by looking at the form of the corresponding eigenfunction (\cref{fig:double}, right). An addition of a small multiple of this eigenfunction to the double breather at $t=0$ causes one breather to increase in both amplitude and velocity (thus in energy), and the other to decrease in both amplitude and velocity. Since sine-Gordon is a soft potential, the breather which increases in energy also increases in period, and the breather which decreases in energy also decreases in period.
For separation distance $N_1 = 6$, as $t$ evolves, we see that the periods of $u_L(t)$ and $u_R(t)$ vary periodically in opposite directions (\cref{fig:timestepSGunstablec}), thus the phase differences between the two breathers vary periodically as well (\cref{fig:timestepSGunstablea} and \cref{fig:timestepSGunstableb}). The amplitudes of $u_L(t)$ and $u_R(t)$ also vary periodically in opposite directions (\cref{fig:timestepSGunstablee}). In effect, the perturbed in-phase double breather appears to be oscillating about the neutrally stable, out-of-phase double breather, a feature
that persists robustly over longer time scales. The growth rate of the difference in $\ell^2$ norm between the perturbed and unperturbed solutions is given by $\mu^{1/T}$, with a relative error of less than 0.001, where $\mu > 1$ is the larger of the Floquet multiplier pair (\cref{fig:timestepSGunstablef}). If the separation distance $N_1$ is smaller, the two breathers attract each other, and eventually merge into a localized, single-site breather state (see \cref{fig:timestepSGpplong} for separation distance $N_1 = 4$). 
This limiting solution is close to a higher energy, single-site breather, and is located between the initial starting sites for the pair of breathers. 
The energy density at lattice site $n$, which is plotted in the bottom panels of \cref{fig:timestepSGpplong}, is given by
\begin{equation}\label{eq:Hn}
h_n = \frac{1}{2}v_n^2 + V(u_n) 
+ \frac{d}{4}\left[ (u_n - u_{n+1})^2 + (u_n - u_{n-1})^2\right],
\end{equation}
where $v_n = \dot{u}_n$, so that the Hamiltonian \cref{eq:H} is the sum of the energy densities over the entire lattice. 
A plot of the total energy vs. time (\cref{fig:timestepSGpplongb}) indicates that energy is conserved by the numerical scheme over the simulation time, providing validation of this result.

\begin{figure}
	\begin{center}
	\begin{subfigure}{0.3\linewidth}
		\caption{}
		\includegraphics[width=5.4cm]{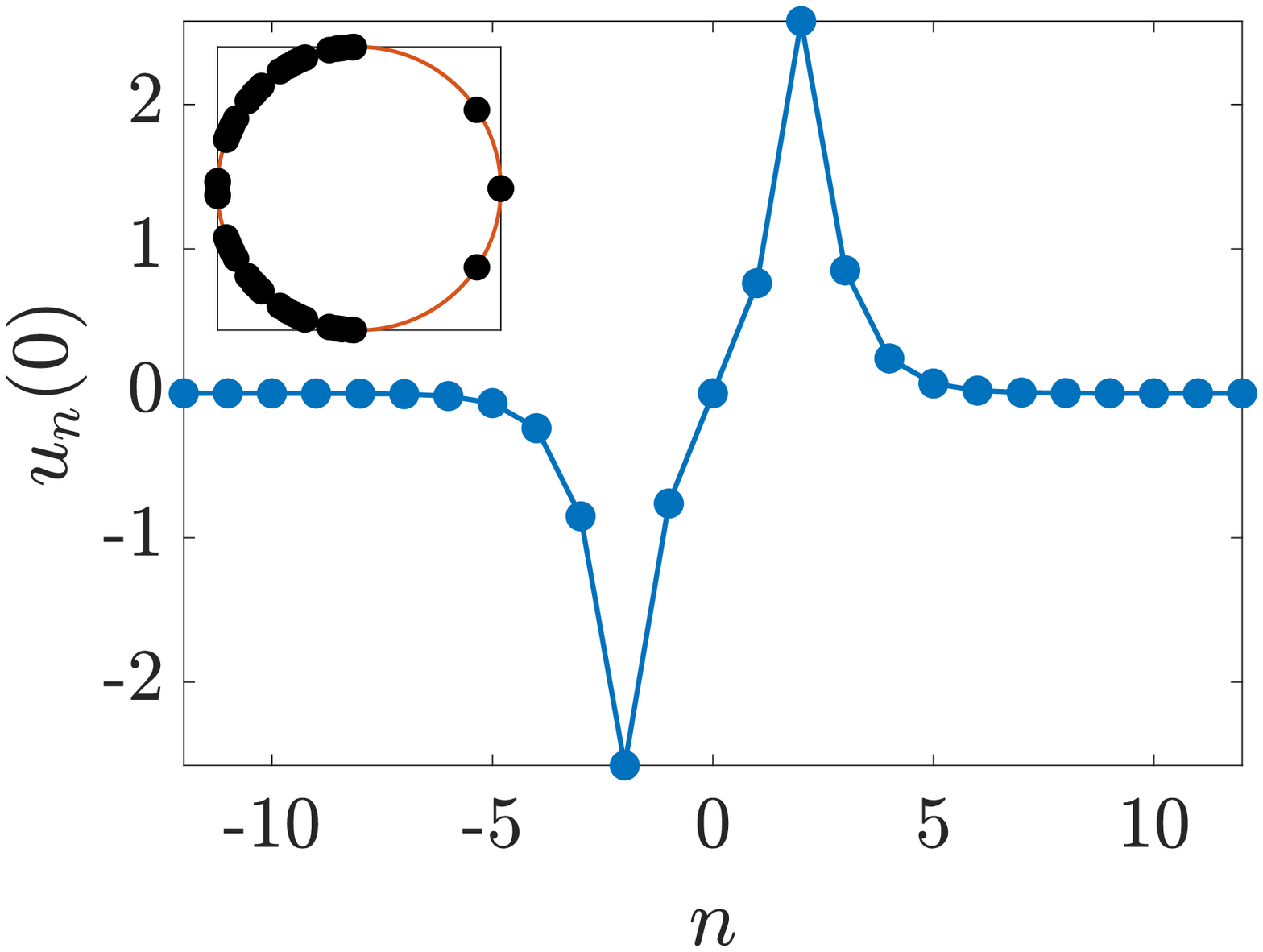}
		\label{fig:timestepSGstablea}
	\end{subfigure}
	\begin{subfigure}{0.3\linewidth}
		\caption{}
		\includegraphics[width=5.4cm]{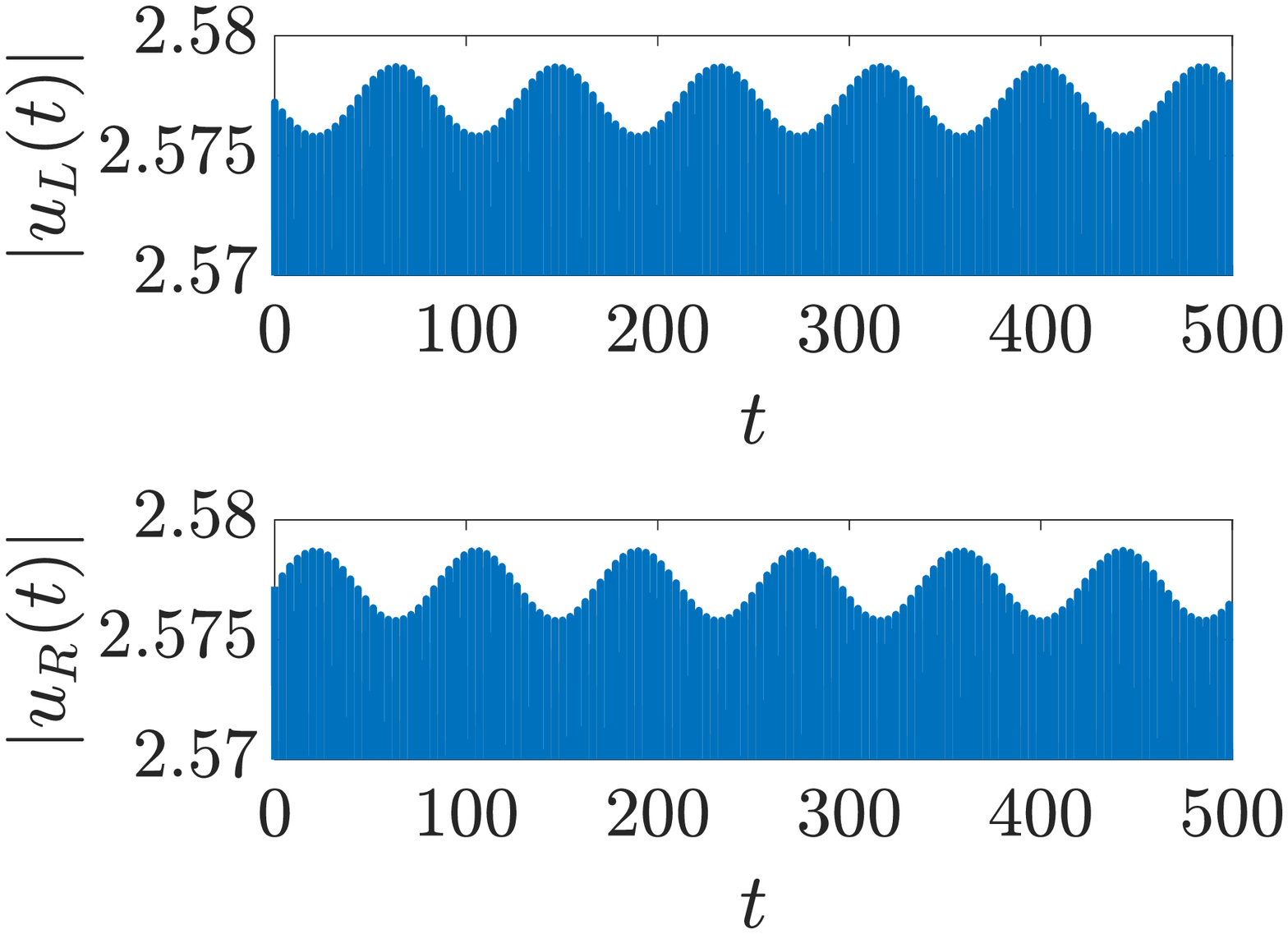}
		\label{fig:timestepSGstableb}
	\end{subfigure}
	\hspace{0.1cm}
	\begin{subfigure}{0.3\linewidth}
		\caption{}
		\includegraphics[width=5.4cm]{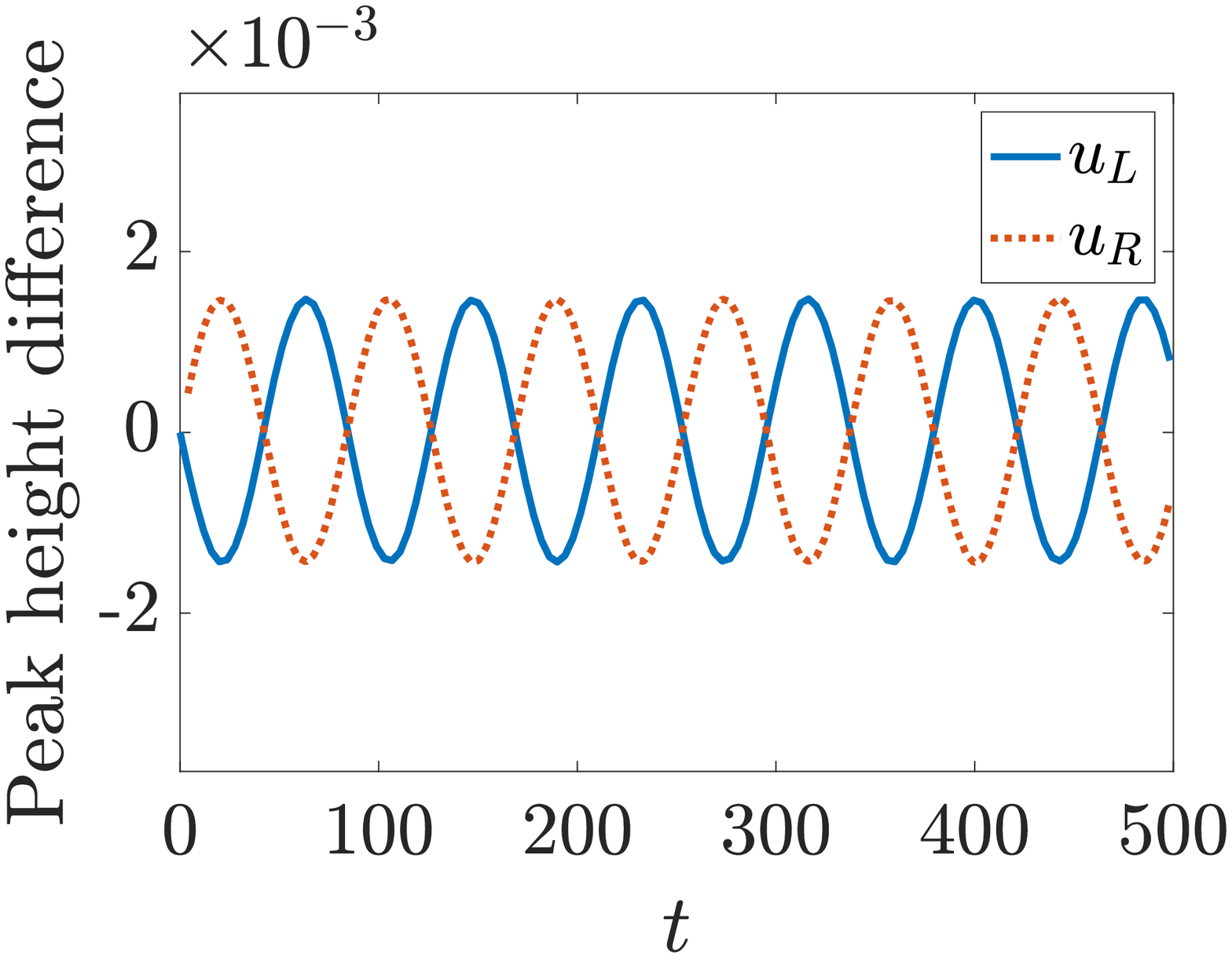}
		\label{fig:timestepSGstablec}
	\end{subfigure}
	\end{center}
	\caption{(a) Out-of-phase double breather with the Floquet spectrum shown in the inset for discrete sine-Gordon. Time evolution of the perturbation of the out-of-phase breather, showing the amplitude of the peaks of $u_L$ and $u_R$ vs. $t$ (b), and difference in this amplitude from the unperturbed breather (c).  
	Coupling parameter $d=0.2$, separation distance $N_1 = 4$, perturbation parameter $\delta = 0.01$.}
	\label{fig:timestepSGstable}
\end{figure}

\begin{figure}
	\begin{center}
	\begin{subfigure}{0.3\linewidth}
		\caption{}
		\includegraphics[width=5.25cm]{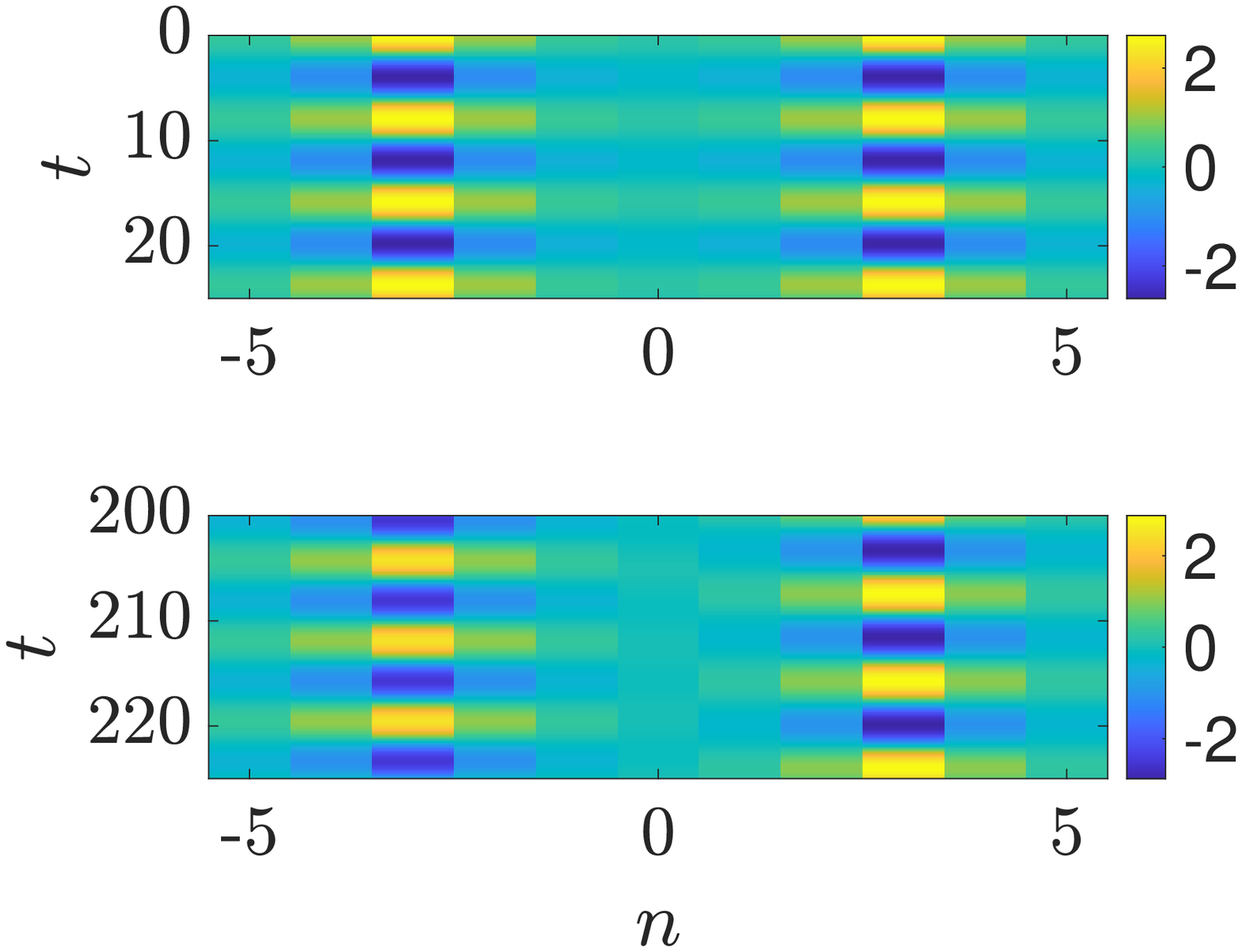} \hspace{-0.5cm}
		\label{fig:timestepSGunstablea}
	\end{subfigure}
	\begin{subfigure}{0.3\linewidth}
		\caption{}
		\includegraphics[width=5.25cm]{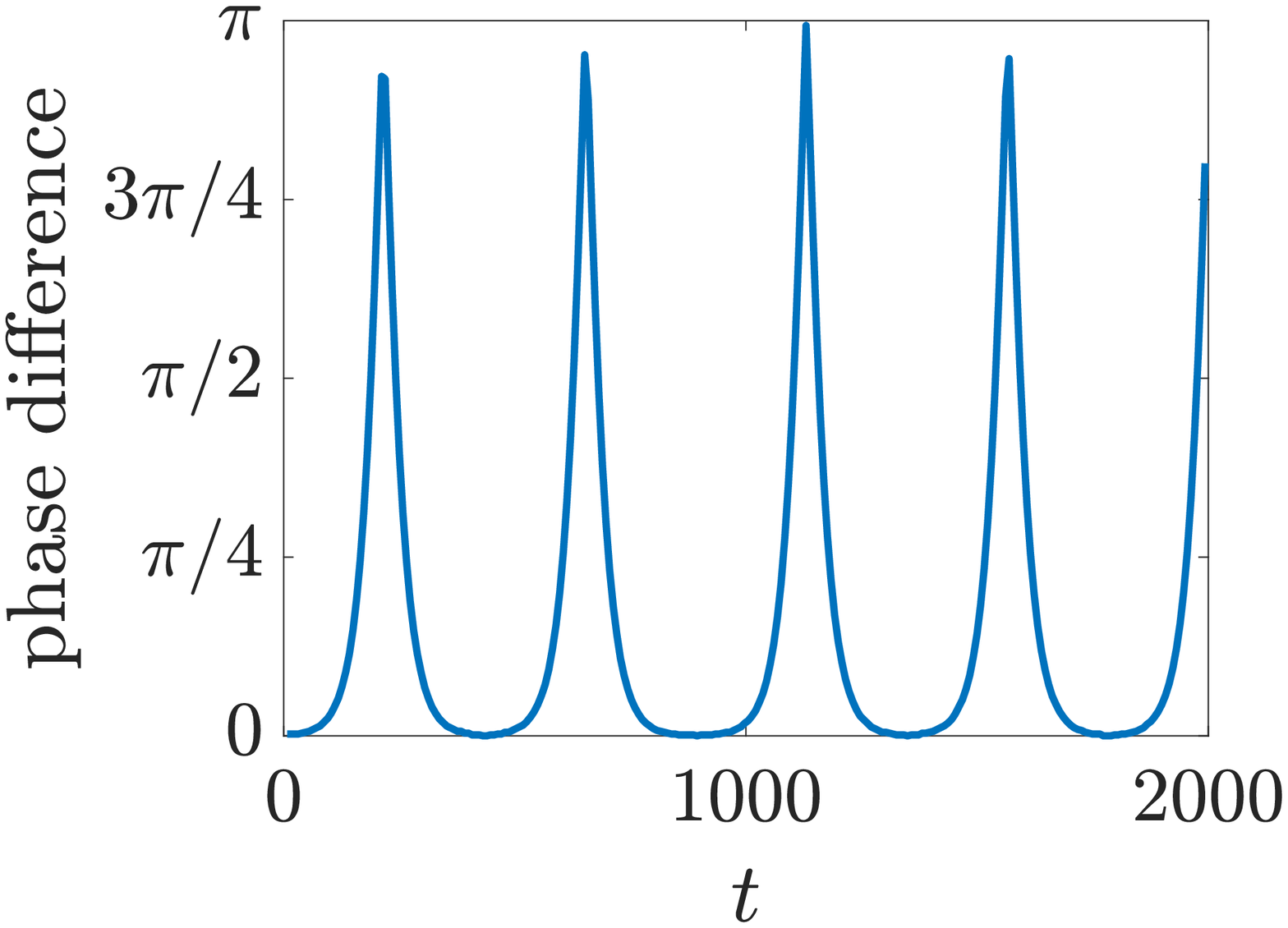} \hspace{-0.5cm}
		\label{fig:timestepSGunstableb}
	\end{subfigure}
		\begin{subfigure}{0.3\linewidth}
		\caption{}
		\includegraphics[width=5.25cm]{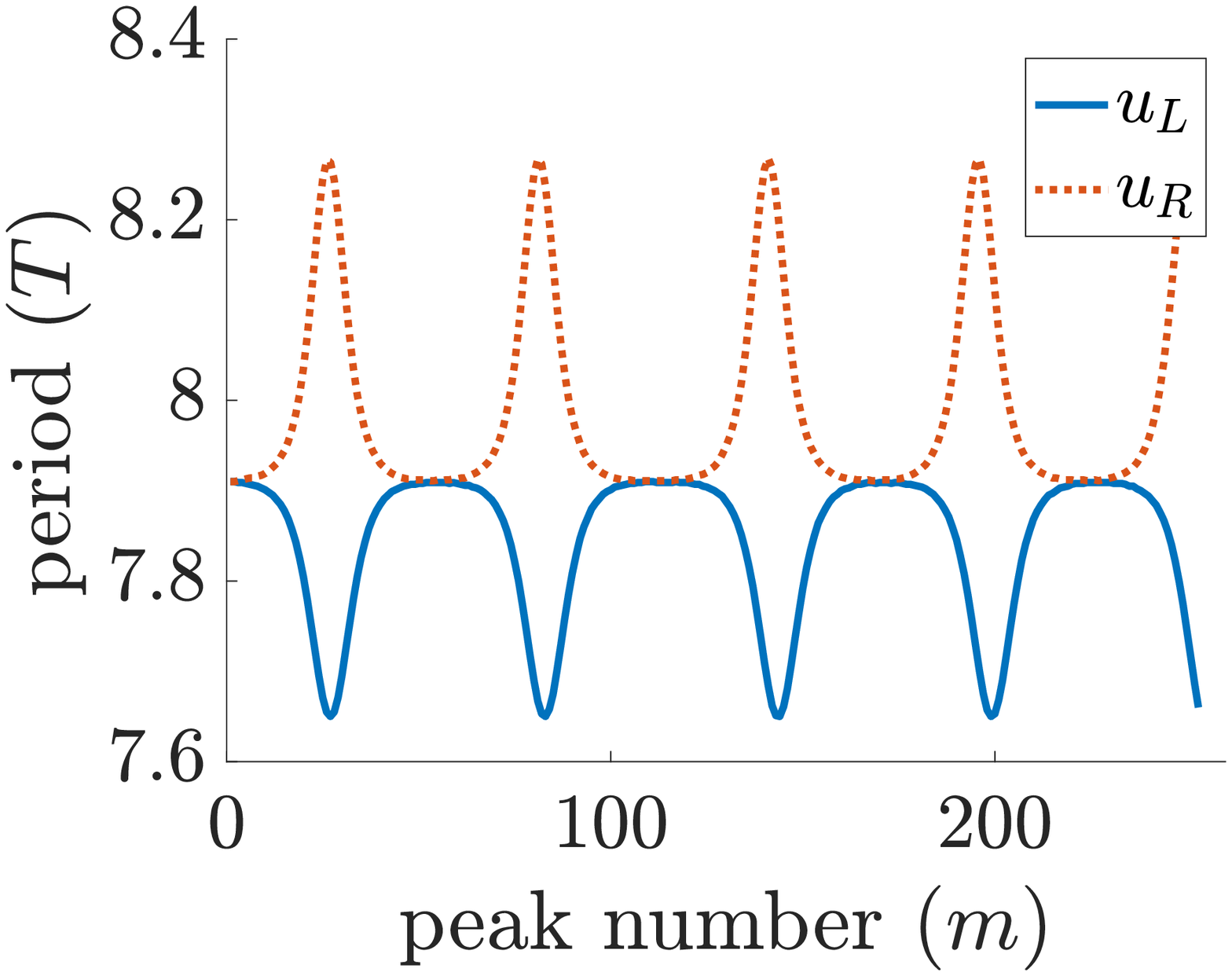} \hspace{-0.5cm}
		\label{fig:timestepSGunstablec}
	\end{subfigure}
	\begin{subfigure}{0.3\linewidth}
		\caption{}
		\includegraphics[width=5.25cm]{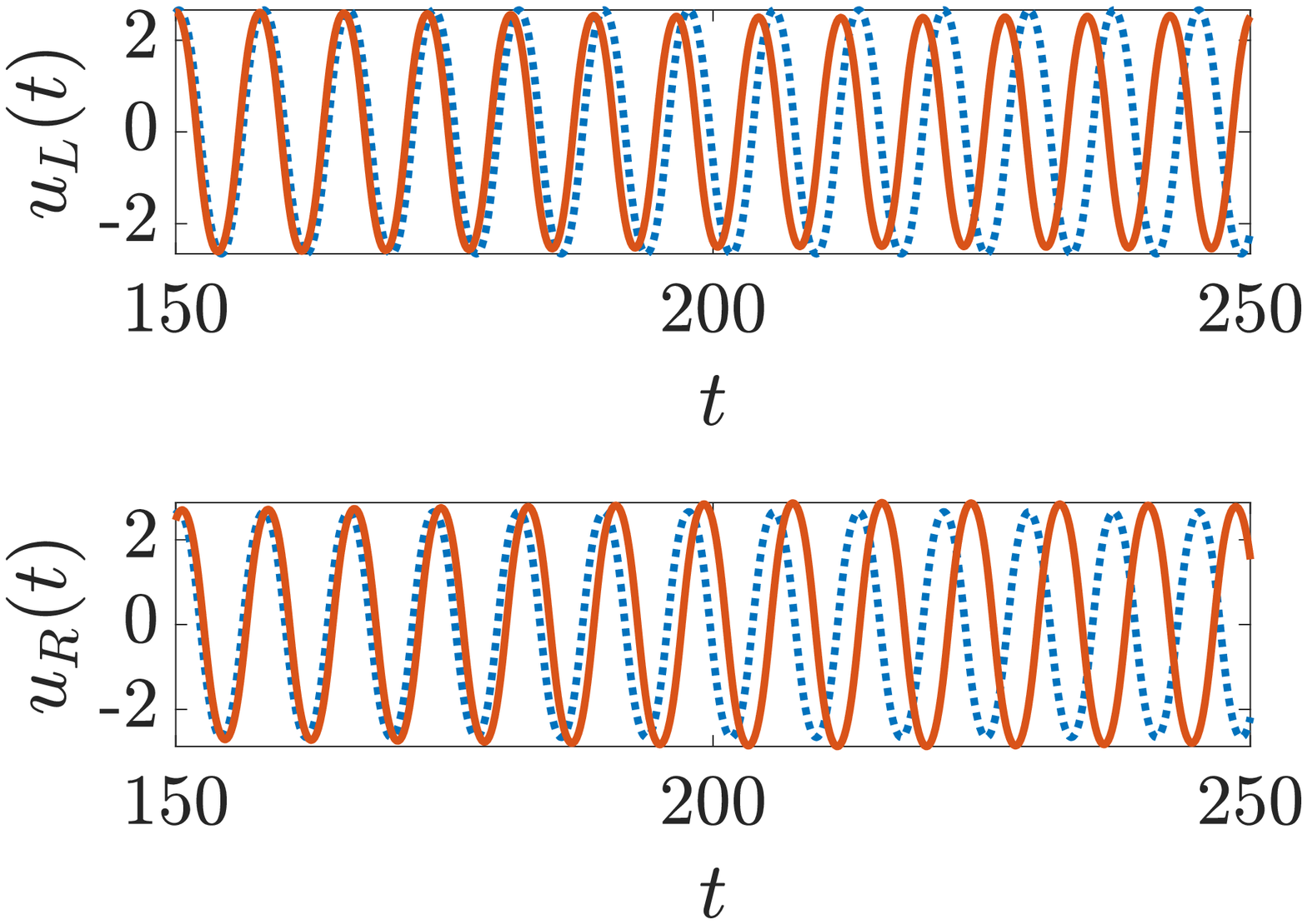} \hspace{-0.5cm}
		\label{fig:timestepSGunstabled}
	\end{subfigure}
		\begin{subfigure}{0.3\linewidth}
		\caption{}
		\includegraphics[width=5.25cm]{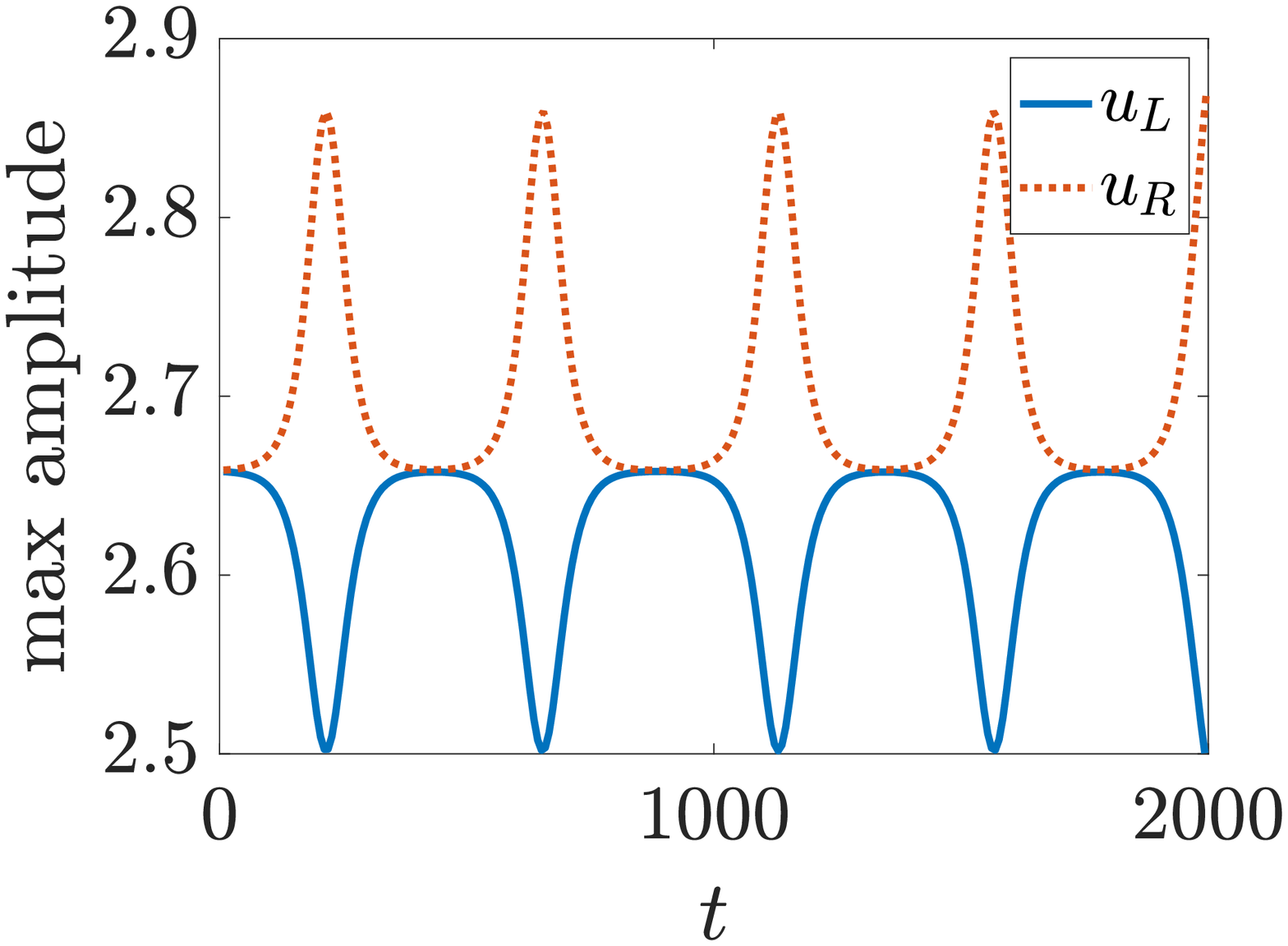} \hspace{-0.5cm}
		\label{fig:timestepSGunstablee}
	\end{subfigure}
	\begin{subfigure}{0.3\linewidth}
		\caption{}
		\includegraphics[width=5.25cm]{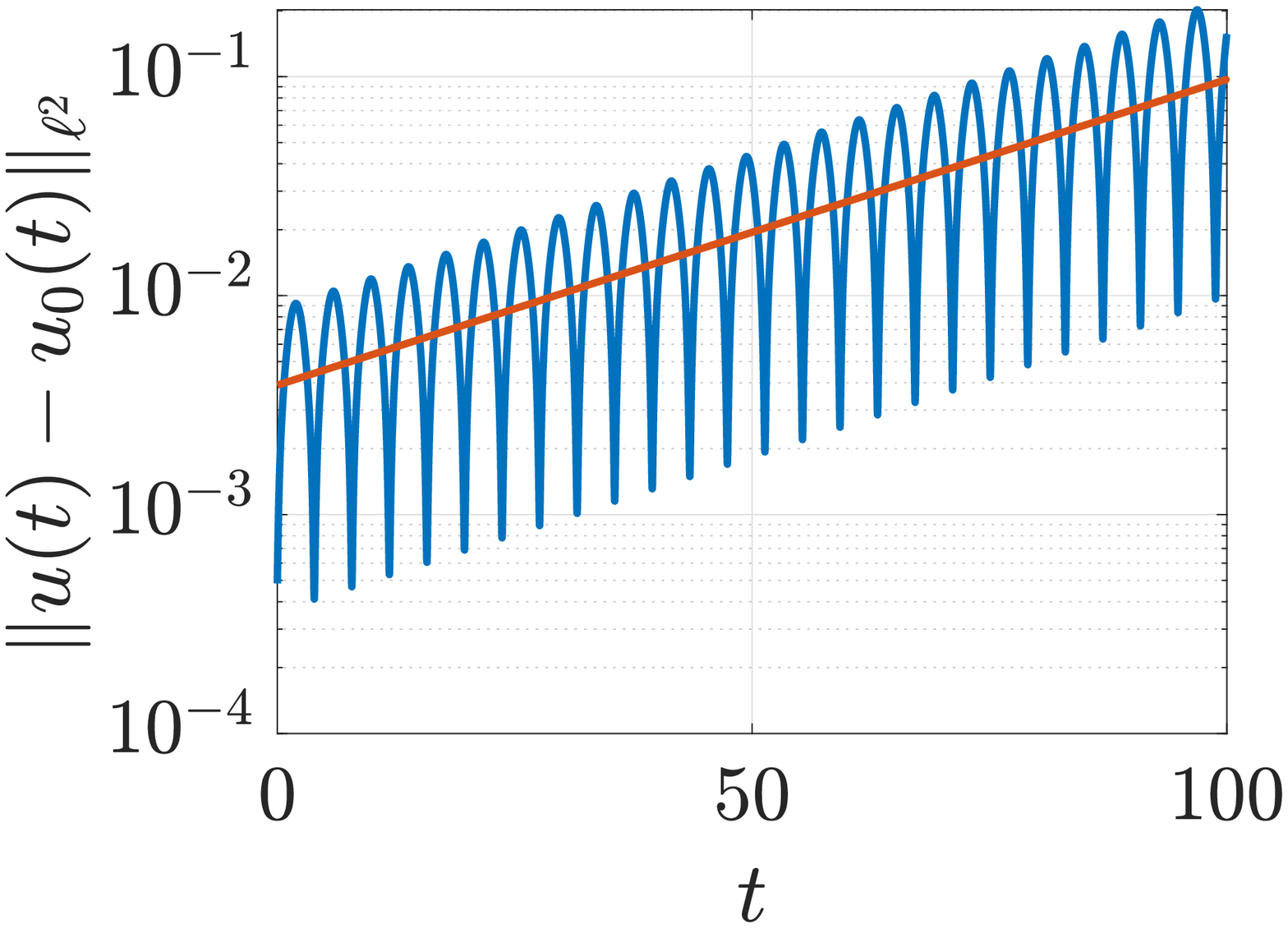} \hspace{-0.5cm}
		\label{fig:timestepSGunstablef}
	\end{subfigure}
	\end{center}
	\caption{Time evolution of the perturbation for an in-phase breather for discrete sine-Gordon. (a) colormap showing $u_n$ vs. $t$ for sites -5 to 5; breathers start in-phase (top) and are out-of-phase at approximately $t = 215$ (bottom). (b) phase difference between two breathers vs. $t$; 0 is in-phase, $\pi$ is out-of-phase. (c) period of two breathers vs. $t$; the period $T$ is measured as the distance between consecutive peaks. (d) time evolution of $u_L$ and $u_R$ for unperturbed breather (dotted blue lines) and perturbed breather (solid orange lines). (e) maximum amplitude of $u_L$ and $u_R$ vs. $t$. (f) semilog plot showing time evolution in difference of $\ell^2$ norm of perturbed and unperturbed breather, together with least squares regression line. Coupling parameter $d=0.25$, separation distance $N_1 = 6$, perturbation parameter $\delta = 0.005$.}
	\label{fig:timestepSGunstable}
\end{figure}

\begin{figure}
	\begin{center}
	\begin{subfigure}{0.45\linewidth}
		\caption{}
		\includegraphics[width=7.5cm]{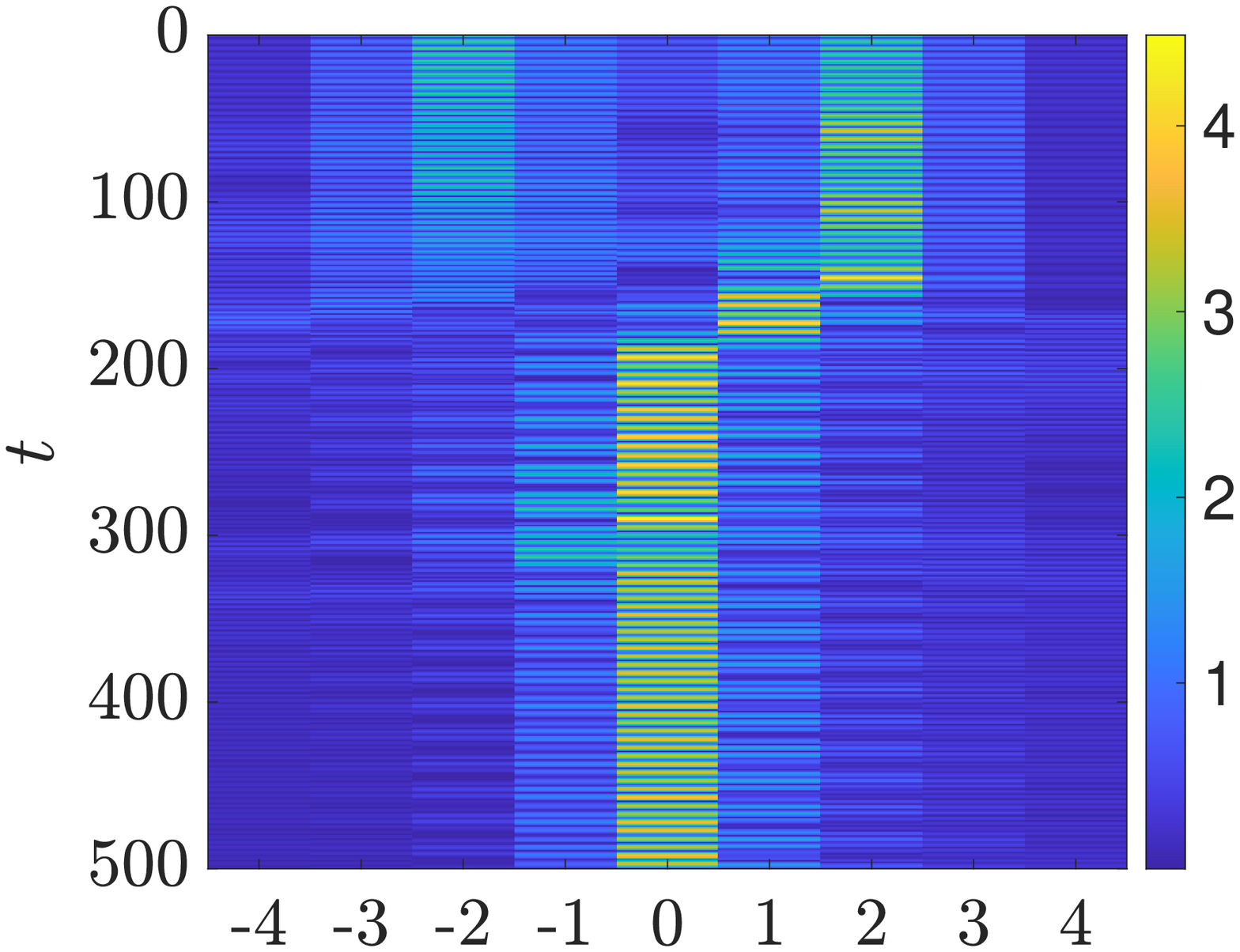}
		\label{fig:timestepSGpplonga}
	\end{subfigure}
	\begin{subfigure}{0.45\linewidth}
		\caption{}
		\includegraphics[width=7.5cm]{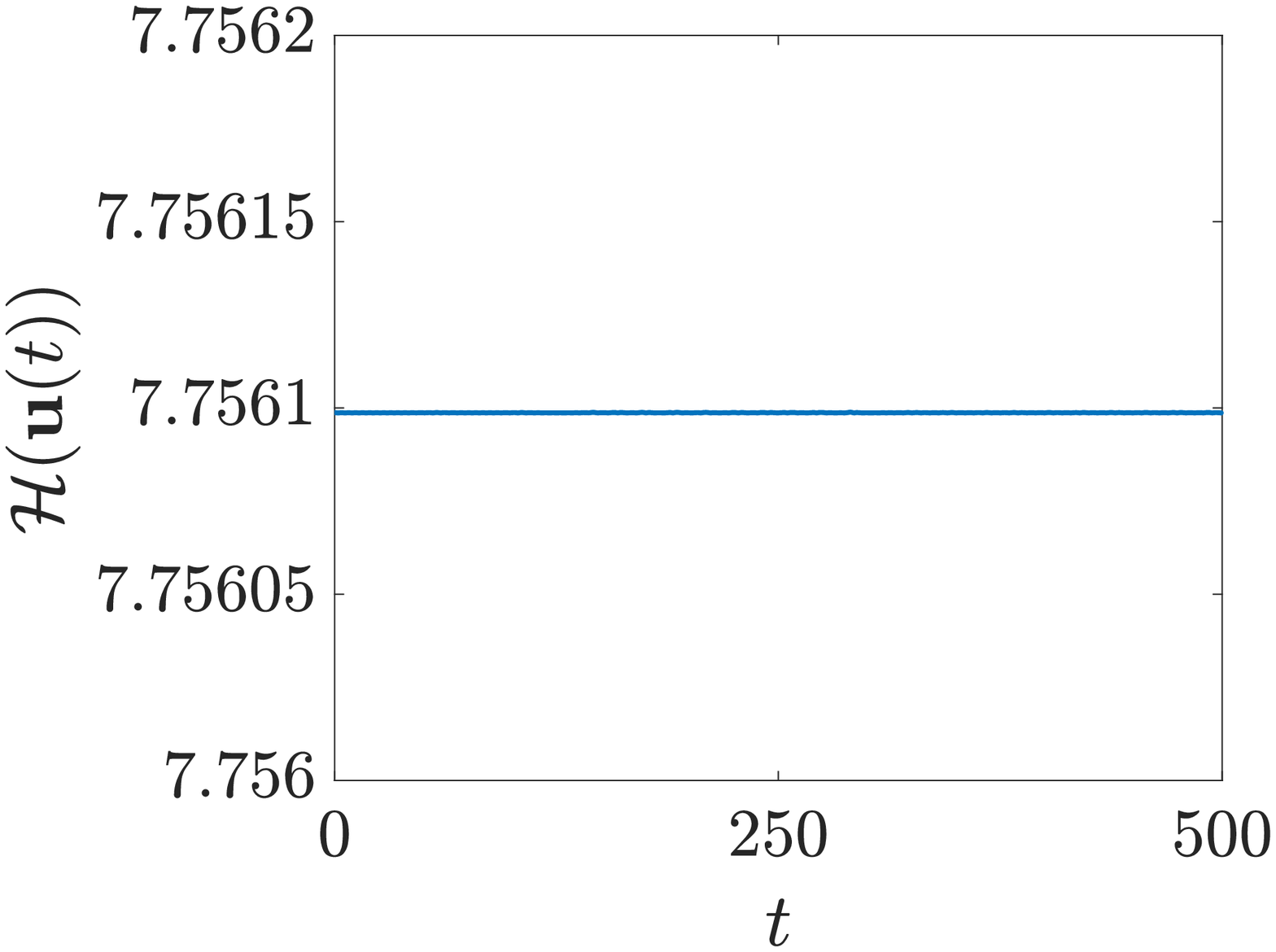}
		\label{fig:timestepSGpplongb}
	\end{subfigure}
	\begin{subfigure}{0.45\linewidth}
		\caption{}
		\includegraphics[width=7.5cm]{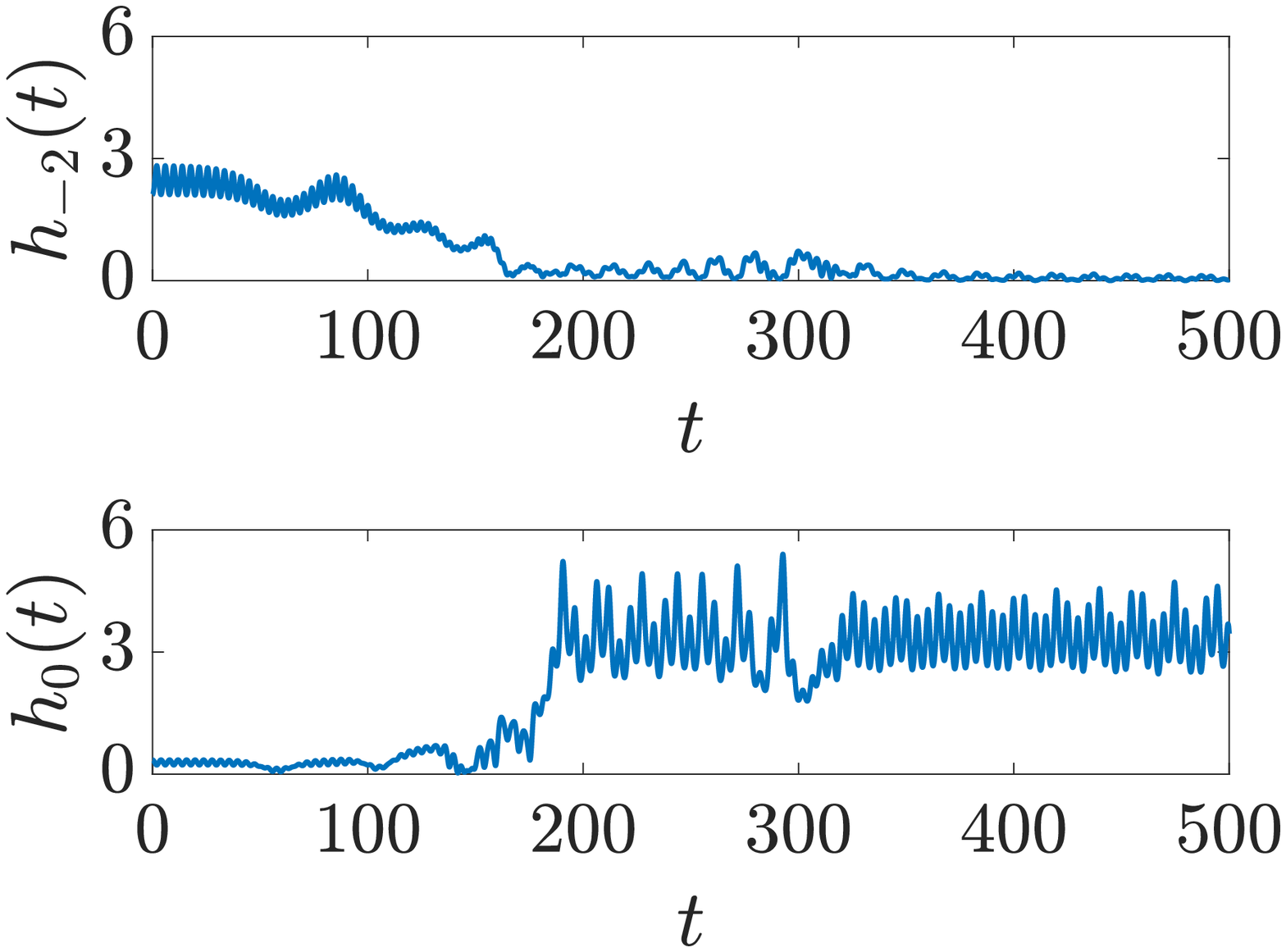}
		\label{fig:timestepSGpplongc}
	\end{subfigure}
	\begin{subfigure}{0.45\linewidth}
		\caption{}
		\includegraphics[width=7.5cm]{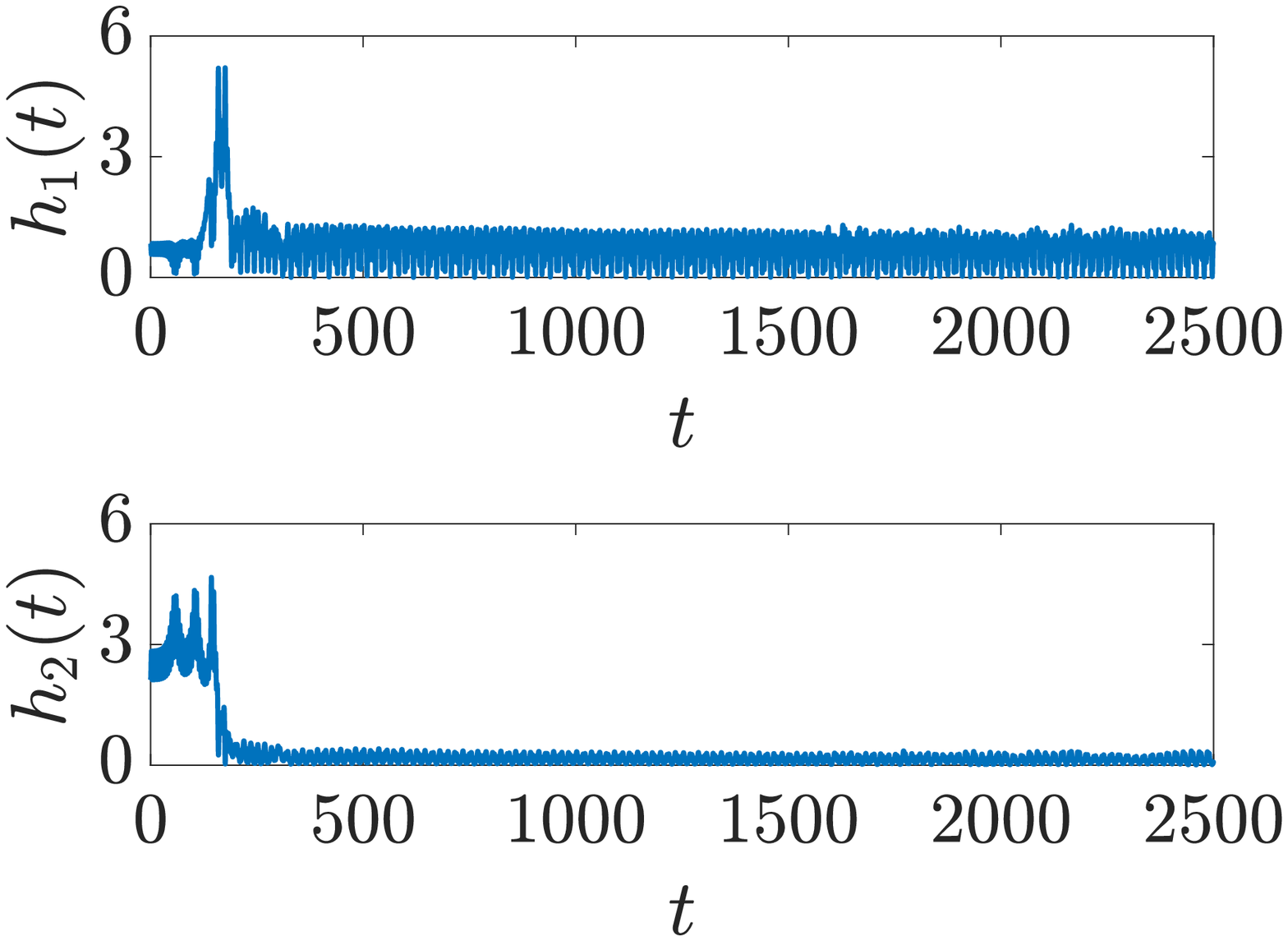}
		\label{fig:timestepSGpplongd}
	\end{subfigure}
	\end{center}
	\caption{(a) Colormap showing the evolution in $t$ of $|u_n|$ for discrete sine-Gordon. (b) Total energy $\calH(\uvec(t))$ vs. $t$ ($\calH$ is given by \cref{eq:H}). (c) and (d) Energy density vs. time for central sites of breather. Labels of sites correspond to those in (a). Separation distance $N_1 = 4$
	, coupling parameter $d=0.25$, perturbation parameter $\delta = 0.01$, lattice size parameter $L=150$.}
	\label{fig:timestepSGpplong}
\end{figure}

\subsection{Hard \texorpdfstring{$\phi^4$}{phi-4} potential}

The results for the soft $\phi^4$ potential are similar to those for sine-Gordon (which is expected, since the first two terms in the Taylor expansion of the sine-Gordon potential are qualitatively similar to the soft $\phi^4$ potential), thus we will not show them here. We instead look at the hard $\phi^4$ potential. 
For all of the plots in the section, unless otherwise noted, we use a lattice size $L = 15$, and the energy at the AC limit is $E = 1.75$, which corresponds to a period of $T = 4.1093$. 
\cref{fig:phi4sol} shows the initial conditions for the primary breather, the in-phase inter-site centered breather, and the out-of-phase inter-site centered breather, together with their Floquet spectra. In contrast to the sine-Gordon potential, the out-of-phase inter-site centered breather is unstable, while the in-phase inter-site centered breather is spectrally stable for sufficiently small $d$ (see also Figures 4 and 6 in \cite{cuevas-maraver2016}). These agree with the results of \cites{Archilla2003,Koukouloyannis2009}. As the distance between breathers $N_1$ is increased, the in-phase double breather is spectrally stable for $N_1$ odd and spectrally unstable for $N_1$ even, and the reverse is true for the out-of-phase double breather (see \cref{fig:phi4hardfloqplot}). We note that the solvability condition $K > 0$ for the hard $\phi^4$ potential; however, for the quantity $b_1$ in \cref{eq:inteigsdouble}, $b_1 > 0$ for $N_1$ even, and $b_1 < 0$ for $N_1$ odd. This agrees with the stability predictions following equation \cref{eq:inteigsdouble}, and explains the alternating eigenvalue pattern seen as the distance $N_1$ is increased. A summary of the Floquet spectral pattern for both potentials is given in \cref{table:spec}.

\begin{figure}
	\begin{center}
	\begin{subfigure}{0.3\linewidth}
		\caption{}
		\includegraphics[width=5.5cm]{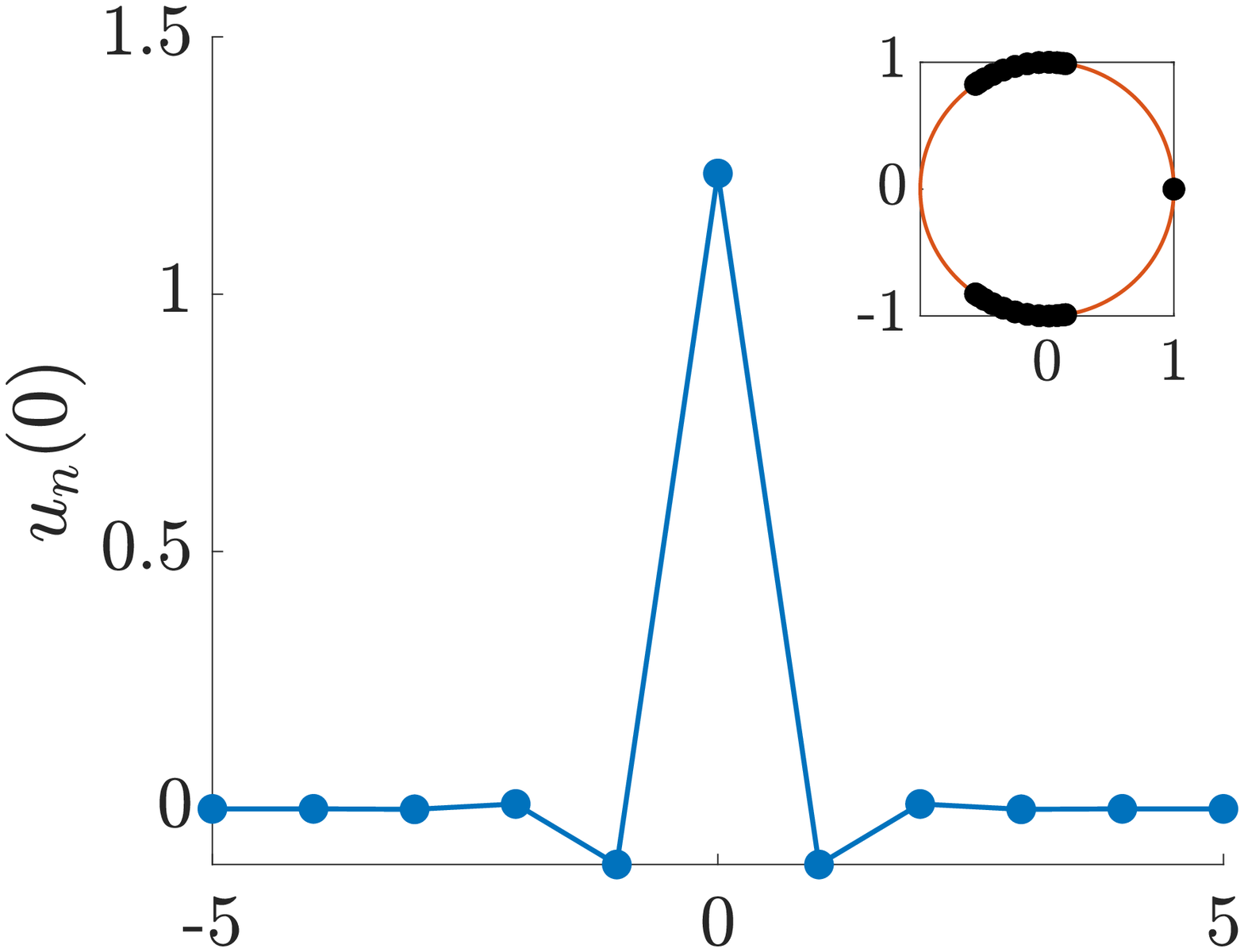} \hspace{-0.5cm}
		\label{fig:phi4sola} 
	\end{subfigure}
	\begin{subfigure}{0.3\linewidth}
		\caption{}
		\includegraphics[width=5.5cm]{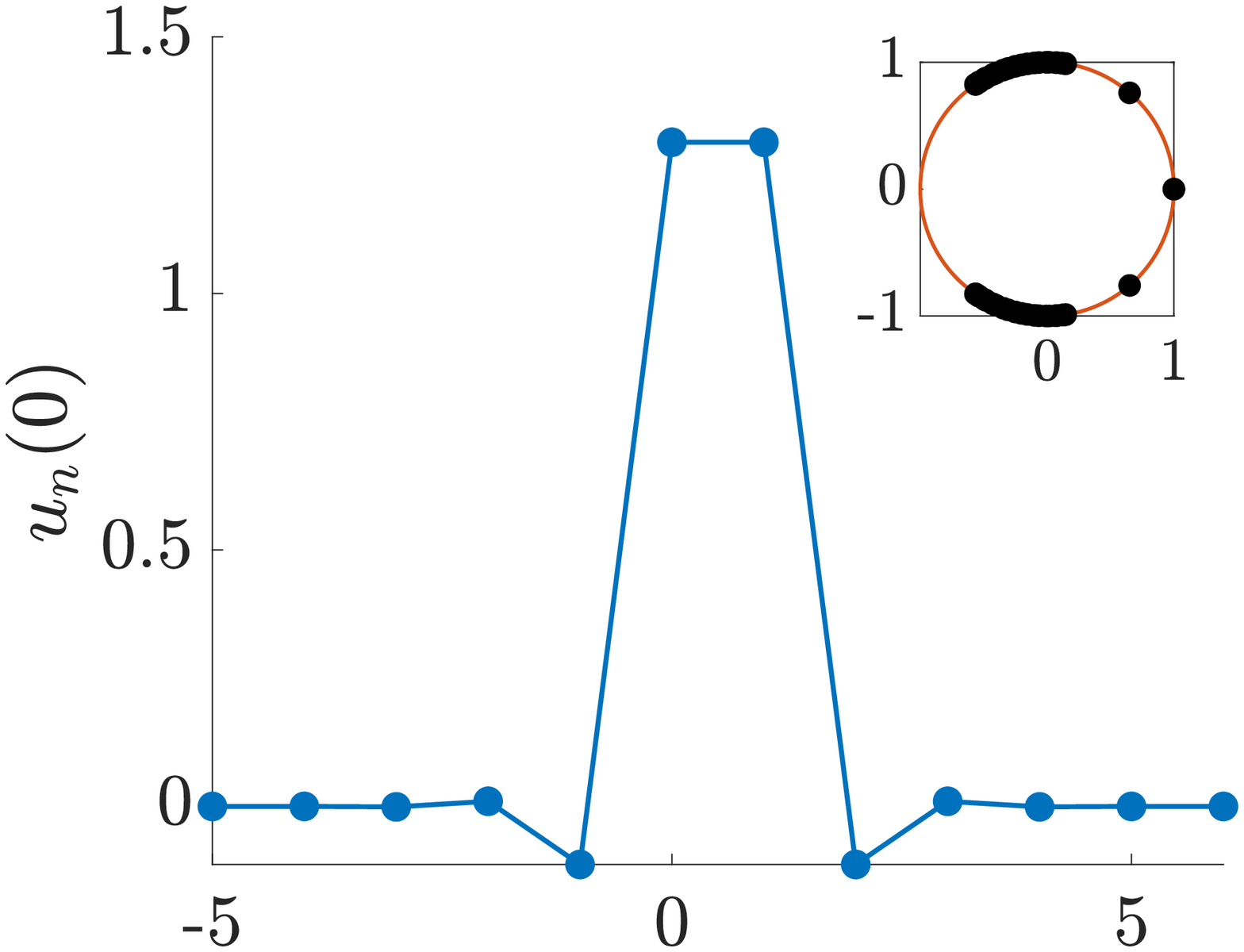} \hspace{-0.5cm}
		\label{fig:phi4solb} 
	\end{subfigure}
	\begin{subfigure}{0.3\linewidth}
		\caption{}
		\includegraphics[width=5.5cm]{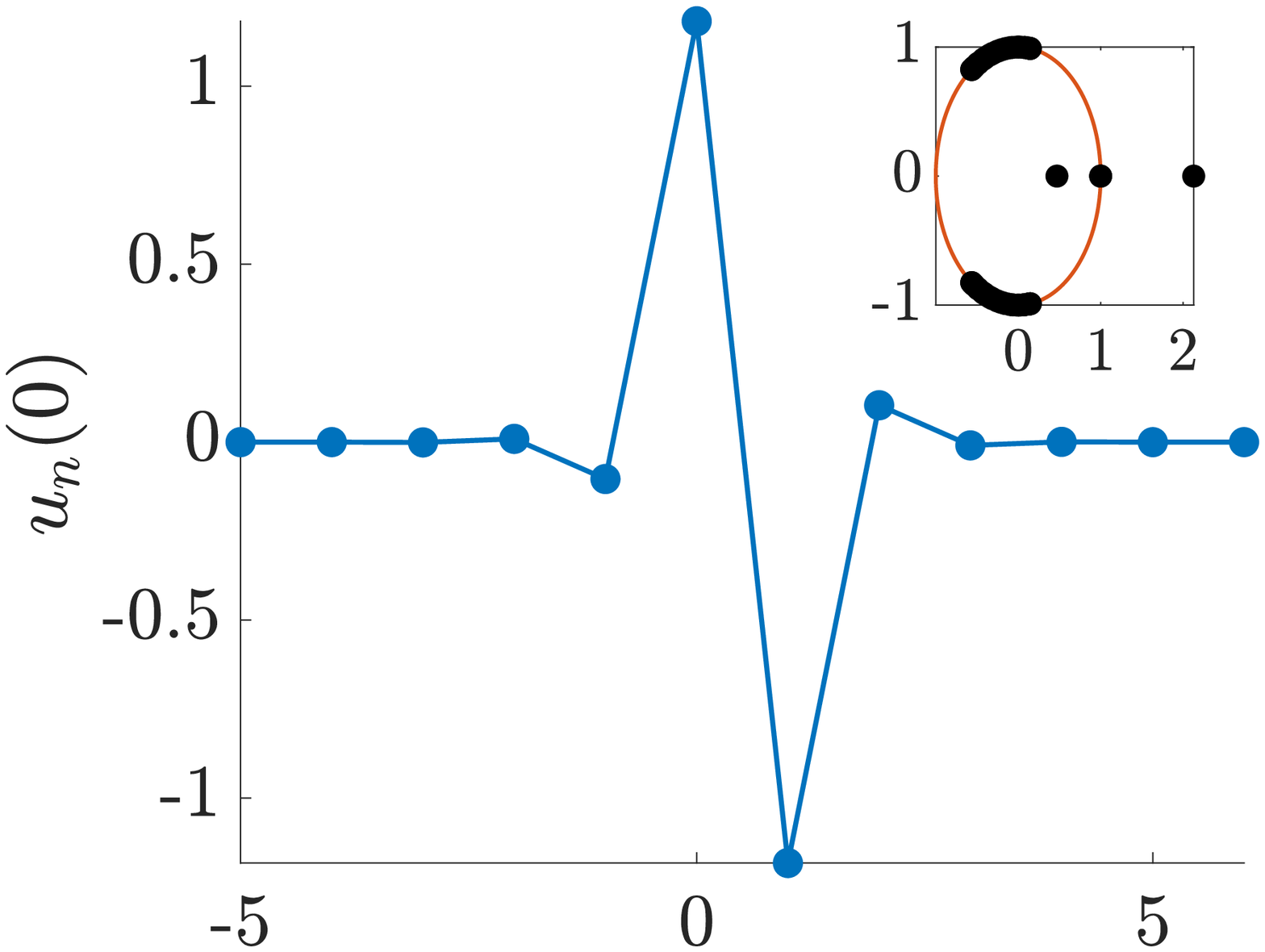} 
		\label{fig:phi4solc} 
	\end{subfigure}
	\end{center}
	\caption{Initial condition $u_n(0)$ and Floquet spectrum (inset) for primary breather (a), in-phase inter-site centered breather (b), and out-of-phase inter-site centered breather (c) for hard $\phi^4$ potential with coupling parameter $d=0.1$. }
	\label{fig:phi4sol}
\end{figure}

\begin{figure}
	\includegraphics[width=15cm]{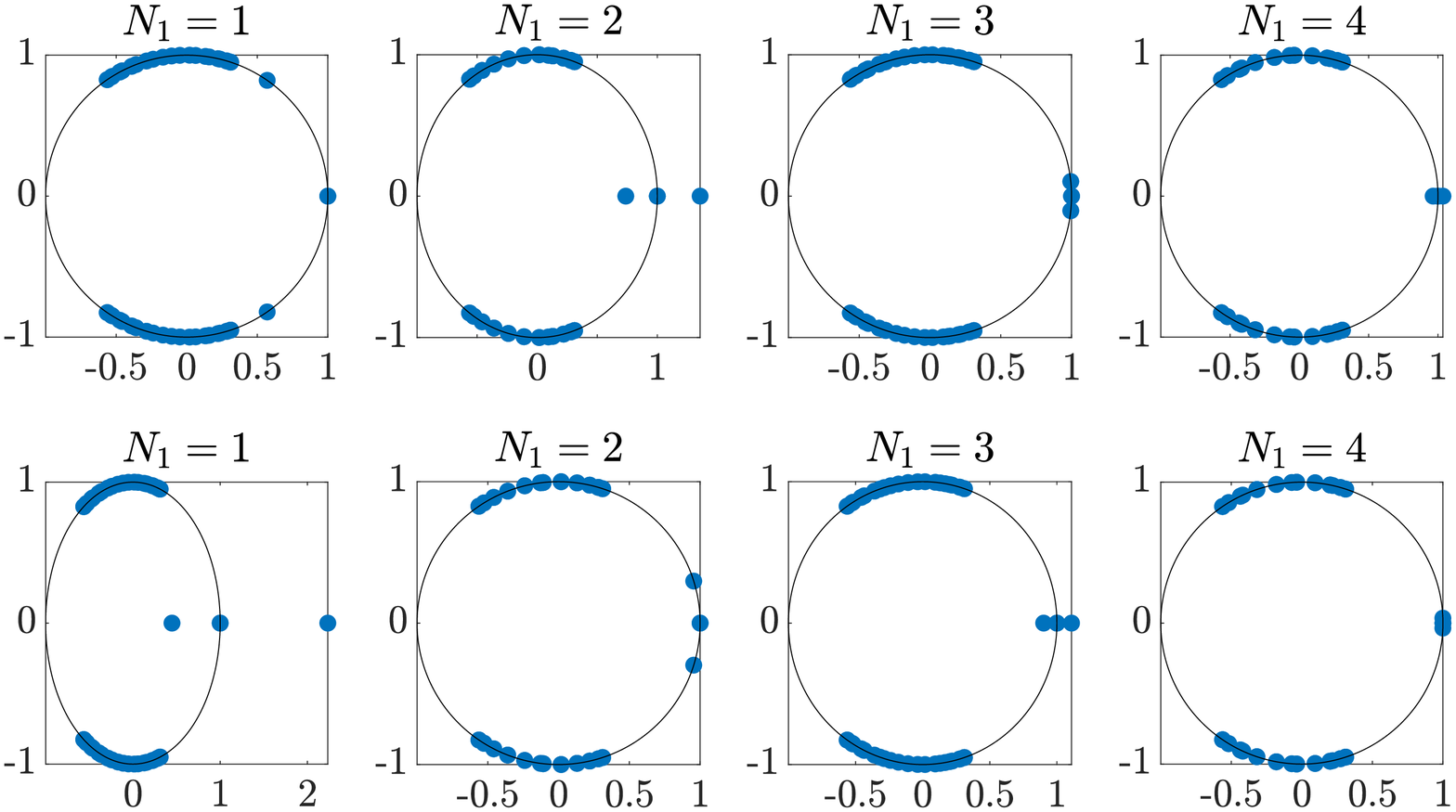}
	\caption{Floquet spectra for $N_1 = 1, 2, 3, 4$ for in-phase double breather (top) and out-of-phase double breather (bottom) for hard $\phi^4$ potential with $d = 0.125$.}
	\label{fig:phi4hardfloqplot}
\end{figure}

\cref{fig:bifdiagphi4} shows the bifurcation diagram for out-of-phase and in-phase double breathers for $N_1 = 6$. As noted above, the out-of-phase double breather is spectrally stable, while the in-phase double breather is unstable. For $N_1$ odd, this pattern is reversed (not shown). The upper branches are double breathers comprising two out-of-phase inter-site centered breathers, where we recall that this is the unstable configuration for the inter-site centered breather. The middle branch is an asymmetric double breather, comprising one single-site breather and one out-of-phase inter-site centered breather. \cref{fig:phi4eigerror} shows the relative error in the computation of the Floquet interaction eigenmodes on the lower branches for both the in-phase and out-of-phase double breathers for both even and odd $N_1$, using the formula \cref{eq:inteigsdouble}. As for the sine-Gordon equation, the relative error increases with $d$. We note that the relative error is less than $10^{-2}$ up to approximately $d = 0.25$, which is close to the turning points of the bifurcation diagrams. 

\begin{figure}
	\hbox{
	\hspace{-2cm}
	\includegraphics[width=20cm]{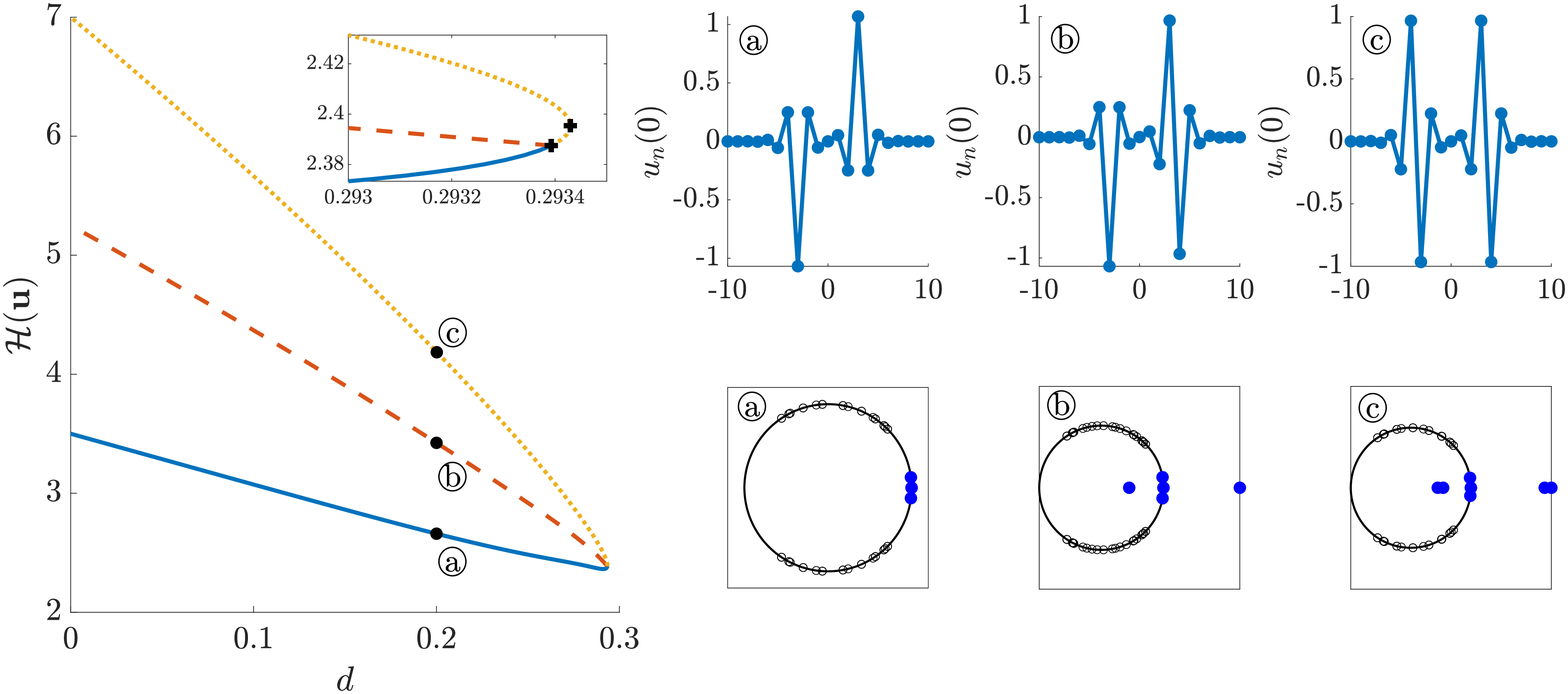}
	}
	\hbox{
	\hspace{-2cm}
	\includegraphics[width=20cm]{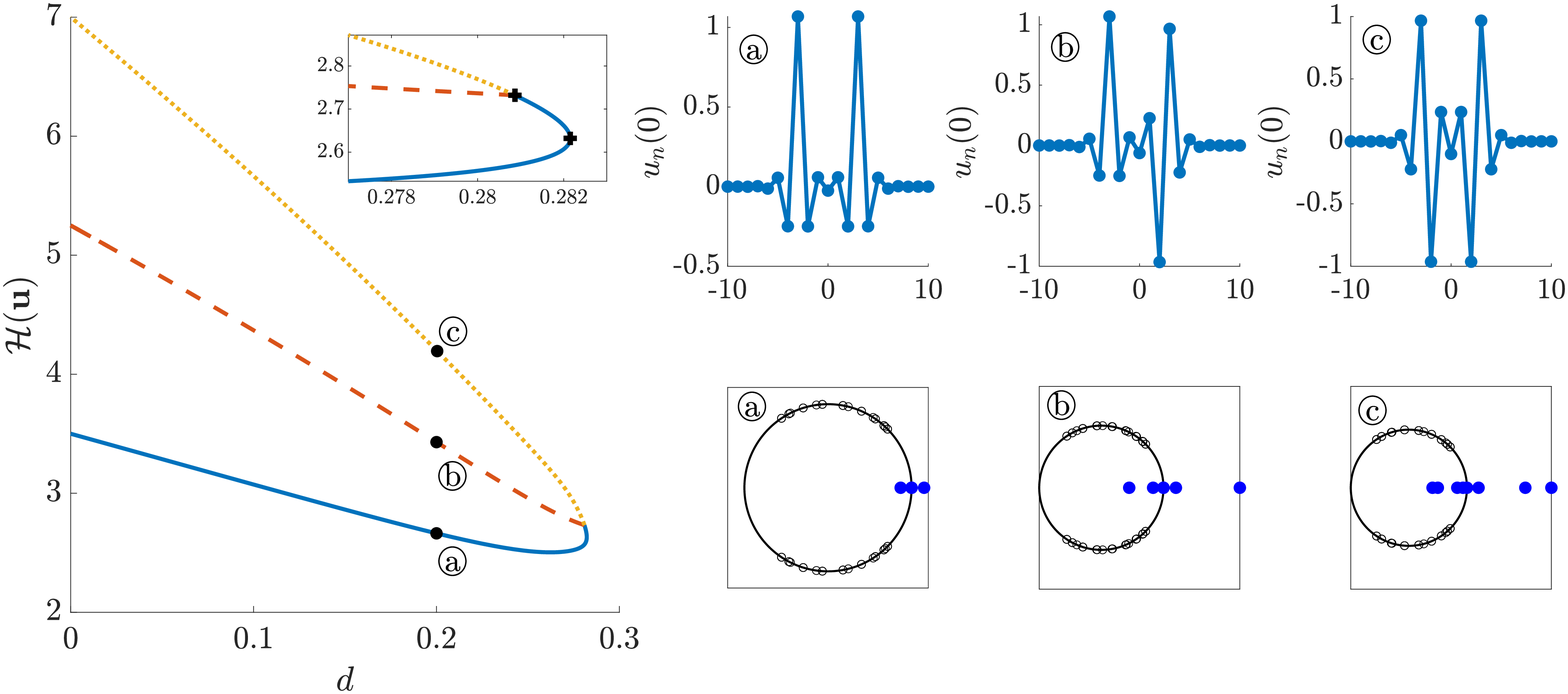}
	}
	\caption{Bifurcation diagram plotting energy $\mathcal{H}(\uvec)$ vs. $d$ for hard $\phi^4$ potential for out-of-phase (top) and in-phase (bottom) double breather with $N_1 = 6$. Solutions and Floquet spectra on right correspond to labeled points on left.}
	\label{fig:bifdiagphi4}
\end{figure}

\begin{figure}
	\begin{center}
	\begin{subfigure}{0.45\linewidth}
		\caption{}
		\includegraphics[width=7.5cm]{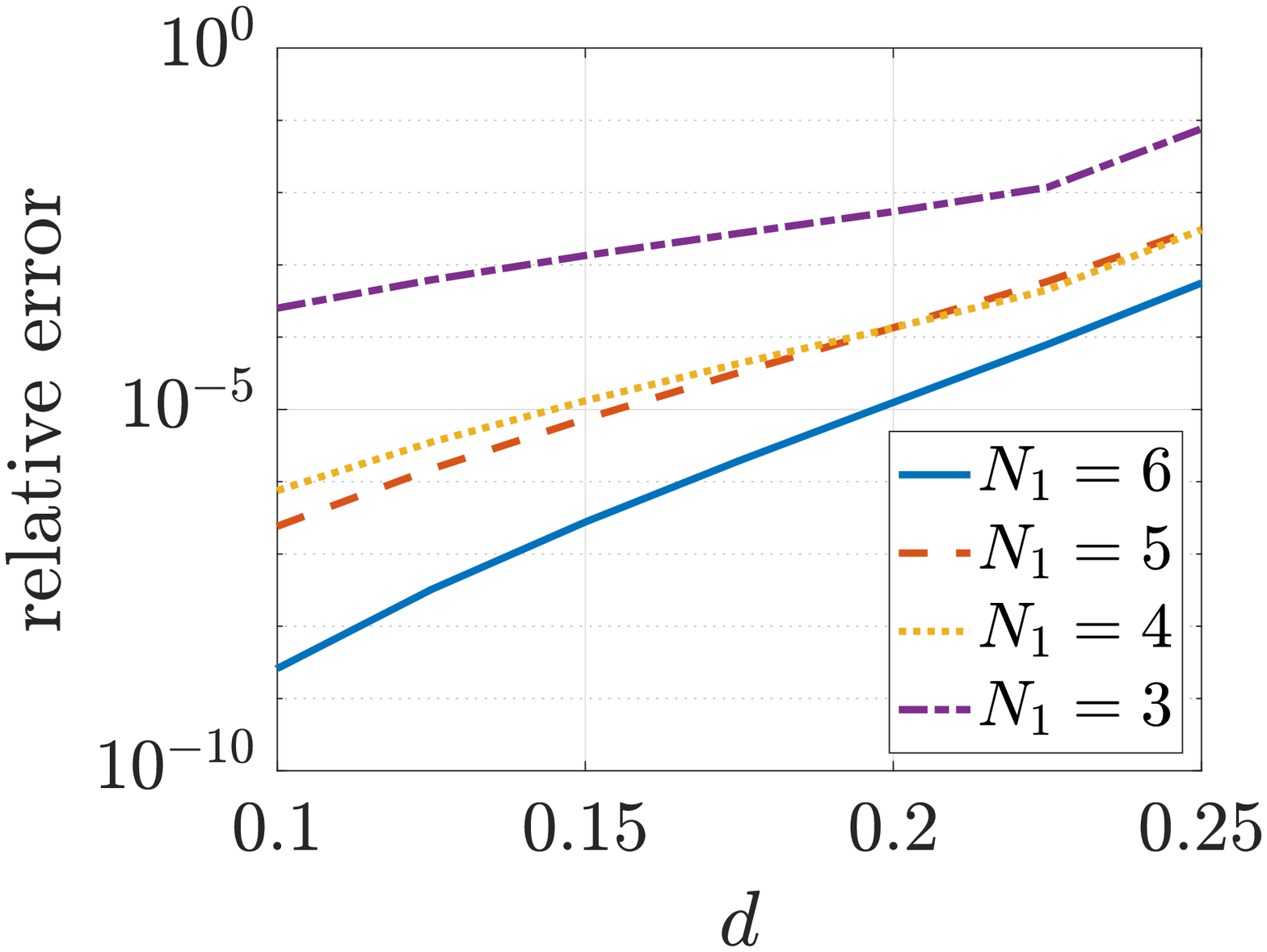} 
		\label{fig:phi4eigerrora} 
	\end{subfigure}
	\begin{subfigure}{0.45\linewidth}
		\caption{}
		\includegraphics[width=7.5cm]{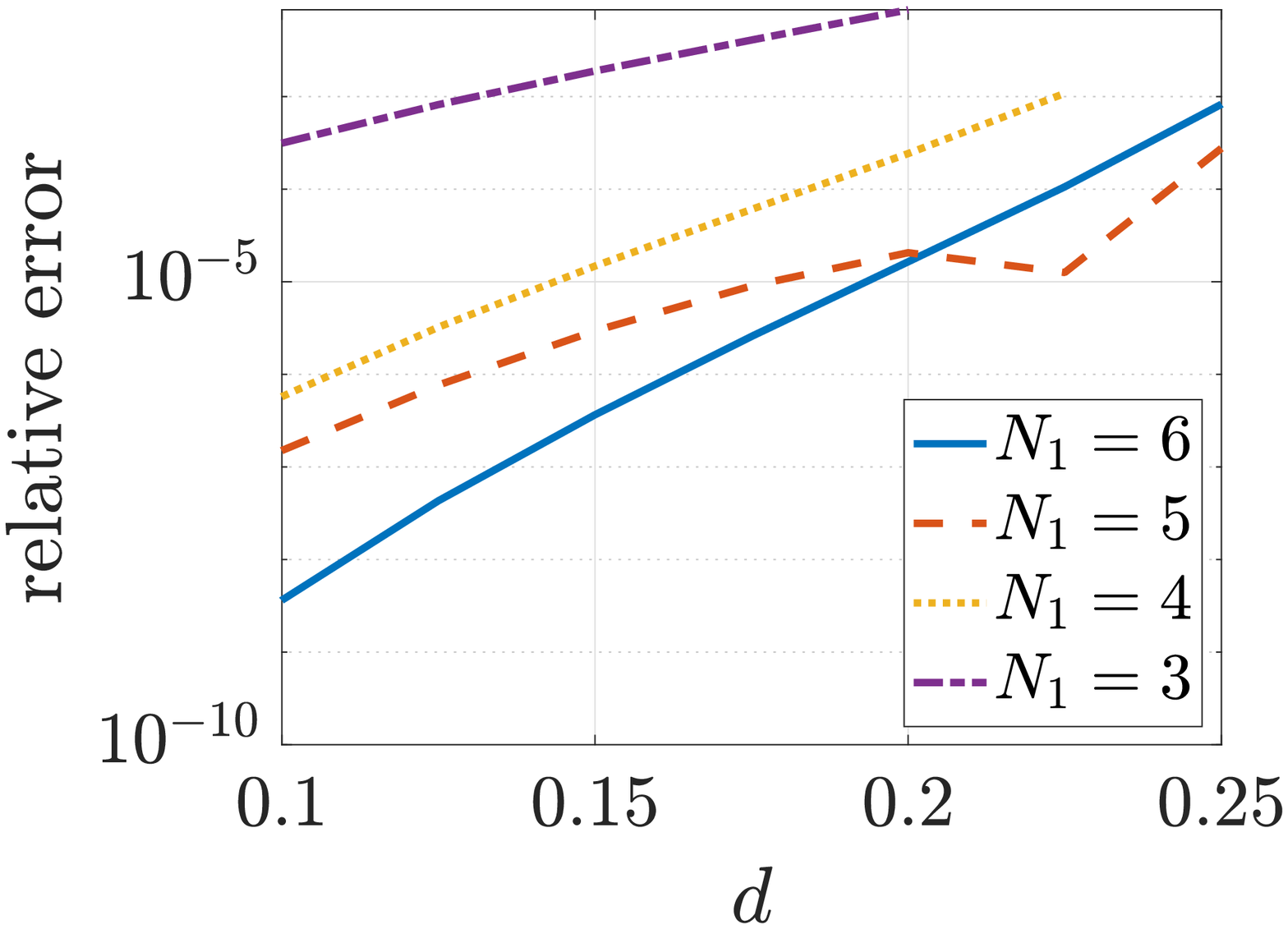} 
		\label{fig:phi4eigerrorb} 
	\end{subfigure}
	\end{center}
	\caption{Semilog plot of relative error of interaction eigenmode computation vs. $d$ for in-phase (a) and out-of-phase (b) double breathers for hard $\phi^4$ potential with $N_1 = 3, 4, 5, 6$.}
	\label{fig:phi4eigerror}
\end{figure}

\begin{table}
\begin{tabular}{llccll}\toprule
Potential & $N_1$ & sign of $K$ & sign of $b_1$ & in-phase & out-of-phase \\ \midrule
sine-Gordon (soft) & even & $-$ & $-$ & unstable & stable \\ 
                   & odd  & $-$ & $-$ & unstable & stable \\
$\phi^4$ (hard)    & even & $+$ & $+$ & unstable & stable \\
                   & odd  & $+$ & $-$ & stable & unstable \\ \bottomrule
\end{tabular}
\caption{Summary of Floquet interaction eigenmode pattern for double breathers for soft sine-Gordon potential and hard $\phi^4$ potential.}
\label{table:spec}
\end{table}

\section{Conclusions and future directions}\label{sec:conc}

In this paper, we studied multi-breather solutions to the discrete Klein-Gordon equation,
with an emphasis on a theoretical analysis of multi-breathers with well-separated excited sites, and their existence, spectral stability and dynamical properties. Specifically, we looked at an approximation of the system on a finite-dimensional Hilbert space, and used Lin's method to construct multi-breather solutions to this system. Furthermore, we used Lin's method again to reduce the eigenvalue problem to a low-dimensional, matrix equation, which can then be solved. This can be done as long as the distances between copies of the primary breather are sufficiently large. The results from this approximation are in very good agreement with direct numerical computations of the Floquet spectrum, both for soft and hard potentials. The key determining factor for the Floquet spectral pattern is the phase differences between adjacent copies of the primary, single-site breather. In addition, our results
showcased the crucial difference between the soft sine-Gordon equation and the hard $\phi^4$ equation; in the latter case, the Floquet spectral pattern depends in addition on whether the distances in lattice points between consecutive copies of the primary breather are even or odd, while in the former case, it does 
not (see also, e.g., the analogy  with the interaction of dark
solitons in the defocusing DNLS model of~\cite{Pelinovsky_2008}). 

Avenues of further research include exploring breathers and multi-breathers either in Klein-Gordon lattices with asymmetric potentials, such as the Morse potential, or in other models 
involving beyond-nearest-neighbor interactions, discussed, e.g., in~\cite{PENATI201992} and references therein. One class of models which would be interesting to study concerns equations in which the potential involves off-site terms, such as the FPUT lattice. A natural extension of this would be to look at higher dimensional lattices, both for on-site potentials such as in a higher dimensional Klein-Gordon model, and more complex potentials, such as higher dimensional FPUT lattices. In two dimensions, there are many possible regular lattice models, including square, triangular, and honeycomb, and these different geometries may exhibit qualitatively different behavior. Another possibility would be to look at more complex solutions. One class of solutions which remains unexplored is asymmetric multi-breathers, solutions comprising breathers which have different, but commensurate, fundamental periods. \cref{fig:brka} shows an example of an asymmetric double breather solution to the discrete sine-Gordon equation; the left breather has a fundamental period of 8, and the right breather has a fundamental period of 16, so the overall breather has a period of 16. While these solutions can be constructed numerically by parameter continuation from the AC limit, preliminary numerical experiments suggest that they only exist for small values of the coupling parameter $d$, and that they are spectrally unstable.
In addition, preliminary numerical experiments suggest that there are solutions comprising both breathers and kinks. See \cref{fig:brkb} for an example of a solution to the discrete sine-Gordon equation in which a breather is connected to a kink. Finally, although the finite-dimensional approximation used in the paper yields results which are in very good agreement with those found numerically, it may be still possible to remove this restriction and prove the results for the full system, although this would most likely involve an entirely different approach.

\begin{figure}
	\begin{center}
	\begin{subfigure}{0.45\linewidth}
		\caption{}
		\includegraphics[width=7.5cm]{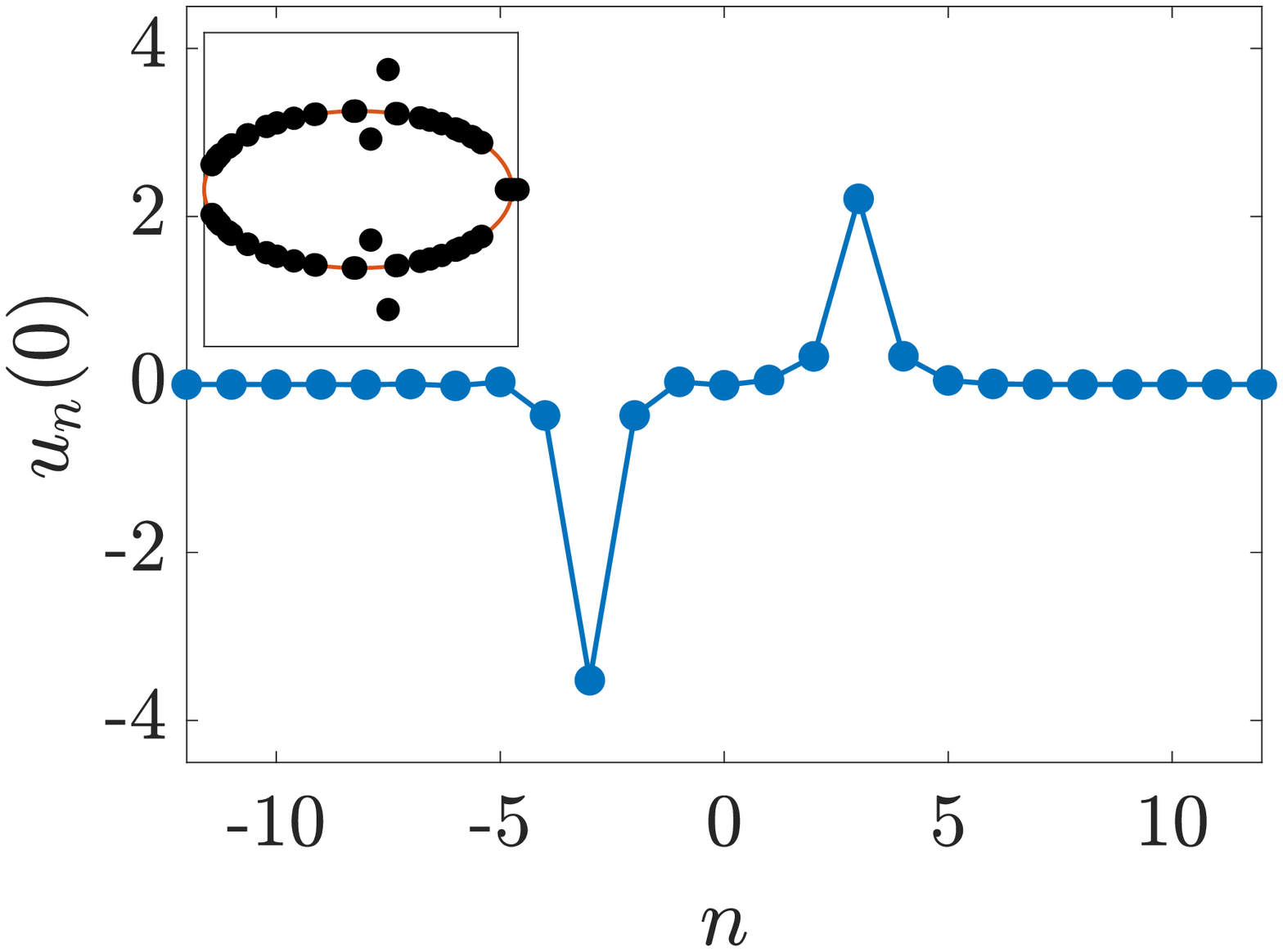} 
		\label{fig:brka} 
	\end{subfigure}
	\begin{subfigure}{0.45\linewidth}
		\caption{}
		\includegraphics[width=7.5cm]{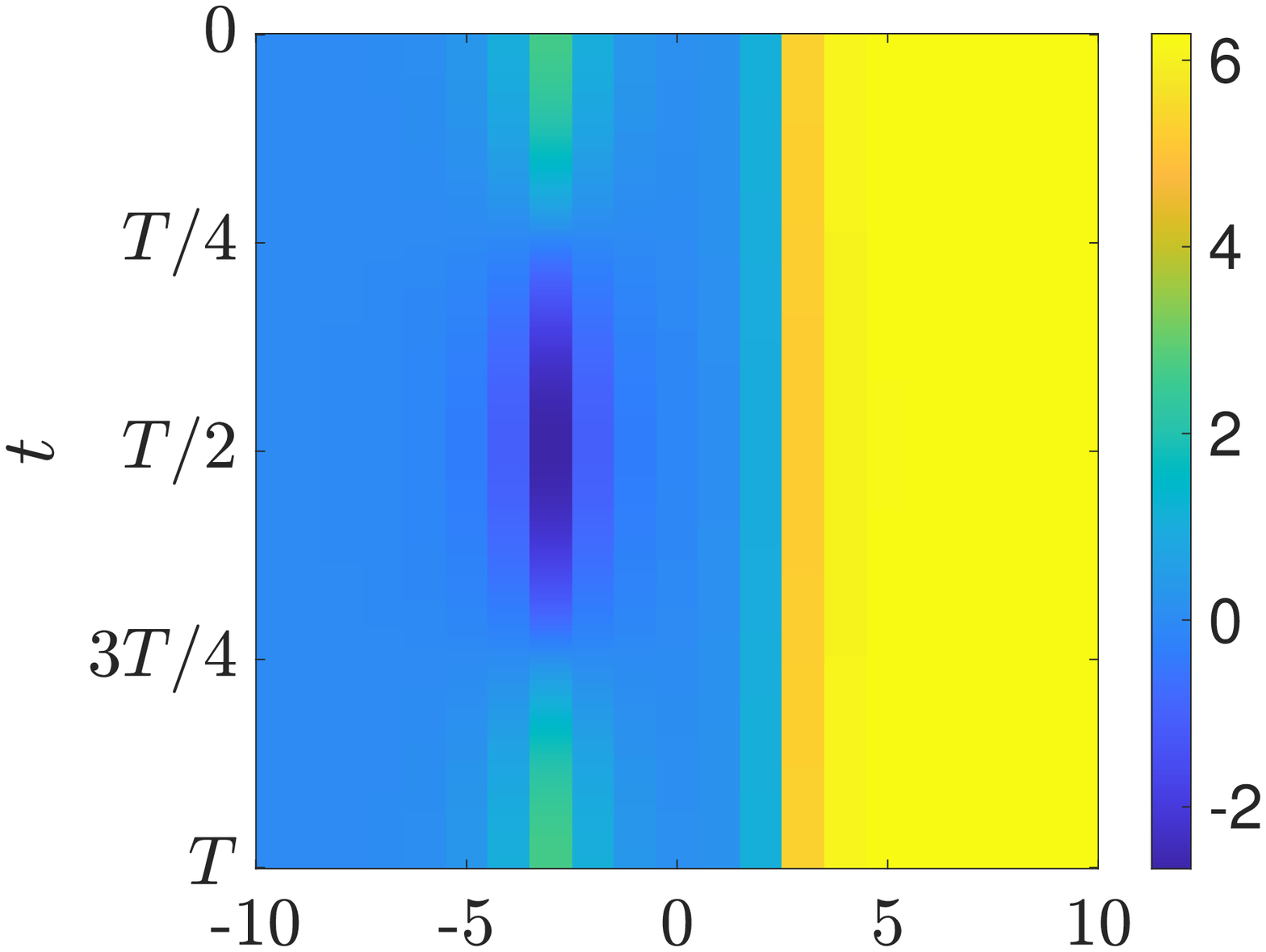} 
		\label{fig:brkb} 
	\end{subfigure}
	\end{center}
	\caption{(a) Initial condition for asymmetric double breather solution with Floquet spectrum in inset for discrete sine-Gordon, coupling parameter $d = 0.075$. (b) Colormap showing solution to discrete sine-Gordon equation comprising one breather (on left) and one kink (on right), coupling parameter $d = 0.25$. Time evolution is over one period $T$ of the breather.}
	\label{fig:brk}
\end{figure}

\vspace{0.5cm}

\paragraph{\textbf{Acknowledgments}}

This material is based upon work supported by the U.S. National Science Foundation under the RTG grant DMS-1840260 (R.P. and A.A.) and DMS-1809074 (P.G.K.). J.C.-M. acknowledges support from EU (FEDER program 2014-2020) through both Consejería de Econom\'{\i}a, Conocimiento, Empresas y Universidad de la Junta de Andaluc\'{\i}a (under the projects P18-RT-3480 and US-1380977), and MICINN and AEI (under the projects PID2019-110430GB-C21 and PID2020-112620GB-I00). R.P. would also like to thank Graham Cox and Bj\"orn Sandstede for their helpful comments and suggestions.

\appendix

\section{Proof of \texorpdfstring{\cref{th:spectrum}}{Theorem 2} }\label{app:specproof}

The proof is a straightforward adaptation of the proof of \cite{Parker2020}*{Theorem 2}. Let $Y(n) = (y(n), \tilde{y}(n)) = \omega \partial_\omega U(n)$. We take as an ansatz for the eigenfunction $W(n)$ the following piecewise linear combination
\begin{equation}\label{eq:Wansatz}
W_i^\pm(n) = c_i ( \dot{U}_i^\pm(n) + \lambda Y_i^\pm(n) ) + \tilde{W}_i^\pm(n),
\end{equation}
where $c_i \in \C$, and $W_i^\pm(n)$ is defined on the same interval as $U_i^\pm(n)$ in \cref{eq:Upiecewise}. Substituting \cref{eq:Wansatz} into \cref{eq:dynEVPM}, using the relations
\begin{equation}
\begin{aligned}
\dot{U}_i^\pm(n+1) &= DF_M(U(n))\dot{U}_i^\pm(n) \\ 
\qquad Y_i^\pm(n+1) &= DF_M(U(n)) Y_i^\pm(n) + 2 \partial_t B \dot{U}_i^\pm(n),
\end{aligned}
\end{equation}
and simplifying, we obtain the equation
\begin{equation}\label{eq:EVPpiecewise}
\begin{aligned}
\tilde{W}_i^\pm(n+1) &= DF_M(\sigma_i Q_M(n)) \tilde{W}_i^\pm(n) + \lambda^2 c_i B[ 2 \partial_t Y_i^\pm(n) + \dot{U}_i^\pm(n)] \\
&\quad+ [G_i^\pm(n) + (2 \lambda \partial_t + \lambda^2) B] \tilde{W}_i^\pm(n) + c_i \lambda^3 B Y_i^\pm(n),
\end{aligned}
\end{equation}
where
\begin{equation}\label{eq:Gipm}
G_i^\pm(n) = DF_M(U_i^\pm(n)) - DF_M(\sigma_i Q_M(n)).
\end{equation}
In addition to solving \cref{eq:EVPpiecewise}, the eigenfunction $W_i^\pm$ must satisfy matching conditions at $n = \pm N_i$ and $n = 0$. As in \cites{Parker2020,Parker2021,Sandstede1998}, this will in general not be possible. Instead, we solve the system 
\begin{equation}\label{eq:EVPsystem}
\begin{aligned}
\tilde{W}_i^\pm(n+1) &= DF_M(\sigma_i Q_M(n)) \tilde{W}_i^\pm(n) + \lambda^2 c_i B[ 2 \partial_t Y_i^\pm(n) + \dot{U}_i^\pm(n)] \\
&\quad+ [G_i^\pm(n) + (2 \lambda \partial_t + \lambda^2) B] \tilde{W}_i^\pm(n) + c_i \lambda^3 B Y_i^\pm(n) \\
\tilde{W}_i^+(N_i^+) &- \tilde{W}_{i+1}^-(-N_i^-) = C_i c \\
\tilde{W}_i^+(0) &- \tilde{W}_i^-(0) \in \C Z_M(n),
\end{aligned}
\end{equation}
where
\begin{equation}
\begin{aligned}
C_i c &= [ \dot{U}_{i+1}^-(-N_i^-) + \lambda Y_{i+1}^-(-N_i^-) ]c_{i+1} 
- [ \dot{U}_i^+(N_i^+) + \lambda Y_i^+(N_i^+) ] c_i.
\end{aligned}
\end{equation}
A solution to \cref{eq:EVPsystem} is an eigenfunction if and only if the $m$ jump conditions
\begin{equation}\label{eq:jump1}
\begin{aligned}
\xi_i &= \langle \sigma_i Z_M(0), \tilde{W}_i^+(0) - \tilde{W}_i^-(0) \rangle = 0
&& \qquad i = 1, \dots, m
\end{aligned}
\end{equation}
are satisfied.

We proceed as in the proof of \cite{Parker2020}*{Theorem 2}. Since $DF_M(0)$ is hyperbolic, and 
\[
\| Q_M(n) \|_{L^2_\per([0,T])} \leq C r_M^{-|n|}
\]
by the stable manifold theorem, we can adapt the results of \cite{Palmer1988} to decompose the evolution operator $\Phi(m,n)$ of the variational equation \cref{eq:vareq} in exponential dichotomies on $\Z^\pm$. We then rewrite \cref{eq:EVPpiecewise} as a fixed point problem using the discrete variation of constants formula (see, for example, \cite{Parker2020}*{Lemma 3}), and project onto the stable and unstable subspaces of the exponential dichotomy. From there, we follow the steps in \cite{Parker2020}, using the estimates \cref{eq:Uestimates}, to obtain a unique solution to \cref{eq:EVPsystem}. The jump conditions \cref{eq:jump1} become
\begin{equation}\label{eq:jump2}
\begin{aligned}
\xi_i &= \sigma_i \sigma_{i+1} \langle Z_M(N_i^+), \dot{Q}_M (-N_i^-) \rangle (c_{i+1} - c_i) \\
&\qquad +\sigma_i\sigma_{i-1} \langle  Z_M(-N_{i-1}^-), \dot{Q}_M(N_{i-1}^+) \rangle (c_i - c_{i-1}) \\
&\qquad -\frac{1}{d} \lambda^2 \sum_{n=-\infty}^\infty \left\langle Z_M(n+1), B[ 2 \partial_t Y(n) + \dot{Q}_M(n)]\right\rangle + R(\lambda)_i(c),
\end{aligned}
\end{equation}
where the remainder term has the uniform bound
\[
\| R(\lambda)(c)\|_{X_M^2} \leq C \left( r_M^{-N} + |\lambda|\right)^3.
\]
Evaluating the inner products, we obtain for the jumps
\begin{equation}\label{eq:jump3}
\begin{aligned}
\xi_i &= a_i (c_{i+1} - c_i) + \tilde{a}_{i-1} (c_i - c_{i-1}) + \frac{1}{d} \lambda^2 K + R(\lambda)_i(c),
\end{aligned}
\end{equation}
where $a_i$ and $\tilde{a}_{i-1}$ are defined by \cref{eq:ai}, and $K$ is defined by \cref{eq:M}. Equation \cref{Elambda} follows by writing the jump conditions \cref{eq:jump3} in matrix form as
\begin{equation}\label{eq:matrixform}
\left( A + \frac{1}{d} K \lambda^2 I + R(\lambda) \right) c = 0
\end{equation}
on $\R^m$, where $c = (c_1, \dots, c_m)^T$, and the matrix $A$ is defined by \cref{eq:matrixA}. Equation \cref{eq:matrixform} has a nontrivial solution if and only if its determinant is 0. 

\bibliographystyle{amsplain}
\bibliography{DKG.bib}

\end{document}